\documentclass[aps,prl,twocolumn,superscriptaddress,nofootinbib]{revtex4-1}
\pdfoutput=1

\usepackage{amsmath}
\usepackage{amsthm}
\usepackage{amssymb}
\usepackage{mathrsfs}
\usepackage{bm, dsfont}
\usepackage[usenames,dvipsnames]{color}
\usepackage{colortbl}
\usepackage[table]{xcolor}
\usepackage{enumitem}

\usepackage{graphicx}
\usepackage{natbib}
\usepackage[colorlinks=true,linkcolor=blue,citecolor=blue,urlcolor=blue]{hyperref}

\usepackage{hypcap}
\usepackage{verbatim, float}
\usepackage{psfrag}
\usepackage[normalem]{ulem}
\usepackage{physics}

\newcommand{\be}{\begin{equation}}
\newcommand{\ee}{\end{equation}}

\newtheorem*{observation}{Observation}

\newtheorem{theorem}{Theorem}

\newtheorem{definition}[theorem]{Definition}
\newtheorem{corollary}[theorem]{Corollary}
\newtheorem{lemma}[theorem]{Lemma}
\newtheorem{proposition}[theorem]{Proposition}
\newtheorem{example}[theorem]{Example}

\theoremstyle{remark}

\newcommand{\hA}{\mathcal{A}}
\newcommand{\hB}{\mathcal{B}}

\newcommand{\hD}{\mathcal{D}}
\newcommand{\hE}{\mathcal{E}}
\newcommand{\hF}{\mathcal{F}}

\newcommand{\hI}{\mathcal{I}} 
\newcommand{\hJ}{\mathcal{J}}
\newcommand{\hK}{\mathcal{K}}
\newcommand{\hL}{\mathcal{L}}
\newcommand{\hM}{\mathcal{M}}
\newcommand{\hN}{\mathcal{N}}
\newcommand{\hO}{\mathcal{O}} 

\newcommand{\hS}{\mathcal{S}}
\newcommand{\hT}{\mathcal{T}}


\newcommand{\N}{\mathbb N} 
\newcommand{\R}{\mathbb R} 

\newcommand{\Q}{\mathbb Q} 

\newcommand{\kb}[2]{|#1 \rangle\langle #2|}




\newcommand{\fii}{\varphi} 
\newcommand{\om}{\omega} 

\def\<{\langle} 
\def\>{\rangle}
\newcommand{\hi}{\mathcal{H}} 
\newcommand{\ki}{\mathcal{K}} 

\newcommand{\F}{\mathcal F} 

\newcommand{\lh}{\mathcal{L(H)}} 
\newcommand{\kh}{\mathcal{L(K)}} 
\renewcommand{\th}{\mathcal{T(H)}} 
\newcommand{\sh}{\mathcal{S(H)}} 
\newcommand{\lhs}{{\mathcal L}_s(\hi)}

%
\newcommand{\fix}[1]{\mathrm{Fix}\left( #1 \right)}

\begin{document}

\title{Generalised contextuality of continuous variable quantum theory can be revealed with a single projective measurement}

\author{Pauli Jokinen}
\affiliation{Department of Physics and Astronomy, Uppsala University, Box 516, 751 20 Uppsala, Sweden}
\affiliation{Nordita, KTH Royal Institute of Technology and Stockholm University, 10691 Stockholm, Sweden}
\email{pauli.jokinen@physics.uu.se}
\author{Mirjam Weilenmann}
\affiliation{Inria, T\'el\'ecom Paris - LTCI, Institut Polytechnique de Paris, 91120 Palaiseau, France}
\affiliation{Department of Applied Physics, University of Geneva, Switzerland}
\author{Martin Plávala}
\affiliation{Institut f\"ur Theoretische Physik, Leibniz Universit\"at Hannover, 30167 Hannover, Germany}
\author{Juha-Pekka Pellonpää}
\affiliation{Department of Physics and Astronomy, University of Turku, FI-20014 Turun yliopisto, Finland}
\author{Jukka Kiukas}
\affiliation{Department of Mathematics, Aberystwyth University, Aberystwyth SY23 3BZ, United Kingdom}
\author{Roope Uola}
\affiliation{Department of Physics and Astronomy, Uppsala University, Box 516, 751 20 Uppsala, Sweden}
\affiliation{Nordita, KTH Royal Institute of Technology and Stockholm University, 10691 Stockholm, Sweden}

\begin{abstract}
  Generalized contextuality is a possible indicator of non-classical behaviour in quantum information theory. In finite-dimensional systems, this is justified by the fact that noncontextual theories can be embedded into some simplex, i.e. into a classical theory. We show that a direct application of the standard definition of generalized contextuality to continuous variable systems does not envelope the statistics of some basic measurements, such as the position observable. In other words, we construct families of fully classical, i.e. commuting, measurements that nevertheless can be used to show contextuality of quantum theory. To overcome the apparent disagreement between the two notions of classicality, that is commutativity and noncontextuality, we propose a modified definition of generalised contextuality for continuous-variable systems. The modified definition is based on a physically-motivated approximation procedure, that uses only finite sets of measurement effects. We prove that in the limiting case this definition corresponds exactly to an extension of noncontextual models that benefits from non-constructive response functions. In the process, we discuss the extension of a known connection between contextuality and no-broadcasting to the continuous-variable scenario, and prove structural results regarding fixed points of infinite-dimensional entanglement breaking channels.
\end{abstract}

\maketitle

\section{I. Introduction}

The ongoing second quantum revolution aims to provide various advances from cryptographic possibilities to advanced materials, and from quantum computational speed-ups to unprecedented metrological precisions. This rises a natural question on characterizing the quantum resources that fuel such development. For example, entanglement, incompatibility, and coherence are known to be necessary for a quantum advantage in various tasks \cite{piani09,quintino14,uola14,Napoli_16,Takagi_2019,carmeli19a,skrzypczyk19,uola19b,oszmaniec19,xu19, lami2021framework,haapasalo2021operational,aubrun2022entanglement,Guhne2023JMreview}, but one can find tasks where these properties are not sufficient \cite{werner89, hirsch18,selby21}. It is, hence, not to be expected that there would exist a universal non-classical entity, that could characterize the quantum advantage in full.

During the last two decades, however, the notion of generalised contextuality \cite{spekkens05}  has risen as a candidate for enveloping the non-classicality of a vast class of theoretical entities \cite{Spekkens_2008,Pusey_2014,Lostaglio_2018} and quantum information processing tasks \cite{SpekkensRAC,Schmid_2018,Lostaglio_2020}. The notion of generalised contextuality is arguably very flexible, in that it is applicable to both single and compound systems, unlike for example entanglement, and the so-called ontological models underlying the concept can often be adjusted to find not only necessary, but also sufficient conditions for various classical-to-quantum boundaries, such as incompatibility, entanglement, steering, and broadcasting \cite{tavakoli19,plavala2022incompatibility,wright2023invertible,plavala2024contextuality,jokinen2024nobroadcasting}. On top of the broad applicability and foundationally intriguing structure, the concept is even sometimes called the “gold standard" of classical explainability \cite{Schmid24addressing,Schmid_2025}. 
While research exploring the connections between contextuality and other quantum notions is thus abundant, extensions of these connections to continuous variable systems have so far been sparse \cite{Plastino2010,McKeown2011,Su2012,Asadian2015,LaversanneFinot2017,Barbosa2022,Booth2022,delgado2025frame}. Nevertheless, in order to put the role of generalised contextuality to the test, these are arguably inevitable.

In this work, we show that a direct application of generalised contextuality to continuous variable systems leads one to question the role of contextuality as a notion of classicality. Namely, we demonstrate that one can prove the contextuality of quantum theory from a single position measurement. As this measurement is commutative, we deem it to be classical: its measurement data can be presented in a classical function space. We propose a way around this discrepancy by reconsidering the probability-theoretical foundations of generalized contextuality in continuous variable systems. We introduce an approximate notion of noncontextuality and show that it is equivalent to the existence of a non-normal ontological model. Such model is supported by non-constructive response functions. With such models, we show that the data of a position measurement does not reveal the contextual nature of quantum theory.
In the continuous variable case, one thus has to either give up the natural formulation of noncontextuality in terms of probabilistic response functions, or, accept that the position measurement can confirm contextuality.

We further show that the position example is not the only one leading to such behaviour. By extending recent results linking contextuality and the no-broadcasting theorem \cite{jokinen2024nobroadcasting} to the continuous variable regime, we show an interplay among the original definition, the non-normal definition, and representability using commuting algebras. As a side result, we characterize the fixed points of Schrödinger picture entanglement breaking channels as commuting sets of states, and give a necessary condition for the fixed points of the Heisenberg picture through broadcastability.

The paper is structured as follows. In Section II, we recall the definition of generalized contextuality in finite-dimensional systems. In Section III, we show how such definition leads to the possibility of confirming contextuality of quantum theory from a single commuting measurement. We further propose two relaxations of the noncontextual models, approximative and non-normal,
that envelope the position measurement, and show that these relaxations are equivalent. In Section IV, we investigate the connection between continuous variable contextuality and broadcasting. We also give an algebraic formulation for the normal case. In Section V, we provide the mathematical proofs for the results concerning the approximate and non-normal models.

\section{II. Generalized contextuality in finite-dimensional discrete systems} 
We begin by briefly reviewing the basic framework of generalized contextuality and its application to finite-dimensional quantum systems.  \par 
The notion of generalized contextuality was first introduced in \cite{spekkens05}. A core building block of generalized contextuality is the concept of an abstract operational theory, consisting of preparations $\{P\}$, measurements $\{M\}$ and the probabilities $p(k|M,P)$ predicted by the theory. Here $k$ denotes a (discrete) measurement outcome. Occasionally also transformation procedures of the operational theory are examined, but in this work we focus solely on preparations and measurements. \par 
Within an operational theory one can define the so-called \emph{operational equivalence classes}. Two preparations $P,P'$ are operationally equivalent if the following condition holds.
\begin{align}\label{eqprepnoncontext}
    p(k|M,P)=p(k|M,P') \quad \forall M, \; \forall k
\end{align}
Similarly the measurements $M, M'$ are equivalent if the following condition holds.
\begin{align}\label{eqmeasnoncontext}
    p(k|M,P)=p(k|M',P) \quad \forall P, \; \forall k
\end{align}
The probabilities of the operational theory are then said to follow a \emph{preparation- and measurement noncontextual ontological model} (PNC+MNC) if 
\begin{align}\label{eqontmodel}
    p(k|M,P)=\sum_\lambda p(\lambda|P)p(k|M,\lambda),
\end{align}
where the response functions (probabilities) $p(\lambda|P)$ and $p(k|M,\lambda)$ depend only on the equivalence classes defined by equations \eqref{eqprepnoncontext} and \eqref{eqmeasnoncontext} respectively for all $\lambda$. Furthermore $p(\lambda|P)$ and $p(k|M,\lambda)$ are convex linear in some abstract convex structures of the preparations and measurements. An operational theory for which there exist no explanation of the form \eqref{eqontmodel} is called \emph{preparation- and measurement contextual}.\par 
A concrete example of an operational theory is quantum theory. In this case, the preparation equivalence classes defined by \eqref{eqprepnoncontext} are identified with quantum states $\rho$, i.e.\ trace-1 positive-semidefinite linear operators. Similarly, the measurement equivalence classes defined by \eqref{eqmeasnoncontext} are identified with positive-operator-valued measures (POVMs) $M$, i.e. collections of positive-semidefinite operators $\{M_k\}_k$ normalizing to the identity operator: $\sum_k M_k=I$. Furthermore, the probabilities are produced according to the Born rule: $p(k|M,\rho)=\tr{\rho M_k}$.\par  For finite-dimensional quantum theory, the ontological model \eqref{eqontmodel} takes a simpler form due to the convex linearity of the response functions, and Hilbert-Schmidt duality of a finite-dimensional Hilbert space $\hi$:
\begin{align}\label{eqontmodelquant}
    \tr{\rho M_k }=\sum_\lambda \tr{G_\lambda \rho} \tr{M_k \sigma_\lambda},
\end{align}
where the $\{G_\lambda \}_\lambda$ satisfy $\tr{G_\lambda \rho} \geq 0 \ \forall \lambda$,  $\sum_{\lambda}\tr{G_\lambda \rho}=1$ and $\tr{M_k \sigma_\lambda}\geq 0 \ \forall \lambda$.
In this work we will consider the full quantum theory as our operational theory i.e. the set of all quantum states, denoted $\sh$, and collection of all observables, denoted $\hO$. In this case $\{G_\lambda\}_\lambda$ in equation \eqref{eqontmodelquant} is a POVM and $\sigma_\lambda \in \sh$ for all $\lambda$. With this, an entanglement breaking channel (EBC),  $\Lambda_{EB}(T)=\sum_{\lambda}\tr{G_\lambda T}\sigma_\lambda$,  can be identified from equation \eqref{eqontmodelquant}: $\tr{\rho M_k}=\tr{\Lambda_{EB}(\rho)M_k}$. A quantum channel is a completely positive trace-preserving linear map and an EBC is a special case of this. Therefore, the question of contextuality reduces to a problem of fixed points of entanglement breaking channels. This condition was characterized in our earlier work \cite{jokinen2024nobroadcasting}.

\section{III. Generalized contextuality in continuous-variable systems}
We now intend to establish the formulation of generalized contextuality presented in the previous section to the continuous-variable regime in the context of quantum theory. From now on $\hi$ and $\ki$ denote separable, possibly infinite-dimensional Hilbert spaces unless otherwise stated.
\par 
As in the finite-dimensional case, the preparation equivalence classes are identified with quantum states and measurement equivalence classes are identified with POVMs. Since we now consider continuous-variable systems, we need to slightly extend the definition of a discrete POVM: instead of single outcomes $k$ we in general consider subsets $X$ of the set of all possible outcomes $\Omega$. The possible outcome sets are called \emph{measurable} sets, which in technical terms form a $\sigma$\emph{-algebra}. A $\sigma$-algebra is a nonempty collection of subsets of $\Omega$, closed under complement and countable unions. In this work all measurable sets are assumed to belong to a \emph{standard Borel space} ($\Omega, \hB(\Omega))$. Here $\Omega$ is a complete separable metrizable space, and $\hB(\Omega)$ the $\sigma$-algebra generated by its open sets, also called the Borel $\sigma$-algebra. For example $\R^n$ and $\N$ form standard Borel spaces.  \par 
Therefore a POVM associates a positive bounded linear operator $M(X)$ to all sets of possible outcomes i.e. Borel sets $X$, analogously to discrete POVMs. Furthermore POVMs should produce probabilities according to the Born rule i.e. the map $\hB(\Omega) \ni X \mapsto \tr{\rho M(X)}$ should be a \emph{probability measure} for all $\rho \in \sh$. This means that $\tr{\rho M(X)}\ge0$, $\tr{\rho M(\Omega)}=1$ and that $X \mapsto \tr{\rho M(X)}$ is $\sigma$\emph{-additive}. $\sigma$-additivity means that the measure should decompose with respect to countable unions of disjoint outcome sets: $\tr{\rho M\left( \bigcup_{n \in \N} X_n\right)}=\sum_{n \in \N} \tr{\rho M(X_n)}$, where $X_n \cap X_m=\emptyset$ for $m\neq n$. This condition will turn out to be an interesting aspect in the definition of a noncontextual ontological model. To summarize: a POVM is a mapping $M: \hB(\Omega) \to \lh$ such that for all states $\rho \in \sh$, $X \mapsto \tr{\rho M(X)}$ is a probability measure. We shall denote from now on the measurable space in which a POVM $M$ is defined in by $(\Omega_M,\hB(\Omega_M))$. This definition of a POVM coincides with the usual definition of a finite outcome POVM, since the discrete topology (all subsets) of a finite outcome set can be metrized into a standard Borel space with the discrete metric. 
\subsection{Three definitions of continuous variable contextuality for quantum theory}
We now have the necessary tools to discuss definition of a noncontextual ontological model in the case where the operational theory is set to be quantum theory. In  analogue to equation \eqref{eqontmodel}, one could propose the following model for all $\rho \in \sh$,  $M \in \hO$ and $X \in \hB(\Omega_M)$.
\begin{align}\label{eqincontmodel}
    \tr{\rho M(X)}=\int_\Omega  p(M(X)|\lambda) d\mu_\rho(\lambda).
\end{align}
Here $\mu_\rho$ is a probability measure for all $\rho \in \sh$. Note that unlike in \cite{spekkens05}, we have here absorbed the probability density in equation \eqref{eqontmodel} into a measure, as considering the preparation responses as mappings from states to probability measures is more natural for continuous-variable systems. This is due to the fact that not all measures have a well-defined density, consider e.g. the Dirac measure in $\R$.
\par 

Now as in the finite case, $\rho \mapsto \mu_\rho$ and $E \mapsto p(E|\lambda)$ are convex linear.
Using this convex linearity and a standard Frechet-Riesz-type argument given in e.g. \cite{busch03,busch16}, we see that for all $X \in \hB(\Omega)$, $\mu_\rho(X)=\tr{\rho G(X)}$, where $G:\hB(\Omega) \to \lh$ is a POVM. 

For the measurement responses $E \mapsto p(E|\lambda)$, defined for full quantum theory, the first natural choice would be that they are given by quantum states i.e. $p(E|\lambda)=\tr{\sigma_\lambda E}$. Thus they constitute a measure-and-prepare scenario, analogous to the finite-dimensional case. We state this finite-dimensional analogue as the first definition of \emph{contextuality non-confirming sets}.

\begin{definition}\label{defnormal}
    (\textbf{Normal contextuality non-confirming sets}) \\
    Let $\hM\subset \hO$ be a set of measurements and $\hS \subset \sh$ a set of states. $\hM$ fails to prove the contextuality of quantum theory, i.e. is contextuality non-confirming, if there is a POVM $G:\hB(\Omega) \to \lh$ and a (weakly) measurable family of states $\{\sigma_\lambda\}_{\lambda \in \Omega} \subset \sh$ such that for all $X \in \hB(\Omega_M)$ we have
    \begin{align}\label{eqdefinitionnorm}
        \tr{\rho M(X)}=\int_\Omega \tr{M(X)\sigma_\lambda} \, d(\tr{\rho G(\lambda)})
    \end{align}
    for all $\rho \in \sh$ and $M \in \hM$. 
    Similarly $\hS$ is contextuality non-confirming if and only if equation \eqref{eqdefinitionnorm} holds for all $M \in \hO$ and $\rho \in \hS$.
\end{definition}
We examine the choice of measurement responses more carefully in the next section. 

 It is fairly apparent that Definition \ref{defnormal} induces a fixed point problem of EB-channels in analogy to the finite case by using Theorem 2 of \cite{holevo2005separability}. In other words the contextuality non-confirming sets of states $\hS$ are exactly the ones such that for all $\rho \in \hS$ we have $\Lambda_{EB}(\rho)=\rho$ for some EB-channel $\Lambda_{EB}$, and similarily for measurements. We relegate the details of this to the Appendix. 

The form of the measurement response functions in Definition \ref{defnormal} is also essential for their interpretation as valid probabilities as we will soon show. However, one quickly runs into arguably counterintuitive results with this definition, as the following example shows.

Let $L^2(\R)$ be the Hilbert space of square integrable functions and the $\chi_X$ characteristic function of a Borel set $X \in \hB(\R)$. We define the (one-dimensional) position measurement $Q:\hB(\R)\to \lh$ by $(Q(X)\fii)(x)=\chi_X(x)\fii(x)$ for all $x \in \R$, $X \in \hB(\R)$ and $\fii \in L^2(\R)$. This is especially a commutative, projective measurement.
\begin{example}\label{exposition}
    The position measurement $Q$ is enough to confirm the contextuality according to Definition \ref{defnormal}.
\end{example}
\begin{proof}
    Suppose $Q$ is contextuality non-confirming according to Definition \ref{defnormal}. In other words, all effects of $Q$ are fixed points of an EBC: 
    \begin{align}
        Q(X)=\Lambda_{EB}^*(Q(X)) \quad\forall X \in \hB(\R)
    \end{align}
    Now $Q$ is especially a continuous, extremal measurement.
    Thus by Theorem 1 of \cite{Jokinen_2024} this would imply that $L^2(\R)$ is a non-separable Hilbert space. This is a contradiction.
\end{proof}

To contrast this with existing results in finite-dimensional systems, we note that Example~\ref{exposition} is in similar spirit to a former work showing that a single measurement can lead to a proof of contextuality \cite{selby21}. The former work relies on a single, but nevertheless a non-commuting generalised measurement. Such measurement can be dilated to a commuting one, but proving contextuality from the dilation will require additional assumptions to the original definition of noncontextuality given in Ref.~\cite{spekkens05}. Technically speaking, such extra assumptions rise from treating two bipartite preparations on the dilation system as operationally equivalent, if their marginal states on one side are equal. This can be justified in scenarios, where an additional system is irrelevant to one's theory, e.g. in case the additional system expands the quantum state to include a note on the computer about which state was prepared. However, when probing which measurements can lead to contextuality, such assumption is considerably limiting the power of a noncontextual model. In contrast, Example~\ref{exposition} shows contextuality from commutativity without additional assumptions.

It would therefore seem that a classical, commutative, measurement is contextuality-confirming. This suggests that a direct application of noncontextual models to continuous variable systems is problematic. In the following, we propose an alternative approximative definition, that coincides with the standard definition for discrete systems. Here, $E(\hM)$ denotes the effects of the set of measurements $\hM$ i.e. the union of the ranges of all POVMs in $\hM$.
\begin{definition}\label{defapprox}
    (\textbf{Approximately contextuality non-confirming sets}) \\
     Let $\hM$ be a set of measurements and $\hS$ a set of states. Then $\hM$ is approximately contextuality non-confirming if there exists a POVM $G:\hB(\Omega) \to \lh$, defined in a standard Borel space, such that for every $\varepsilon>0$ and for every finite subset $\hF \subset E(\hM)$ there is an EB-channel $\Lambda_{\hF,G}^{*}:\lh \to \lh$ of the form
    \begin{align}\label{eqapproxeb}
        \Lambda_{\hF,G}^{*}(A)=\sum_i \tr{\sigma_i^\F A}G(X_i^\hF)
    \end{align}
    such that for all $\rho \in \sh$ and $E \in \hF$ the following holds.
    \begin{align}\label{eqapproxdeft}
        |\tr{\rho(\Lambda_{\hF,G}^{*}(E)-E) }|<\varepsilon
    \end{align}
    Similarly, $\hS$ is approximately contextuality non-confirming if equation \eqref{eqapproxdeft} holds for all $\rho \in \hS$ and all finite sets of effects $\hF \subset \lh$. 
\end{definition}

It is straightforward to verify that Example \ref{exposition} does not hold with Definition \ref{defapprox}; in fact, any finite set of elements of a single PVM are (even exact) fixed points of some EBC of the form \eqref{eqapproxeb}. The details are given later in Subsection "Sharp measurements and the non-normal definition".

We will show in Section V that the approximate definition can be equivalently cast as a modification of Definition~\ref{defnormal}, where the measurement response functions are weakened by allowing them to be non-constructive, i.e. non-normal. More formally, we have the following.

\begin{definition}\label{defmeasresp}
    Let $G:\hB(\Omega) \to \lh$ be a POVM and $\rho_0 \in \sh$ a faithful state, meaning that if $\tr{\rho_0A}=0$ for $A\ge 0$ then $A=0$. A $G$-\emph{measurement response} is a positive linear map $T:\lh \to L^\infty(\Omega,\tr{\rho_0G)})$ such that $T(I)=1$, where $1$ is the unit of $L^\infty(\Omega,\tr{\rho_0G}).$
\end{definition}
Note that this definition essentially does not depend on the chosen state $\rho_0$, since the measures $\tr{\rho G}$ are all mutually absolutely continuous when $\rho$ is a faithful state. \par 
We now state the definition of generalized contextuality using these measurement responses. 
\begin{definition}\label{defnonnormal}
    (\textbf{Non-normal contextuality non-confirming sets}) \\
    Let $\hM\subset \hO$ be a set of measurements and $\hS \subset \sh$ a set of states. $\hM$ is contextuality non-confirming if there is a POVM $G:\hB(\Omega) \to \lh$, a faithful state $\rho_0 \in \sh$ and a $G$-measurement response $T:\lh \to L^\infty(\Omega,\tr{\rho_0G)}$ such that we have the following for all $M\in \hM$ and $\rho\in \sh$:
    \begin{align}\label{eqdefnonnormalont}
        \tr{\rho M(X)}=\int_\Omega T(M(X))(\lambda) \, d(\tr{\rho G(\lambda)})
    \end{align}
    for all $X\in \hB(\Omega_M)$.
    Similarily $\hS$ contextuality non-confirming if equation \eqref{eqdefnonnormalont} holds for all $\rho \in \hS$ and $M \in \hO$. 
\end{definition}

In the later sections we examine the predictions of these definitions and the interplay between them. An informal summary of the relationships between the three definitions is given in Figure \ref{figcompare}.

\begin{figure}
\includegraphics[width=\linewidth]{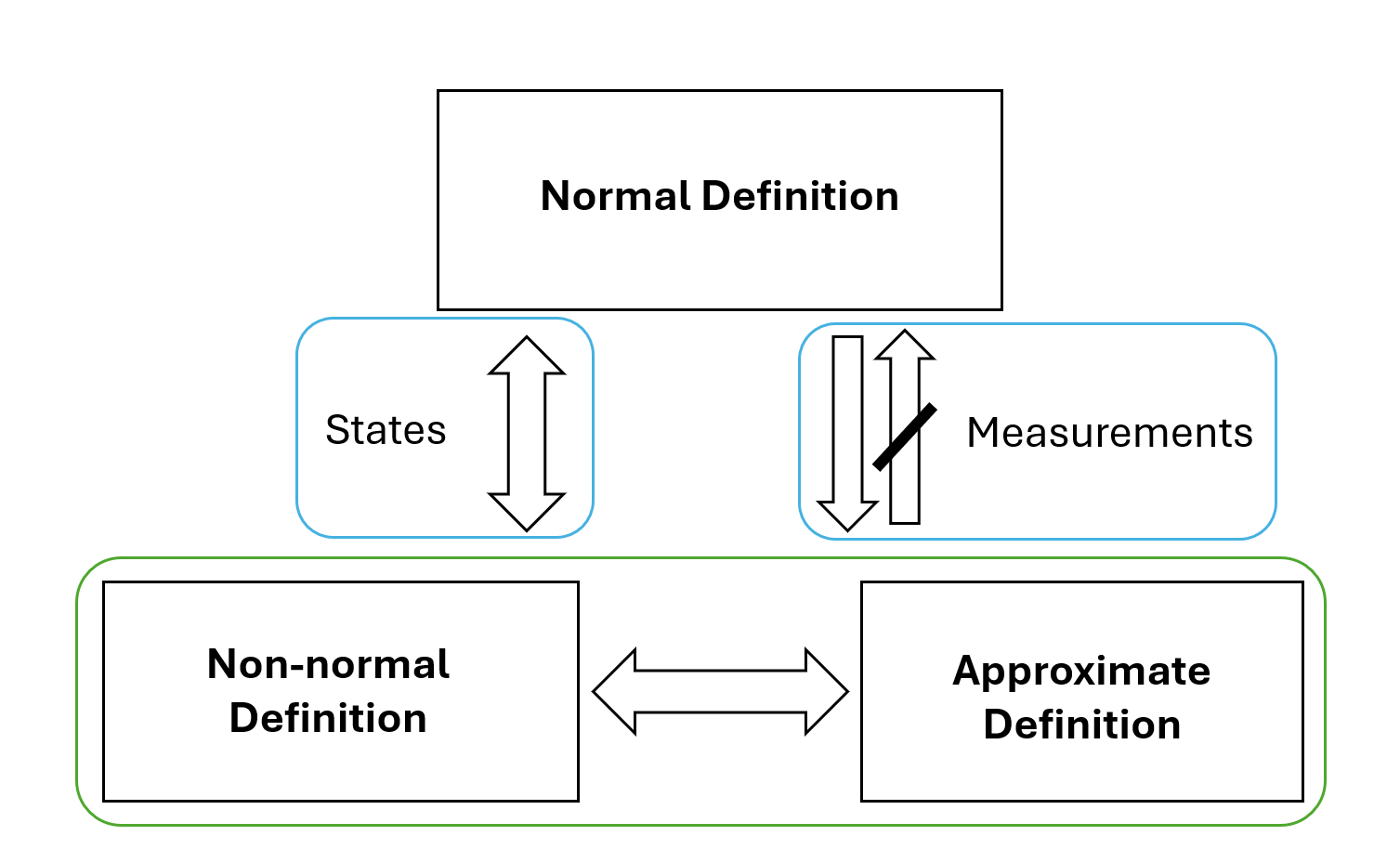}
\caption{Comparison of the three definitions of contextuality non-confirming sets. The non-normal definition and the approximate definition are equivalent (Proposition \ref{propapproxnonnormalequiv}), outlined in green. Furthermore all three definitions are equivalent when considering contextuality non-confirming states (Proposition \ref{propdefequivalence}). Finally for contextuality non-confirming sets of measurements Definitions \ref{defnonnormal} and \ref{defapprox} are strictly weaker than Definition \ref{defnormal} (Example \ref{exposition}, Corollary \ref{corpvmdiscrete} and Proposition \ref{proppvmapproximate}). The difference between the definitions arises from the continuity choices of the measurement responses.  }
\label{figcompare}
\end{figure}

\subsection{The probabilistic interpretation of measurement response functions}

The form of the measurement response functions is linked to their probability nature. In this subsection we briefly examine this in the context of the original work on generalized contextuality \cite{spekkens05}. 

In \cite{spekkens05} measurement response functions $p(\cdot|\lambda)$ should give \emph{probability distributions} when combined with POVMs. In the following, we show that if one wishes to follow this notion and interpret the measurement responses as probability distributions, one must
use Definition~\ref{defnormal}.

Let us first explain this in the case when the measurement responses are defined pointwise. 
In this case, if our Hilbert space is infinite-dimensional, the convex linearity of $p(\cdot|\lambda)$ does not in general give us similar structure as in the finite-dimensional case Eq.~\eqref{eqontmodelquant}. Indeed, extending $p(\cdot|\lambda)$ linearly from the effects of $\hO$ to all operators in $\lh$ gives that $p(\cdot|\lambda)$ belongs to the $C^*$-algebraic states of $\lh$ given by
\begin{align}
    &\hS(\lh):=\{\omega \in \lh^* \; | \;  \omega\ge 0, \omega(I)=1 \}.
\end{align}
Here $\lh^*$ is the Banach-space dual of $\lh$ with respect to the sup-norm $\Vert \cdot \Vert$. \par 

For infinite-dimensional Hilbert spaces, $\sh \subsetneq \hS(\lh)$ (interpreting $\sh$ as dual elements $A \mapsto \tr{\rho A}$). The elements of $\sh$ are simply density operators. However, the set of $C^*$-algebraic states $\hS(\lh)$ is more delicate. Let us elaborate shortly on this notion. \emph{Singular states} are examples of states that are not representable by a density operator \cite{Takesaki1979,Blackadar2006_2}. A singular state $\omega \in \hS(\lh)$ is a non-constructive function such that for all finite rank operators $F \in \lh$ we have $\omega(F)=0$. Equivalently $\omega(K)=0$ holds for all compact operators $K \in \hK(\hi)$, which is the $\Vert \cdot \Vert$-closure of the finite-rank operators. A feature of singular states is that they are not continuous in the \emph{ultraweak topology} defined as the coarsest topology such that the "expectation value maps" $\lh \ni A \mapsto \tr{T A}$ are continuous for all trace-class operators $T \in \th$. As a reminder, the trace-class operators consist of the operators $T$ for which the trace converges: $\tr{|T|}<\infty$.  Of course there are many other relevant topologies (e.g. weak, strong, ultrastrong) in which the singular states are not continuous. In fact, the ultraweakly continuous states are exactly the \emph{normal} states in $\hS(\lh)$, which are further equivalent to states given by $\sh$ \cite{Blackadar2006_2}. A normal state $\phi$ is such that for every increasing net $\{A_\alpha\} \subset \lh_+$ the identity $\phi(\sup_\alpha A_\alpha)=\sup_\alpha \phi(A_\alpha)$ holds. We shall henceforth call any state that is not normal a \emph{non-normal} state. Furthermore in Definition \ref{defnormal} the measurement response is chosen to be represented with a density operator i.e. normal and hence we call it the normal definition.

The normality of a state $\omega \in \hS(\lh)$ is intimately linked to its probability nature: the map $X \mapsto \omega(M(X))$ is a probability measure for all POVMs $M \in \hO$ exactly when $\omega$ is a normal state. The crux of the problem is $\sigma$-additivity, which can be seen with the following standard result. 
\begin{proposition}\label{propsigmaadd}
    A state $\omega \in \hS(\lh)$ is normal if and only if for all POVMs $M \in \hO$ the map $X \mapsto \omega(M(X))$ is $\sigma$-additive.
\end{proposition}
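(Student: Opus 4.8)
The plan is to prove the two implications separately; the forward implication is immediate, while the reverse one carries the content.

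For the forward direction, I would start from the fact, recalled in the text above, that the normal states on $\lh$ are exactly those represented by a density operator. So if $\omega$ is normal there is $\rho_\omega \in \sh$ with $\omega(A)=\tr{\rho_\omega A}$ for all $A \in \lh$, and then for any POVM $M \in \hO$ the set function $X \mapsto \tr{\rho_\omega M(X)}$ is a probability measure by the very definition of a POVM; in particular it is $\sigma$-additive.

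For the reverse direction I would argue the contrapositive: assuming $\omega$ is not normal, I would exhibit a single POVM --- in fact a projection-valued measure --- whose induced set function is not $\sigma$-additive. The bridge is the standard equivalence, for a state on a von Neumann algebra, between normality and complete additivity on projections, i.e. $\omega\left(\sum_i P_i\right)=\sum_i \omega(P_i)$ for every orthogonal family of projections $\{P_i\}$ \cite{Takesaki1979,Blackadar2006_2}. Because $\hi$ is separable, any orthogonal family of nonzero projections is at most countable, so this reduces to a countable condition. Failure of normality therefore yields a sequence of mutually orthogonal projections $\{P_n\}_{n \in \N}$ with strong sum $P=\sum_n P_n$ for which $\sum_n \omega(P_n) \neq \omega(P)$.

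I would then turn this witness into a POVM. On the discrete standard Borel space $\Omega=\N \cup \{\ast\}$, define $M(\{n\})=P_n$ and $M(\{\ast\})=I-P$, extended by (finite and countable) additivity; since the $P_n$ and $I-P$ are mutually orthogonal projections summing to $I$, this is a genuine PVM, hence a POVM. For the induced set function $\mu(X)=\omega(M(X))$ one has $\mu(\N)=\omega(P)$ while $\sum_n \mu(\{n\})=\sum_n \omega(P_n)\neq \omega(P)$, so $\mu$ violates $\sigma$-additivity on the disjoint union $\N=\bigcup_n \{n\}$. This produces the required POVM and closes the contrapositive. The only nontrivial ingredient is the operator-algebraic equivalence between normality and complete additivity on projections (more precisely its direction ``completely additive $\Rightarrow$ normal'', used here in contrapositive form), together with the separability reduction from increasing nets to countable projection families; I would cite this rather than reprove it. The remaining steps --- the density-operator representation and the explicit construction of the bad PVM --- are routine.
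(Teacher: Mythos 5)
Your proof is correct and follows essentially the same route as the paper: the forward direction is immediate from the density-operator representation of normal states and the definition of a POVM, and the reverse direction rests on the same operator-algebraic equivalence between normality and (complete, hence by separability countable) additivity on orthogonal projections that the paper invokes via Theorem III.2.1.4 (viii) of \cite{Blackadar2006_2}. Your explicit construction of the witness PVM on $\N \cup \{\ast\}$ simply spells out the step the paper calls a ``straightforward consequence,'' and it is carried out correctly.
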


The trivial direction can be proven straight from definition: If $\omega$ is normal, then $X \mapsto \omega(M(X))$ is a probability measure by the definition of a POVM, thus especially $\sigma$-additive. The other direction is a straightforward consequence of Theorem III.2.1.4 (viii) in \cite{Blackadar2006_2}.

Thus Proposition \ref{propsigmaadd} implies that convex linearity for the measurement responses $p(\cdot|\lambda)$ alone is not enough to guarantee they are valid ($\sigma$-additive) probabilities. In the original work on generalized contextuality \cite{spekkens05}, the measurement responses are pointwise defined, and probability distributions. This necessarily leads to Definition \ref{defnormal} in the context of full quantum theory according to Proposition \ref{propsigmaadd}. \par 
Furthermore, even if the measurement response is not necessarily defined pointwise (i.e. we have a general $G$-measurement response according to Definition \ref{defmeasresp}), we come to the same conclusion. Indeed, if the $G$-measurement response $T$ produces valid probabilities i.e. $X \mapsto \int_\Omega f(\lambda)T(M(X)) \, d\tr{\rho_0 G(\lambda)}$ is a probability measure for all $f\ge 0$ s.t. $\int_\Omega f(\lambda) \,d\tr{\rho_0 G(\lambda)}=1 $, then $T$ is (by definition) continuous between the weak$^*$ topologies of $\lh$ and $L^\infty(\Omega,\tr{\rho_0 G})$. This leads to the form of Definition \ref{defnormal}, cf. Proposition \ref{propdefnormalreduction} . As a reminder, the weak$^*$-topology of $L^\infty$ is the initial topology induced by the set of integrable functions $L^1$.

Thus if one wishes to interpret the measurement responses as probability distributions, one must necessarily use Definition \ref{defnormal}. However, this would also require accepting that the position measurement alone is contextuality-confirming. This motivates one to consider an approximate definition such as Definition \ref{defapprox}, thus keeping the probabilistic interpretation on every successive approximation, while not allowing a commuting measurement to confirm contextuality.

We note that the need for modifying Definition \ref{defnormal} comes exclusively from measurements. On the state side Definition \ref{defnormal} does not lead to commuting contextuality confirming sets. This, and further implications on contextuality non-confirming sets with respect to Definition \ref{defnormal} are explored in the next section.

\section{IV. Normal ontological models}

Intuitively, normal noncontextual models, cf. Eq.~(\ref{eqontmodelquant}) ask whether correlations between the sent state and the outcome of measurements can be explained by sending a classical message $\lambda$ and preparing a state $\sigma_\lambda$ based on this information, i.e. no quantum information is passed from the preparation to the measurements. Classical information can be copied, i.e. broadcast, meaning that noncontextual models are related to broadcasting. In Ref.~\cite{jokinen2024nobroadcasting}, a more formal equivalence between noncontextual models and the no-broadcasting theorem was shown using fixed points of an entanglement breaking channels. Here, we aim at characterising contextuality confirming sets of states and measurements in terms of in the infinite-dimensional setting. For states, we show a tight connection to broadcasting, whereas for measurements we prove that contextuality non-confirming in the sense of Definition~\ref{defnormal} implies broadcastability, which we further characterise by a post-processing relation. Broadcastability is formally defined as follows.
\begin{definition}
    Let $\hS \subset \sh$, $\hA,\hB \subset \hO$. Then the triple $(\hS,\hA,\hB)$ is broadcastable if and only if there exists a quantum channel $\Lambda:\th \to \hT(\hi \otimes \hi)$ such that for all $\rho \in \hS$,$A \in \hA$, $B \in \hB$ the following equalities hold for all $X \in \hB(\Omega_A),Y \in \hB(\Omega_B)$:
    \begin{align}\label{broadcastingcondition2}
        \tr{A(X)\mathrm{tr}_{2}{\Lambda(\rho)}}&=\tr{A(X) \rho}, \\
        \tr{B(Y)\mathrm{tr}_{1}{\Lambda(\rho)}}&=\tr{B(Y) \rho}.
    \end{align}
\end{definition}
In this paper we are interested only in two types of broadcasting, i.e. broadcasting some states $\hS$ with all measurements or broadcasting some measurements with all states.  Furthermore both parties have the same measurements. This captures the notion of broadcasting a set of states or a set of measurements respectively. Thus for the rest of the article we use the following shorter notation: 
\begin{itemize}
    \item $\hS$ broadcastable $\leftrightarrow$ $(\hS,\hO,\hO)$ broadcastable
    \item $\hM$ broadcastable $\leftrightarrow$ $(\sh,\hM,\hM)$ broadcastable.
\end{itemize}
We now examine the connection between contextuality and broadcasting for both quantum states and measurements. The next subsection contains the results relating normal ontological model to broadcasting and to other notions of classicality. The technical structure and details are postponed to the section after. For the rest of the article, given a quantum channel $\Lambda:\th \to \th$, we use the notation $\fix{\Lambda}$ and $\fix{\Lambda^*}$ for the following fixed point sets.
\begin{align}
    \fix{\Lambda}&=\{T \in \th \, | \, \Lambda(T)=T\} \\
    \fix{\Lambda^*}&=\{A \in \lh \, | \, \Lambda^*(A)=A\}
\end{align}

\subsection{Contextuality and broadcasting for quantum states}
We begin by examining states, where we find results analogous to the finite-dimensional case \cite{jokinen2024nobroadcasting}. The following formal statement is proven in the Section "Algebraic structure of broadcasting" and we note that for states there is no difference between the normal and the non-normal definitions, cf. Proposition \ref{propdefequivalence}.

\begin{theorem}\label{thmstatecont}
    Let $\hS \subset \sh$ be a set of states. Then the following are equivalent. 
    \begin{enumerate}
        \item $\hS$ is contextuality non-confirming. \label{c_ont}
        \item $\hS \subset \fix{\Lambda_{EB}}$ for an EB-channel $\Lambda_{EB}$. \label{c_eb}
        \item $\hS$ is broadcastable. \label{c_br}
        \item $\hS$ is commutative.  \label{c_com}
    \end{enumerate}
\end{theorem}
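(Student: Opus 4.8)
The plan is to collapse the four conditions by proving the cycle $\ref{c_ont}\Leftrightarrow\ref{c_eb}$ together with $\ref{c_eb}\Rightarrow\ref{c_br}\Rightarrow\ref{c_com}\Rightarrow\ref{c_eb}$. I would treat $\ref{c_ont}\Leftrightarrow\ref{c_eb}$ as essentially definitional, as already flagged after Definition~\ref{defnormal}: given the normal model \eqref{eqdefinitionnorm}, the barycentric map $\Lambda_{EB}(\rho)=\int_\Omega \sigma_\lambda\,d(\tr{\rho G(\lambda)})$ is an entanglement breaking channel by Holevo's structure theorem (Theorem~2 of \cite{holevo2005separability}), and \eqref{eqdefinitionnorm} states precisely that $\tr{M(X)\Lambda_{EB}(\rho)}=\tr{M(X)\rho}$ for every effect $M(X)$. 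Since the effects separate trace-class operators, this is equivalent to $\Lambda_{EB}(\rho)=\rho$, i.e. $\hS\subset\fix{\Lambda_{EB}}$; the converse runs backwards through the same Holevo decomposition of a given EB-channel.

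For $\ref{c_eb}\Rightarrow\ref{c_br}$ I would construct the broadcasting map explicitly from the measure-and-prepare form $\Lambda_{EB}(\rho)=\int_\Omega \tr{G(\lambda)\rho}\,\sigma_\lambda\,d\lambda$ by doubling the prepared state,
\begin{align}
\Lambda(\rho)=\int_\Omega \tr{G(\lambda)\rho}\,\sigma_\lambda\otimes\sigma_\lambda\,d\lambda ,
\end{align}
which is manifestly a channel $\th\to\hT(\hi\otimes\hi)$. Both partial traces return $\Lambda_{EB}(\rho)$, and since every $\rho\in\hS$ is a fixed point this equals $\rho$, giving the two marginal conditions of broadcastability. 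The return step $\ref{c_com}\Rightarrow\ref{c_eb}$ is where the infinite-dimensional structure matters: a commuting family of density operators consists of compact operators, and since a diffuse abelian von Neumann algebra contains no nonzero compact element, the abelian algebra generated by $\hS$ has no diffuse part. It is therefore purely atomic with finite-dimensional atoms away from the common kernel, which yields a common orthonormal eigenbasis $\{e_k\}$ of all of $\hS$. The completely dephasing channel $\Lambda_{EB}(\rho)=\sum_k \tr{\kb{e_k}{e_k}\rho}\,\kb{e_k}{e_k}$ is then entanglement breaking with $\hS\subset\fix{\Lambda_{EB}}$.

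The remaining implication $\ref{c_br}\Rightarrow\ref{c_com}$ is the genuine content of the theorem and the step I expect to be the main obstacle, since it is an infinite-dimensional no-broadcasting statement. I would reduce it to pairs, proving $[\rho,\rho']=0$ for arbitrary $\rho,\rho'\in\hS$ sharing one broadcasting channel $\Lambda$. Because $\Lambda(\rho)$ has both marginals equal to $\rho$, I would invoke monotonicity of a distinguishability functional (relative entropy or fidelity) under the two partial traces: possessing two faithful copies of the pair $\{\rho,\rho'\}$ on the tensor factors over-counts their distinguishability unless the states commute, which is the Barnum--Caves--Fuchs--Jozsa--Schumacher obstruction. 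The delicate point relative to the finite-dimensional case is that these functionals, and the chain-rule/monotonicity estimates, must be controlled for possibly infinite-rank density operators; I would handle this by compressing to the finite-dimensional ranges of spectral truncations of $\rho$ and $\rho'$ (legitimate since both are compact), applying the finite-dimensional bound on each compression, and passing to the limit.

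Finally, I would note that no faithful common fixed state is assumed anywhere, so the usual fixed-point-algebra machinery is replaced throughout by the explicit compactness/atomicity arguments above, and that for states the normal and non-normal models coincide (Proposition~\ref{propdefequivalence}), so condition~\ref{c_ont} may be read in either sense.
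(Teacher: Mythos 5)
Your cycle closes logically, but it is a genuinely different route from the paper's. The paper proves $\ref{c_ont}\Leftrightarrow\ref{c_eb}$ and $\ref{c_eb}\Rightarrow\ref{c_br}$ exactly as you do, then closes the loop by proving $\ref{c_br}\Rightarrow\ref{c_eb}$ directly from Proposition~\ref{propschroedfixnormal} (the normal/singular splitting of the fixed-point projection, which exhibits every Schr\"odinger-picture fixed state of a symmetric broadcasting marginal as a fixed state of an explicit EB-channel), and obtains $\ref{c_br}\Leftrightarrow\ref{c_com}$ by citing Theorem~3.14 of \cite{Kaniowski2014}. You instead prove $\ref{c_com}\Rightarrow\ref{c_eb}$ and $\ref{c_br}\Rightarrow\ref{c_com}$. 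Your $\ref{c_com}\Rightarrow\ref{c_eb}$ argument is correct and pleasantly elementary: the abelian von Neumann algebra generated by $\hS$ contains nonzero positive compact elements; a diffuse abelian algebra contains none (a nonzero spectral projection of a compact element is finite rank and therefore dominates an atom); so away from the common kernel the algebra is atomic with finite-rank atoms carrying the states, and dephasing onto a joint eigenbasis is the required EB-channel. This gives a self-contained proof of an implication the paper only reaches through its broadcasting-algebra machinery, at the price of having to establish $\ref{c_br}\Rightarrow\ref{c_com}$ yourself rather than getting $\ref{c_br}\Rightarrow\ref{c_eb}$ for free from Proposition~\ref{propschroedfixnormal}.

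The gap is in $\ref{c_br}\Rightarrow\ref{c_com}$. The statement is true --- it is precisely the content of the cited Theorem~3.14 of \cite{Kaniowski2014} --- but your proposed proof does not go through as written. Compressing to spectral truncations of $\rho,\rho'$ destroys the hypothesis: the compressed channel no longer broadcasts the truncated states exactly, only approximately, so ``applying the finite-dimensional bound on each compression'' is not available; you would need a quantitative, robust no-broadcasting theorem, which is a substantially stronger statement than the exact one. The truncation detour is also unnecessary. Fidelity is defined for arbitrary density operators on a separable Hilbert space and is monotone under channels and partial traces there, so the chain
\begin{align}
F(\rho,\rho')\le F\bigl(\Lambda(\rho),\Lambda(\rho')\bigr)\le F\bigl(\mathrm{tr}_2\Lambda(\rho),\mathrm{tr}_2\Lambda(\rho')\bigr)=F(\rho,\rho')
\end{align}
forces equality directly in infinite dimensions; what then remains is the equality/sufficiency analysis (Petz-type) extracting commutativity from saturation of monotonicity, and that analysis in the infinite-dimensional (indeed von Neumann-algebraic) setting is exactly what the cited reference supplies. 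Either invoke that reference, as the paper does, or carry out the equality analysis directly; do not truncate.
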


According to Theorem \ref{thmstatecont} (\ref{c_com}.) one also sees that contextuality confirmation and nonzero set coherence \cite{designolle21b} are equivalent for the state set $\hS$. \par 
There is also another interesting equivalent condition to $\hS$ being contextuality non-confirming. For this we introduce the \emph{Koashi-Imoto decomposition} of a state \cite{Koashi2002,Galke2024}. Let $\hS$ be a set of states and $P_\hS$ the least upper bound of supports of $\hS$. Then the Hilbert space $P_\hS(\hi)$ decomposes as $P_\hS(\hi)=\bigoplus_j \hJ_j \otimes \ki_j$ and the states $\rho \in \hS$ decompose in this Hilbert space structure as
\begin{align}
   \rho= \bigoplus_j p_j(\rho)\sigma_j(\rho) \otimes_j\om_j,
\end{align}
where $p_j(\rho)$ is a probability distribution, $\sigma_j(\rho)$ and $\om_j$ states. This is called the Koashi-Imoto decomposition. Its existence and some additional properties are proven in \cite{Koashi2002}. With the results of \cite{Galke2024} we then get the following additional characterization.
\begin{corollary}
    Let $\hS \subset \sh$ be a set of states and $P_\hS=\bigvee_{\rho \in \hS} \mathrm{supp}(\rho)$. Then $\hS$ is contextuality non-confirming if and only if $\dim \hJ_j=1$ for all $j$ in the Koashi-Imoto decomposition $P_\hS(\hi)=\bigoplus_j \hJ_j \otimes \ki_j$.
\end{corollary}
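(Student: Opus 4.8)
The plan is to reduce the statement to the commutativity characterisation already contained in Theorem~\ref{thmstatecont}. By that theorem $\hS$ is contextuality non-confirming if and only if $\hS$ is commutative, i.e.\ $[\rho,\rho']=0$ for all $\rho,\rho'\in\hS$, so it suffices to prove that commutativity of $\hS$ is equivalent to $\dim\hJ_j=1$ for every $j$. Throughout I would work inside $P_\hS(\hi)=\bigoplus_j\hJ_j\tot\ki_j$ and write each $\rho\in\hS$ in its Koashi--Imoto form $\rho=\bigoplus_j p_j(\rho)\,\sigma_j(\rho)\tot\om_j$, keeping in mind that the factors $\om_j$ are fixed states independent of $\rho$.

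For the easy direction ($\dim\hJ_j=1$ for all $j$ $\Rightarrow$ $\hS$ commutative), note that when each $\hJ_j$ is one-dimensional the factor $\sigma_j(\rho)$ collapses to the unique (scalar) state on $\mathbb{C}$, so that $\rho=\bigoplus_j p_j(\rho)\,\om_j$ under the identification $\mathbb{C}\tot\ki_j\cong\ki_j$. For any $\rho,\rho'\in\hS$ the block-diagonal product gives $\rho\rho'=\bigoplus_j p_j(\rho)p_j(\rho')\,\om_j^2=\rho'\rho$, so $\hS$ is commutative.

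For the converse I would use the defining minimality of the Koashi--Imoto decomposition supplied by \cite{Koashi2002,Galke2024}: for each $j$ the family $\{\sigma_j(\rho):\rho\in\hS,\ p_j(\rho)\neq0\}$ generates the full operator algebra $\mathcal L(\hJ_j)$, this being precisely the canonicity that forbids a further nontrivial tensor splitting of $\hJ_j$. Assuming $\hS$ commutative, I would project the identity $[\rho,\rho']=0$ onto the $j$-th block, obtaining $p_j(\rho)p_j(\rho')\,[\sigma_j(\rho),\sigma_j(\rho')]\tot\om_j^2=0$. Since $\om_j$ is a nonzero state and the scalars are nonzero, this forces $[\sigma_j(\rho),\sigma_j(\rho')]=0$. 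Thus the generators of $\mathcal L(\hJ_j)$ pairwise commute, so $\mathcal L(\hJ_j)$ is abelian, which is possible only if $\dim\hJ_j=1$.

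The main obstacle I expect is pinning down and invoking the correct form of the Koashi--Imoto minimality statement, namely that $\{\sigma_j(\rho)\}_\rho$ really generates all of $\mathcal L(\hJ_j)$; this is exactly the structural input one reads off from \cite{Galke2024}, and without it commutativity of $\hS$ would not propagate to a statement about $\dim\hJ_j$. A secondary technical point is the bookkeeping of blocks $j$ for which $p_j(\rho)=0$ for some $\rho$, so that $\sigma_j(\rho)$ is undefined there: one simply restricts attention to the states charging block $j$ and uses that, by construction, enough of them do so to generate $\mathcal L(\hJ_j)$.
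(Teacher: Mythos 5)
Your proof is correct, but it takes a genuinely different route from the paper: the paper disposes of this corollary in one line by citing Proposition~6 of \cite{Galke2024} (in combination with the commutativity characterisation of Theorem~\ref{thmstatecont}), whereas you reduce to commutativity via Theorem~\ref{thmstatecont} and then prove the equivalence ``$\hS$ commutative $\Leftrightarrow$ $\dim\hJ_j=1$ for all $j$'' by hand. Your two directions are both sound: the block-diagonal computation $\rho\rho'=\bigoplus_j p_j(\rho)p_j(\rho')\,\om_j^2$ handles the easy implication, and for the converse the block-wise commutator identity $p_j(\rho)p_j(\rho')[\sigma_j(\rho),\sigma_j(\rho')]\tot\om_j^2=0$ correctly forces $[\sigma_j(\rho),\sigma_j(\rho')]=0$ since $\om_j^2\neq 0$. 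The crux, as you rightly flag, is the Koashi--Imoto maximality property: the commutant of $\{\sigma_j(\rho)\}_\rho$ in $\hL(\hJ_j)$ is trivial, so by the double commutant theorem (the family is self-adjoint) it generates all of $\hL(\hJ_j)$; a pairwise commuting generating family then makes $\hL(\hJ_j)$ abelian, forcing $\dim\hJ_j=1$. This is exactly the structural input that \cite{Koashi2002,Galke2024} provide, and your edge-case handling (blocks charged by only one state still yield an abelian, hence one-dimensional, $\hL(\hJ_j)$; uncharged blocks do not appear since the decomposition lives on $P_\hS(\hi)$) is fine. What your approach buys is self-containedness and a transparent picture of \emph{why} the $\hJ_j$ factors must collapse; what the paper's citation buys is brevity and the assurance that the infinite-dimensional version of the KI maximality statement has been verified in \cite{Galke2024}, which is the one point your argument takes on faith.
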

\begin{proof}
    Proposition 6 in \cite{Galke2024}.
\end{proof}
The shift-channel  (cf. ref. \cite{salzmann2024robustnessfixedpointsquantum} or equation \eqref{eqshiftchannel}) and Example \ref{exqfunction} show that there exist EB-channels that do not give noncontextual explanations to any states in infinite dimensional systems. This is unlike in the finite-dimensional systems, where the set of quantum states is compact in trace-norm, and thus e.g. the Schauder-Tychonoff fixed point theorem \cite[V.10.5]{MR117523} guarantees a fixed point. The set of density operators is not compact in the trace-norm when the dimension is infinite, but by the Banach-Alaoglu theorem \cite[3.15]{rudin1991functional} the state space $\hS(\lh)$ is compact in the weak$^*$-topology. Thus again by the Schauder-Tychonoff fixed point theorem (and the normality of a quantum channel), there is a "fixed point state" for an EB-channel, but it does not have to be representable by a density operator.

\subsection{Contextuality and broadcasting for quantum measurements}

Although Definition~\ref{defnormal} leads to commuting contextuality confirming measurements, we explore further its merit as a notion of classicality through broadcasting. It is shown in Proposition \ref{propwstar} and Corollary \ref{corollaryeb} that the fixed points of the marginal of a broadcasting channel, thus especially the fixed points of an EB-channel, form a commutative von Neumann algebra. It is known that such algebras are always $*-$isomorphic to a classical function space $L^\infty(\Omega,\mu)$ of (equivalence classes of) essentially bounded functions on $\Omega$ with respect to the measure $\mu$ \cite[Proposition 1.18.1]{Sakai1998}. Now all effects of broadcastable POVMs are within this fixed point set, so we can boil this down to the following (a bit imprecisely formalised) observation.
\begin{observation}
    A set of (normal) contextuality non-confirming measurements can be embedded into a classical function space $L^\infty(\Omega,\mu)$. 
\end{observation}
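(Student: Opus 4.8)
The plan is to read the Observation as the composition of three facts, two of which are supplied earlier in the paper, and to devote the remaining effort to making the word ``embedded'' precise. First I would unpack the hypothesis. By Definition~\ref{defnormal} together with the EB-channel reformulation discussed immediately after it (via Theorem~2 of \cite{holevo2005separability}), a set $\hM$ being normally contextuality non-confirming means there is a single EB-channel $\Lambda_{EB}$ whose Heisenberg dual $\Lambda_{EB}^*$ fixes every effect of every POVM in $\hM$; that is, $E(\hM) \subset \fix{\Lambda_{EB}^*}$. Thus the entire measurement data is contained in the one fixed-point set $\hA := \fix{\Lambda_{EB}^*} \subset \lh$.

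The second step is to identify the structure of $\hA$. By Proposition~\ref{propwstar} and Corollary~\ref{corollaryeb}, the fixed-point set of the Heisenberg picture of an EB-channel is a \emph{commutative von Neumann algebra}; I would invoke this directly, so that $\hA$ is a weak$^*$-closed, unital, self-adjoint, commutative subalgebra of $\lh$. The third step is the classification: every commutative von Neumann algebra is $*$-isomorphic to some $L^\infty(\Omega,\mu)$ by \cite[Proposition~1.18.1]{Sakai1998}. Let $\Phi: \hA \to L^\infty(\Omega,\mu)$ denote such a $*$-isomorphism.

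It then remains to explain in what sense $\hM$ is embedded. Since $E(\hM) \subset \hA$, the isomorphism $\Phi$ restricts to an injective, positive, unital, order-preserving map on the effects, sending each POVM $M: \hB(\Omega_M) \to \hA$ to a function-valued object $\Phi \circ M: \hB(\Omega_M) \to L^\infty(\Omega,\mu)$. Because $\Phi$ is an isomorphism of von Neumann algebras it is automatically normal, hence it preserves suprema of bounded increasing nets and therefore carries $\sigma$-additive resolutions of the identity to $\sigma$-additive ones; the image is a genuine classical, i.e.\ commuting, measurement. This is exactly the content of the Observation, and it accounts for the parenthetical imprecision: the embedding is the $*$-isomorphism $\Phi$ restricted to $E(\hM)$, and one must specify that it is the algebraic/order structure (and normality) that is preserved.

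The main obstacle, and the reason the Observation is flagged as ``a bit imprecisely formalised,'' lies entirely in the second step --- establishing that $\fix{\Lambda_{EB}^*}$ is not merely a self-adjoint operator system but is closed under multiplication and hence a von Neumann algebra. This is the substance of Proposition~\ref{propwstar}, which I am treating as given. Its proof requires more than positivity and unitality of $\Lambda_{EB}^*$, since for a general unital Schwarz map the fixed points need not form an algebra; it relies on the additional measure-and-prepare / broadcasting structure, together with the existence of a faithful invariant normal state, to saturate the Schwarz inequality on fixed points and thereby turn the fixed-point set into a multiplicative domain. Everything downstream --- the classification and the passage to $L^\infty(\Omega,\mu)$ --- is then standard.
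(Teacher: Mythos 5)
Your overall route is the same as the paper's: (i) normal contextuality non-confirmation puts $E(\hM)$ inside the fixed-point set of the Heisenberg dual of an EB-channel, (ii) that fixed-point set is a commutative von Neumann algebra by Proposition~\ref{propwstar} and Corollary~\ref{corollaryeb}, (iii) Sakai's classification identifies it with some $L^\infty(\Omega,\mu)$. Steps (i) and (iii) and your discussion of what ``embedded'' should mean are fine and match the paper.

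However, your description of step (ii) contains a concrete error. You assert that $\hA=\fix{\Lambda_{EB}^*}$ is ``a weak$^*$-closed, unital, self-adjoint, commutative subalgebra of $\lh$,'' obtained by saturating the Schwarz inequality so that the fixed points become a multiplicative domain, using a faithful invariant normal state. None of this is how Proposition~\ref{propwstar} works, and the first claim is false in general: the paper explicitly points out that the broadcasting product $A\bullet B=\psi_0(\Phi^*(A\otimes B))$ is \emph{not} in general the operator product of $\lh$, and that the noncommutative broadcastable sets of \cite{jokinen2024nobroadcasting} already show that fixed-point sets of EB-channels need not be subalgebras of $\lh$. The von Neumann algebra in question is an \emph{abstract} one, with a new product supplied by the symmetric broadcasting channel and the idempotent $\psi_0$ of Lemma~\ref{lemmafixproj} (equivalently the Choi--Effros product $\psi_0(AB)$, by Proposition~\ref{propproduct}); the hypotheses you invoke (faithful full-rank invariant state, multiplicativity in the $\lh$-product) correspond to the special situations of Corollary~\ref{corfullrankfixstate} and Theorem~\ref{thmlhalgebra}, under which the algebra is forced to be completely atomic --- a much stronger conclusion than the Observation needs. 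The Observation itself survives because Sakai's result applies to abstract von Neumann algebras and the positive cone of the broadcasting algebra coincides with the one inherited from $\lh$, so the composite map $E(\hM)\hookrightarrow(\fix{\Phi_0^*},\bullet)\to L^\infty(\Omega,\mu)$ is still injective, positive, unital and order-preserving; but as written, your identification of $\hA$ with a commutative subalgebra of $\lh$ in the ordinary operator product is a step that fails.
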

In the following we shall see that the implication in the above Observation can not be reversed. However, for the non-normal definition, we show later that such embeddability in the case of an algebra on a separable Hilbert space is sufficient for the inverse implication.

To elaborate on further restrictions that measurements obeying a noncontextual ontological model must fulfill, we present the following results, the proofs of which are in the Section "Algebraic structure of broadcasting". Recall that the effects of the set of POVMs $\hM$ is denoted by $E(\hM)$. 
\begin{theorem}\label{thmmeaschar}
    Let $\hM \subset \hO$ be a set of POVMs. Then for the following statements
    \begin{enumerate}
        \item $\hM$ is (normal) contextuality non-confirming. \label{d_ont}
        \item $E(\hM) \subset \fix{\Lambda_{EB}^*}$ for some EB-channel $\Lambda_{EB}^*$. \label{d_eb}
        \item $\hM$ is broadcastable. \label{d_br}
        \item All $M \in \hM$ decompose into broadcastable "normal" and "singular" subnormalized POVMs: $M=M_n\oplus M_s$, and all $M_n$ can be post-processed from a single discrete eigenvalue-1 POVM. \label{d_dec}
    \end{enumerate}
    the implications $\eqref{d_ont} \Leftrightarrow \eqref{d_eb} \Rightarrow \eqref{d_br} \Leftrightarrow \eqref{d_dec}$ hold. Furthermore, if there is a POVM $M \in \hM$ with no singular part, $M_s=0$ , then all of these statements are equivalent.
\end{theorem}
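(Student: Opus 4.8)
The plan is to establish the cycle $\eqref{d_ont}\Leftrightarrow\eqref{d_eb}\Rightarrow\eqref{d_br}\Leftrightarrow\eqref{d_dec}$ and then close it under the extra hypothesis, reducing each arrow either to an unpacking of Definition~\ref{defnormal} or to the commutative von Neumann algebra structure proven in Proposition~\ref{propwstar} and Corollary~\ref{corollaryeb}. For $\eqref{d_ont}\Leftrightarrow\eqref{d_eb}$ I would read Definition~\ref{defnormal} in the Heisenberg picture: by Theorem~2 of \cite{holevo2005separability} the data $(G,\{\sigma_\lambda\})$ in Eq.~\eqref{eqdefinitionnorm} defines an EB-channel $\Lambda_{EB}(\rho)=\int_\Omega \sigma_\lambda\,d(\tr{\rho G(\lambda)})$ with adjoint $\Lambda_{EB}^{*}(A)=\int_\Omega\tr{\sigma_\lambda A}\,G(d\lambda)$, so that the defining identity reads $\tr{\rho M(X)}=\tr{\Lambda_{EB}(\rho)M(X)}=\tr{\rho\,\Lambda_{EB}^{*}(M(X))}$ for all $\rho$. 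This is equivalent to $M(X)\in\fix{\Lambda_{EB}^{*}}$, and imposing it on all effects of all $M\in\hM$ is precisely statement~\eqref{d_eb}.

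For $\eqref{d_eb}\Rightarrow\eqref{d_br}$ I would broadcast by duplicating the re-preparation, setting $\Lambda_{br}(\rho)=\int_\Omega \sigma_\lambda\otimes\sigma_\lambda\,d(\tr{\rho G(\lambda)})$. This is a valid measure-and-prepare channel whose two marginals both equal $\Lambda_{EB}$, and the broadcasting identities follow immediately from $E(\hM)\subset\fix{\Lambda_{EB}^{*}}$, since $\tr{M(X)\,\mathrm{tr}_2\Lambda_{br}(\rho)}=\tr{\Lambda_{EB}^{*}(M(X))\rho}=\tr{M(X)\rho}$ and symmetrically for the first marginal.

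The equivalence $\eqref{d_br}\Leftrightarrow\eqref{d_dec}$ is where the real content sits, and I would feed it the structural fact that the fixed points of a broadcasting marginal form a commutative von Neumann algebra isomorphic to some $L^\infty(\Omega,\mu)$ (Proposition~\ref{propwstar}, Corollary~\ref{corollaryeb}). Since every broadcastable effect lies in this algebra, splitting $\mu$ (equivalently the algebra) into its atomic and diffuse parts induces the decomposition $M=M_n\oplus M_s$ of each POVM; the atomic part is generated by minimal projections which assemble into a single discrete eigenvalue-$1$ POVM from which every $M_n$ is recovered by post-processing, while the diffuse part carries the singular pieces $M_s$. The converse reassembles a broadcasting channel blockwise from these two ingredients.

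Finally, for the ``furthermore'' clause I would supply the missing arrow $\eqref{d_br}\Rightarrow\eqref{d_eb}$ under the assumption $M_s=0$ for some $M\in\hM$. The key observation is that in the decomposition of \eqref{d_dec} the normalization $M_s(\Omega_M)$ equals the central projection onto the diffuse part of the common commutative algebra; hence $M_s=0$ forces that projection to vanish and the algebra to be purely atomic. Its minimal projections together with the associated conditional states then furnish a genuine normal POVM $G$ and states $\sigma_\lambda$ realising the EB-channel required in \eqref{d_eb}. I expect the main obstacle to be exactly this closing step together with $\eqref{d_br}\Leftrightarrow\eqref{d_dec}$: one must show that a diffuse (singular) part, although broadcastable, can never be the fixed-point set of a normal EB-channel---this is the mechanism behind Example~\ref{exposition} via \cite{Jokinen_2024}---so that a normal model exists precisely when the atomic case is enforced.
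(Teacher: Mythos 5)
The first half of your plan matches the paper: $\eqref{d_ont}\Leftrightarrow\eqref{d_eb}$ is indeed just Definition~\ref{defnormal} read through Theorem~2 of \cite{holevo2005separability}, and $\eqref{d_eb}\Rightarrow\eqref{d_br}$ is exactly the symmetric measure-and-prepare channel of Eq.~\eqref{eqsymmetriceb}. The problem is in $\eqref{d_br}\Leftrightarrow\eqref{d_dec}$ and the closing step, where you base the splitting $M=M_n\oplus M_s$ on the atomic/diffuse decomposition of the commutative broadcasting algebra $L^\infty(\Omega,\mu)$. That is not the decomposition the theorem needs, and the two genuinely differ. The paper's $M_n,M_s$ are defined by $M_{n/s}(X)=\psi_{n/s}(I)\bullet M(X)$, where $\psi_0=\psi_n\oplus\psi_s$ is the \emph{normal/singular} (Takesaki) decomposition of the fixed-point conditional expectation obtained from its minimal Stinespring dilation (Proposition~\ref{propsingornorm}); ``singular'' here means annihilating the compacts, i.e.\ failure of ultraweak continuity, which is a property of the conditional expectation, not of the algebra. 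Example~\ref{exqfunction} is the explicit witness that these notions come apart: the $Q$-function channel has broadcasting algebra $\mathbb{C}I$, which is completely atomic, yet its fixed-point projection is completely singular. Your identification ``diffuse $=$ singular'' would put that example entirely in the normal part, which is false.

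This misidentification is not cosmetic, because the content of $\eqref{d_dec}$ --- and the whole mechanism of the ``furthermore'' clause --- is the existence of genuine density operators $\sigma_i$ with $\tr{\sigma_i G_j}=\delta_{ij}$ realising $M_n$ as a fixed point of a normal EB-channel. In the paper these arise as $\fii_i\circ\psi_n$, which are ultraweakly continuous precisely because $\psi_n$ is normal (Theorem~\ref{thmfixpchar}); atomicity of the algebra alone does not supply them, since composing the algebra states with a singular $\psi_0$ gives non-normal functionals. Your plan would therefore stall exactly at producing the $\sigma_i$, and likewise your closing argument (``$M_s=0$ forces the diffuse projection to vanish'') needs to be replaced by: $M_s=0$ gives $I=M(\Omega_M)\le\psi_n(I)$, hence $\psi_n(I)=I$ and $\psi_s=0$ by positivity, after which Eq.~\eqref{eqfixpointdecomp} is literally an EB-channel fixing $E(\hM)$. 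A secondary point you would also need to check is that the pieces $M_{n/s}$ are themselves $\sigma$-additive (subnormalized) POVMs; the paper gets this from separate weak$^*$-continuity of the von Neumann algebra product.
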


The point \ref{d_dec}. in the previous theorem thus implies that the contextuality non-confirmation property enforces a type of discreteness on the POVMs of $\hM$, i.e., the normal parts can be post-processed from a discrete POVM. This is especially apparent if we consider only sharp measurements, which is illustrated in the next corollary. 
\begin{corollary}\label{corpvmdiscrete}
    If $\hM$ is a set of PVMs, then all the conditions of Theorem \ref{thmmeaschar} are equivalent. Furthermore all $P$ satisfying these conditions are discrete and mutually commutative. 
\end{corollary}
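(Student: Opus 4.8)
The plan is to lean on the implications $\eqref{d_ont}\Leftrightarrow\eqref{d_eb}\Rightarrow\eqref{d_br}\Leftrightarrow\eqref{d_dec}$ already supplied by Theorem~\ref{thmmeaschar}, so that the only thing missing for full equivalence is a reverse arrow, which I would obtain by proving that when every $M\in\hM$ is a PVM, broadcastability forces each such $P$ to be \emph{discrete}; a discrete commuting PVM then has all its effects fixed by a single EB-channel, closing the loop back to \eqref{d_eb}. Mutual commutativity is the cheap part: by condition~\eqref{d_br} together with Proposition~\ref{propwstar} and Corollary~\ref{corollaryeb}, all effects of all POVMs in $\hM$ lie in the common commutative von Neumann algebra formed by the fixed points of the broadcasting marginal, so the projections $P(X)$ pairwise commute.

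The heart of the argument is a discreteness lemma for the normal part. By condition~\eqref{d_dec} each PVM splits as $P=P_n\oplus P_s$, where $P_n$ is post-processed from a single discrete eigenvalue-$1$ POVM $\{G_i\}_i$, i.e.\ $P_n(X)=\sum_i\mu_i(X)G_i$ for probability measures $\mu_i$. Since each $G_i$ has $1$ as an eigenvalue while $\sum_iG_i=I$ with $G_i\ge0$, I would extract unit vectors $\psi_i$ with $G_i\psi_i=\psi_i$ and, from $1=\langle\psi_i,\psi_i\rangle=\sum_j\langle\psi_i,G_j\psi_i\rangle$, conclude $G_j\psi_i=0$ for $j\neq i$. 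Then $P_n(X)\psi_i=\mu_i(X)\psi_i$, so $\mu_i(X)$ is an eigenvalue of the projection $P_n(X)$ and therefore lies in $\{0,1\}$. A $\{0,1\}$-valued probability measure on a standard Borel space is a point mass, so $\mu_i=\delta_{\omega_i}$ and $P_n(X)=\sum_{i:\,\omega_i\in X}G_i$ is carried by the countable set $\{\omega_i\}_i$; hence $P_n$ is a discrete PVM.

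It remains to dispose of the singular part, which I expect to be the main obstacle, since controlling it requires the normal/singular decomposition constructed in the Section ``Algebraic structure of broadcasting'' rather than condition~\eqref{d_dec} alone. The goal is to show $P_s=0$: the singular summand is a projection-valued measure on a separable subspace whose representing ontic data is continuous, and projectivity makes it a continuous extremal measurement; the mechanism behind Example~\ref{exposition}, namely that a continuous extremal PVM all of whose effects are EB-channel fixed points forces the ambient Hilbert space to be non-separable (Theorem~1 of \cite{Jokinen_2024}), should then rule out $P_s\neq0$. The delicate step is precisely matching the abstract ``singular'' data to the hypotheses of that non-separability result, i.e.\ arguing that a nontrivial singular response is incompatible with projectivity on a separable space. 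Granting $P_s=0$, the PVM $P=P_n$ is discrete, and for the whole commuting family I would pass to a common refinement $\{R_i\}_i$ and set $\Lambda_{EB}^{*}(A)=\sum_i\tr{\sigma_i A}R_i$ with $\sigma_i$ a state supported in the range of $R_i$; a direct check gives $\Lambda_{EB}^{*}(P(X))=P(X)$, establishing \eqref{d_eb} and thus the full equivalence, while discreteness and mutual commutativity are obtained along the way.
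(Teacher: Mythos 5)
Your overall plan is sound and your eigenvector argument for the normal part is a nice elementary way to see that $P_n$ is carried by a countable set, but the proof has two genuine gaps. First, the ``cheap'' commutativity step is not cheap: Proposition \ref{propwstar} makes $\fix{\Phi_0^*}$ commutative with respect to the broadcasting product $\bullet$, which is \emph{not} the operator product of $\lh$ (Proposition \ref{propproduct} only gives $A\bullet B=\psi_0(AB)$), so $P\bullet Q=Q\bullet P$ does not by itself yield $PQ=QP$. The paper closes this by proving $QP=Q\bullet P$ for fixed-point projections, using bipositivity of $\bullet$ and a polar-decomposition argument to kill $(Q\bullet P^\perp)P$; without that identity your commutativity claim is unsupported, and it is also exactly what is needed to place the generated von Neumann algebra inside $\fix{\Phi_0^*}$.

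Second, and more seriously, the step you defer --- showing $P_s=0$ --- is false as stated, so no amount of matching hypotheses to Theorem 1 of \cite{Jokinen_2024} will rescue it. Take the shift channel \eqref{eqshiftchannel}: there $\psi_n=0$, so any broadcastable PVM (e.g.\ the trivial one, with $P(\Omega)=I$) satisfies $P=P_s\neq 0$, yet is perfectly discrete and contextuality non-confirming. The resolution is that the EB-channel witnessing \eqref{d_eb} need not be related to the normal part of the \emph{original} broadcasting structure at all. The paper's route: once $QP=Q\bullet P$ is known, the von Neumann algebra $\hN$ generated by the PVMs is a unital subalgebra of $\lh$ contained in $\fix{\Phi_0^*}$; Theorem \ref{thmlhalgebra} then forces $\hN$ to be completely atomic; separability of $\hi$ makes the atoms $\{P_i\}_i$ countable with every projection of $\hN$ a sum of atoms; and the fresh EB-channel $A\mapsto\sum_i\tr{\sigma_iA}P_i$, with $\sigma_i$ supported in $P_i$, fixes every projection of $\hN$. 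This delivers \eqref{d_br}$\Rightarrow$\eqref{d_eb}, discreteness, and mutual commutativity all at once, and entirely bypasses the normal/singular decomposition on which your argument gets stuck.
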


Note that in the preceding corollary if the measure spaces are assumed to be more general (i.e. not all atoms are singlets), "discrete" can just be replaced with "purely atomic". A \emph{purely atomic} POVM $E: \hB(\Omega_E) \to \lh$ here is a POVM such that $\Omega_E$ is a countable union of \emph{atoms} of $E$, where an atom is a set $A \in \hB(\Omega_E)$ with $E(A)\neq0$ and such that\ $B \subset A$ implies $E(B)=0$ or $E(B)=E(A)$ for measurable $B$. We emphasize here that the separability of the Hilbert space $\hi$ is essential in Corollary \ref{corpvmdiscrete}. This is because the discreteness of a measure requires a \emph{countable} decomposition of the outcome space to atoms. If $\hi$ would be allowed to be nonseparable, then there could be an uncountable amount of minimal projections (atoms) in the von Neumann algebra generated by a PVM, thus not guaranteeing pure atomicity.  \par 
Corollary \ref{corpvmdiscrete} leads to an interesting conclusion that only appears when considering continuous-variable measurements. Indeed, this result forces contextuality non-confirming sharp measurements to be embeddable to the classical broadcasting algebra \emph{and} discrete. Taking the negation of this, we get the following (a bit imprecisely formalised) observation.
\begin{observation}
    Sharp measurements are (normal) contextuality confirming if and only if they are not embeddable to a classical function space \textbf{or} not discrete.
\end{observation}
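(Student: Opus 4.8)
The plan is to read this Observation as the logical contrapositive of Corollary~\ref{corpvmdiscrete}, under the standard identification of ``embeddable into a classical function space'' with ``mutually commutative'': a set of PVMs generates a commutative von Neumann algebra if and only if its effects can be represented inside some $L^\infty(\Omega,\mu)$ (the structure theorem for commutative von Neumann algebras, \cite[Proposition 1.18.1]{Sakai1998}, already invoked before the first Observation). Writing the claim in its negated form, it suffices to prove that a set $\hM$ of sharp measurements is normal contextuality non-confirming if and only if it is both discrete and embeddable into a classical function space.

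First I would dispatch the easy direction. If $\hM$ is non-confirming, then by Corollary~\ref{corpvmdiscrete} all conditions of Theorem~\ref{thmmeaschar} hold; condition (\ref{d_eb}) together with Proposition~\ref{propwstar} and Corollary~\ref{corollaryeb} places every effect of $\hM$ inside the commutative fixed-point algebra of an EB-channel, giving embeddability into $L^\infty(\Omega,\mu)$, while Corollary~\ref{corpvmdiscrete} itself supplies discreteness. Taking the contrapositive yields exactly the ``if'' part of the Observation: a set of PVMs that is non-commutative or non-discrete must be contextuality confirming.

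For the converse I would construct the required EB-channel explicitly and invoke the implication (\ref{d_eb})$\Rightarrow$(\ref{d_ont}) of Theorem~\ref{thmmeaschar}. Mutual commutativity makes the generated von Neumann algebra $\hA$ commutative, and discreteness makes it atomic; since $\hi$ is separable, any family of mutually orthogonal nonzero projections is countable, so the atoms of $\hA$ form a countable family $\{P_i\}$ of orthogonal projections with $\sum_i P_i = I$. Choosing for each $i$ a state $\sigma_i$ supported in $P_i\hi$, I define the measure-and-prepare channel $\Lambda_{EB}(T)=\sum_i \tr{P_i T}\,\sigma_i$, which is trace-preserving and completely positive, hence entanglement breaking. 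Because $\tr{\sigma_i P_j}=\delta_{ij}$, its dual $\Lambda_{EB}^*(A)=\sum_i \tr{\sigma_i A}\,P_i$ fixes every effect $P(X)=\sum_{i\in I_X}P_i$ of every $P\in\hM$, so $E(\hM)\subset\fix{\Lambda_{EB}^*}$ and $\hM$ is non-confirming.

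The main obstacle is precisely the step where discreteness is turned into atomicity of the \emph{joint} algebra $\hA$, and this is where the ``a bit imprecise'' qualifier bites: discreteness of each individual PVM does not by itself force $\hA$ to be atomic. For instance, the two-outcome multiplication PVMs reading off successive binary digits on $L^2([0,1])$ are each discrete and mutually commute, yet jointly generate the non-atomic algebra $L^\infty([0,1])$; by condition (\ref{d_dec}) such a family cannot be post-processed from a single discrete POVM and is therefore confirming. The clean biconditional thus requires ``discrete'' to be read at the level of the joint generated algebra (equivalently, that the $L^\infty(\Omega,\mu)$ of the embedding carries a purely atomic measure), and separability of $\hi$ is essential throughout, exactly as flagged in the remark following Corollary~\ref{corpvmdiscrete}.
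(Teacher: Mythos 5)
Your reading coincides with the paper's: the Observation is presented there with no separate proof, simply as ``the negation'' of the statement that normal contextuality non-confirming sharp measurements are embeddable into the classical broadcasting algebra \emph{and} discrete, i.e.\ as the contrapositive of Corollary~\ref{corpvmdiscrete} combined with the structure theorem for commutative von Neumann algebras, so your forward direction is exactly the paper's argument. Where you go beyond the paper is the converse implication (``embeddable and discrete $\Rightarrow$ non-confirming''), which the paper never makes explicit, and your diagnosis there is correct and worth recording: the binary-digit multiplication PVMs on $L^2([0,1])$ are each two-outcome (hence discrete) and mutually commuting, yet they jointly generate the diffuse algebra $L^\infty([0,1])$, which by the proof of Corollary~\ref{corpvmdiscrete} (the von Neumann algebra generated by a non-confirming family of PVMs must be completely atomic, via Theorem~\ref{thmlhalgebra}) forces that family to be confirming. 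So the literal member-by-member reading of ``discrete'' breaks the ``only if'' direction, and the biconditional holds only when discreteness is read at the level of the jointly generated algebra (equivalently, atomicity of the measure in the $L^\infty(\Omega,\mu)$ embedding) — precisely the content of the paper's ``a bit imprecisely formalised'' caveat. Your explicit channel $\Lambda_{EB}(T)=\sum_i \tr{P_i T}\,\sigma_i$ built from the atoms, with dual $\Lambda_{EB}^*(A)=\sum_i \tr{\sigma_i A}P_i$, is the same construction that appears inside the proof of Corollary~\ref{corpvmdiscrete} and does close that direction once atomicity of the joint algebra is granted.
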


This implies that there can exist classically embeddable measurements proving the contextuality of quantum theory, cf. Example \ref{exposition}. However, if one wants to find a definition of contextuality in which measurements embeddable to classical space are contextuality non-confirming, then one could expand Definition \ref{defnormal}, as proposed before. This motivates the relaxing of Definition \ref{defnormal} to the realm outside of regular probability theory.

\subsection{Algebraic structure of broadcasting}\label{sectiontechnical}
In order to characterize the normal contextuality non-confirming sets, we need to find out general facts about the fixed points of EB-channels. It turns out that the inherent symmetric broadcasting nature of EB-channels plays a significant role in the structure of the fixed points. Indeed, an EB-channel $\Lambda^*(A)=\int_\Omega\tr{\sigma_\lambda A} \, dG(\lambda)$ forms a symmetric (broadcasting) channel $\Phi^*:\hL(\hi \otimes \hi) \to \lh$ defined as follows:
\begin{align}\label{eqsymmetriceb}
    \Phi^*(A):=\int_\Omega \tr{(\sigma_\lambda \otimes \sigma_\lambda)A} \, dG(\lambda).
\end{align}
The channel $\Phi^*$ clearly is symmetric in simple tensors i.e. 
\begin{align}\label{eqsymmetry}
\Phi^*(A \otimes B)=\Phi^*(B \otimes A)
\end{align}
for all $A,B \in \lh$. We shall call any channel obeying equation  \eqref{eqsymmetry} for all $A,B \in \lh$ a \emph{symmetric channel}. Furthermore the broadcastable POVMs of $\Phi^*$ are exactly the fixed points of the EB-channel $\Lambda^*$, defined as $\Lambda^*(A)=\Phi^*(A \otimes I)=\Phi^*(I \otimes A)$ for all $A \in \lh$. We thus focus now a bit more generally on the fixed points of the marginals of a general symmetric channel. Let us denote the marginal of a symmetric channel $\Phi^*:\hL(\hi \otimes \hi) \to \lh$ by $\Phi_0^*:\lh \to \lh$. In other words for all $A \in \lh$, $\Phi_0^*(A):=\Phi^*(A \otimes I)=\Phi^*(I \otimes A)$. Note that broadcasting with a channel $\Theta^*:\hL(\hi \otimes \hi)\to \lh$ can always be done with a symmetric channel preserving the fixed points of the marginal. In other words one can define the channel $\Theta_{sym}^*=\frac{1}{2}(\Theta^*+\Theta^* \circ V_{swap})$, where $V_{swap}(A \otimes B)=B \otimes A$ is the unitary swap-channel. This motivates us to restrict our considerations to such channels in the following.\par
We now show a key structural result on the fixed points of $\Phi_0^*$: The set $\fix{\Phi_0^*}$ can be made into a \emph{commutative abstract von Neumann algebra} using broadcasting. An abstract von Neumann algebra $M$ is a $C^*$-algebra that has a predual, i.e., there exists a Banach space $M_*$ such that the dual $(M_*)^*$ is isometrically isomorphic to $M$. Every abstract von Neumann algebra has a normal $*$-isomorphism to some \emph{concrete von Neumann algebra} $N \subset \kh$ \cite{Sakai1998}. A concrete von Neumann algebra is a von Neumann subalgebra of some $\kh$. Here $\ki$ can be non-separable. For the rest of the paper, we will call abstract von Neumann algebras just von Neumann algebras for brevity. \par We begin by presenting the following lemma. For linear maps $\hE:\lh \to \lh$ we refer to the initial topology generated by $\hE \mapsto \tr{T\hE(A)}$ for all $A \in \lh, T \in \th$ as the point-ultraweak topology.
\begin{lemma}\label{lemmafixproj}
    Let $\Lambda^*:\lh \to \lh$ be a quantum channel. Then there exists an idempotent completely positive map $\psi_0:\lh \to \lh$ such that $\psi_0 \circ \Lambda^*=\psi_0$ and $\psi_0(\lh)=\fix{\Lambda^*}$ given as a limit of a point-ultraweak subnet of the sequence $\left( \frac{1}{n}\sum_{k=0}^{n-1} (\Lambda^*)^k\right)_{n \in \N} $.
\end{lemma}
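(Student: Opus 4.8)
The plan is to obtain $\psi_0$ as an ergodic mean of the iterates of $\Lambda^*$, using weak$^*$ compactness to extract the limit and then identifying its range with the fixed-point set \emph{without} appealing to normality of the limit map. First I would set $A_n:=\frac1n\sum_{k=0}^{n-1}(\Lambda^*)^k$ and record that, because $\Lambda^*$ is a unital completely positive map (the Banach-space adjoint of the channel $\Lambda:\th\to\th$), every $A_n$ is again unital and completely positive, hence contractive; in particular $A_n$ maps the closed ball of radius $\|A\|$ into itself for each $A\in\lh$. Identifying a map $\mathcal E:\lh\to\lh$ with the tuple $(\mathcal E(A))_{A\in\lh}$ places the sequence $(A_n)$ inside the product $\prod_{A\in\lh}\overline B_{\|A\|}$ of ultraweakly compact balls (here $\lh=\th^*$, so Banach--Alaoglu applies factorwise); by Tychonoff this product is compact and its topology is precisely the point-ultraweak one. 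Thus $(A_n)$ admits a subnet $(A_{n_\alpha})$ converging point-ultraweakly to a map $\psi_0$.

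Passing to the limit preserves linearity and unitality, and it preserves complete positivity because for all vectors $\{\xi_i\}$ and operators $\{B_i\}$ the expression $\sum_{i,j}\tr{\kb{\xi_j}{\xi_i}\,\mathcal E(B_i^*B_j)}$ is point-ultraweak continuous in $\mathcal E$ and nonnegative for every $A_n$; hence $\psi_0$ is a unital CP map.

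Next I would exploit the telescoping identity $A_n\circ\Lambda^*-A_n=\Lambda^*\circ A_n-A_n=\frac1n\big((\Lambda^*)^n-\mathrm{id}\big)$, which together with the uniform bound $\|(\Lambda^*)^n(A)\|\le\|A\|$ shows that both $A_n\circ\Lambda^*-A_n$ and $\Lambda^*\circ A_n-A_n$ tend to $0$ point-ultraweakly. Along the subnet this gives $\psi_0\circ\Lambda^*=\psi_0$; for the companion relation $\Lambda^*\circ\psi_0=\psi_0$ I would convert precomposition into a statement on the trace-class argument through the preadjoint identity $\tr{T\Lambda^*(X)}=\tr{\Lambda(T)X}$, so that $A_{n_\alpha}(A)\to\psi_0(A)$ ultraweakly yields $\Lambda^*(A_{n_\alpha}(A))\to\Lambda^*(\psi_0(A))$. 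The relation $\Lambda^*\circ\psi_0=\psi_0$ shows immediately that $\psi_0(\lh)\subset\fix{\Lambda^*}$.

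Finally, every $B\in\fix{\Lambda^*}$ satisfies $(\Lambda^*)^kB=B$, hence $A_nB=B$ for all $n$ and therefore $\psi_0(B)=\lim_\alpha A_{n_\alpha}(B)=B$; thus $\psi_0$ is the identity on $\fix{\Lambda^*}$. Combined with $\psi_0(\lh)\subset\fix{\Lambda^*}$ this yields both $\psi_0(\lh)=\fix{\Lambda^*}$ and the idempotency $\psi_0\circ\psi_0=\psi_0$. The main obstacle is precisely this last equality: a point-ultraweak limit of normal maps need not be normal, so one cannot derive $\psi_0\circ\psi_0=\psi_0$ from the valid but inconclusive identity $\psi_0\circ A_{n_\alpha}=\psi_0$ by continuity of $\psi_0$. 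The device that bypasses this is to establish separately that $\psi_0$ maps into $\fix{\Lambda^*}$ and fixes it pointwise --- the former via the preadjoint, the latter because the fixed points are already fixed by each averaging map $A_n$ --- after which idempotency is automatic.
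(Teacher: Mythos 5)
Your proof is correct and follows essentially the same route as the paper, which offers no argument of its own but simply cites \cite[Theorem 2.4]{Arias2002}: the Ces\`aro means $\tfrac1n\sum_{k=0}^{n-1}(\Lambda^*)^k$, a Banach--Alaoglu/Tychonoff compactness extraction in the point-ultraweak topology, and the key observation that one proves $\Lambda^*\circ\psi_0=\psi_0$ via the normality of $\Lambda^*$ (through the preadjoint) rather than attempting to use any continuity of the limit map $\psi_0$, after which idempotency and $\psi_0(\lh)=\fix{\Lambda^*}$ follow from $\psi_0$ acting as the identity on fixed points. Your closing remark correctly identifies the one genuinely delicate point of this standard argument.
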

\begin{proof}
    See for example \cite[Theorem 2.4]{Arias2002}.
\end{proof}

\begin{proposition}\label{propwstar}
    Let $\hi$ be a separable Hilbert space and $\Phi^*:\hL(\hi \otimes \hi) \to \lh$ a symmetric channel. Then for $\Phi^*_0:=\Phi^*( \cdot \otimes I)=\Phi^*(I \otimes \cdot )$,  there is a product defined on $\fix{\Phi_0^*}$ which makes it a commutative von Neumann algebra with respect to the operator norm and the involution inherited from $\lh$. 
\end{proposition}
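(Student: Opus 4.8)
The plan is to realise $\fix{\Phi_0^*}$ as the image of a single unital completely positive idempotent, equip that image with the Choi--Effros product so as to obtain a $C^*$-algebra, upgrade this to a von Neumann algebra by Sakai's duality theorem, and only then extract commutativity from the broadcasting symmetry \eqref{eqsymmetry}. First I would record that $\Phi_0^*$ is a normal unital completely positive map: complete positivity and ultraweak continuity are inherited from the channel $\Phi^*$, and $\Phi_0^*(I)=\Phi^*(I\otimes I)=I$. Applying Lemma~\ref{lemmafixproj} to the choice $\Lambda^*=\Phi_0^*$ then yields an idempotent completely positive $\psi_0:\lh\to\lh$ with $\psi_0(\lh)=\fix{\Phi_0^*}$ and $\psi_0\circ\Phi_0^*=\psi_0$; since every Cesàro average $\tfrac1n\sum_{k=0}^{n-1}(\Phi_0^*)^k$ is unital and the point-ultraweak limit preserves the value at $I$, the map $\psi_0$ is unital and $*$-preserving.

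Next I would put the Choi--Effros product
\[
a\circ b:=\psi_0(ab),\qquad a,b\in\fix{\Phi_0^*},
\]
on the fixed-point set. By the Choi--Effros theorem for unital completely positive idempotents, $(\fix{\Phi_0^*},\circ)$ is a $C^*$-algebra for the operator norm and involution inherited from $\lh$. To see it is a von Neumann algebra I would argue that $\fix{\Phi_0^*}$, being the fixed-point set of the normal map $\Phi_0^*$, is ultraweakly closed, hence a weak$^*$-closed subspace of the dual space $\lh=\th^*$; a weak$^*$-closed subspace of a dual Banach space is again a dual Banach space (it is the dual of the quotient of $\th$ by the pre-annihilator), so $\fix{\Phi_0^*}$ has a predual. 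By Sakai's characterisation a $C^*$-algebra whose underlying Banach space is a dual space is a $W^*$-algebra with separately weak$^*$-continuous multiplication \cite{Sakai1998}, so $(\fix{\Phi_0^*},\circ)$ is an abstract von Neumann algebra.

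It remains to prove commutativity, and this is where the symmetry enters. Because $\psi_0$ is $*$-preserving, for self-adjoint $a,b\in\fix{\Phi_0^*}$ we have $b\circ a=(a\circ b)^*$, so commutativity is equivalent to $\psi_0(ab)=\psi_0(ba)$, i.e.\ $\psi_0([a,b])=0$. Writing $d_x:=x\otimes I-I\otimes x$, expanding $[a,b]\otimes I=[a\otimes I,b\otimes I]$, and using the symmetry $\Phi^*(a\otimes b)=\Phi^*(b\otimes a)$ to cancel the antisymmetric tensor, one obtains
\[
\Phi_0^*([a,b])=\Phi^*\!\big((a\otimes I)\,d_b\big)-\Phi^*\!\big((b\otimes I)\,d_a\big),
\]
so the whole problem collapses to annihilating the broadcasting disturbance, namely to the identity $\psi_0\,\Phi^*\!\big((a\otimes I)d_b\big)=0$, equivalently $\psi_0(ab)=\psi_0\,\Phi^*(a\otimes b)$ for all fixed points. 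To establish this I would invoke the mean ergodic theorem: $\psi_0$ annihilates the closed range of $I-\Phi_0^*$, so it suffices to pair $\Phi_0^*(ab)-\Phi^*(a\otimes b)$ against an arbitrary $\Phi_0^*$-invariant normal functional $\phi$. For such $\phi$ the functional $\Psi:=\phi\circ\Phi^*$ is a symmetric state on $\hL(\hi\otimes\hi)$ whose two marginals both equal $\phi$, and the desired equality reads $\Psi(a\otimes b)=\phi(ab)$ on the broadcast effects $a,b$.

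The hard part will be precisely this last equality: it is the infinite-dimensional, general-symmetric-channel incarnation of \emph{broadcastable $\Rightarrow$ commutative}, asserting that a symmetric state reproducing $\phi$ on both marginals must assign to $a\otimes b$ the ``classical'' correlation $\phi(ab)$ whenever $a,b$ are broadcast (fixed) effects. I expect two genuine difficulties here. The first is that $\psi_0$ need not be normal, so the ergodic pairing has to be set up against honestly $\Phi_0^*$-invariant normal functionals, whose supply must be shown to be rich enough to separate $\fix{\Phi_0^*}$. The second is the possible absence of a faithful invariant state (as already flagged by the shift-channel example), which blocks any direct modular-theoretic argument and forces the Kadison--Schwarz equality analysis of the Stinespring dilation of $\Phi^*$ to be carried through on the level of individual invariant states rather than globally.
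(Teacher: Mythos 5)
There is a genuine gap: your argument never establishes commutativity, and you say so yourself. By taking the Choi--Effros product $a\circ b=\psi_0(ab)$ as the primary definition you get the $C^*$- and $W^*$-structure essentially for free (your steps here are correct: Lemma \ref{lemmafixproj}, Choi--Effros, ultraweak closedness plus Sakai), but you have pushed all of the difficulty into the single identity $\psi_0(ab)=\psi_0(\Phi^*(a\otimes b))$ for $a,b\in\fix{\Phi_0^*}$, which is exactly what would let the symmetry $\Phi^*(a\otimes b)=\Phi^*(b\otimes a)$ kill the commutator. You correctly identify this as ``the hard part'' and list the obstructions (non-normality of $\psi_0$, absence of a faithful invariant state), but you do not overcome them, so the proof is incomplete. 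Worse, the paper's own proof of this identity (Proposition \ref{propproduct}) goes through a minimal Stinespring dilation of $\psi_0$, shows $\ker\psi$ is an ideal, and crucially uses that every element of the fixed-point algebra is a norm limit of linear combinations of its projections --- a fact it extracts from \emph{already knowing} that $\fix{\Phi_0^*}$ is a commutative von Neumann algebra under the broadcasting product. So importing that argument into your route would be circular.

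The paper sidesteps the problem by reversing the order of difficulty: it defines the product directly as $A\bullet B:=\psi_0(\Phi^*(A\otimes B))$, which is \emph{manifestly} commutative by the symmetry of $\Phi^*$, and then spends its effort showing this bilinear map is a bona fide $C^*$-product. That is done by checking bipositivity and unitality ($\mathsf{B}(A,I)=A=\mathsf{B}(I,A)$ follows from $\psi_0\circ\Phi_0^*=\psi_0$), invoking Proposition 3 of \cite{kuramochi20} for associativity and uniqueness, and verifying the $C^*$-identity via the multiplicativity of extreme states ($\phi(A\bullet B)=\phi(A)\phi(B)$), before concluding with the same predual argument you use. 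If you want to salvage your approach, you would need an independent proof that $\psi_0(ab)=\psi_0(\Phi^*(a\otimes b))$ on fixed points that does not presuppose the algebra structure; the cleaner fix is simply to adopt the broadcasting product from the start.
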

\begin{proof}
   The idea is to use Proposition 3 of \cite{kuramochi20}.  We first show that $\fix{\Phi_0^*}$ is a Banach space. But this is clear since the set of fixed points of any bounded linear map $\Xi:\lh \to \lh$ is the continuous preimage of a closed set: $\fix{\Xi}=(\Xi-\mathrm{id})^{-1}(\{0\})$. \par 
    Next we define the product in this space. Now since $\Phi^*$ is normal, so is $\Phi_0^*$. Let then $\psi_0:\lh \to \lh$ be the map given by Lemma \ref{lemmafixproj}. With this we define the bilinear map $\mathsf{B}:\fix{\Phi_0^*} \times \fix{\Phi_0^*} \to \fix{\Phi_0^*} $ by 
    \begin{align}
        \mathsf{B}(A,B):=\psi_0(\Phi^*(A \otimes B))
    \end{align}
    for all $(A,B) \in \fix{\Phi_0^*} \times \fix{\Phi_0^*} $.  Now if $A,B\geq 0$, then $\mathsf{B}(A,B)\geq 0$, because $\psi_0$ and $\Phi^*$ are completely positive. Furthermore for any $A \in \fix{\Phi_0^*}$  we have
    \begin{align}
        \mathsf{B}(A,I)=\psi_0(\Phi^*(A \otimes I))=A=\psi_0(\Phi^*(I \otimes A))=\mathsf{B}(I,A).
    \end{align}
    In fact $\mathsf{B}$ is a bilinear map fulfilling the conditions of Proposition 3 in \cite{kuramochi20} and hence $\mathsf{B}$ is the unique associative bipositive product on $\fix{\Phi_0^*}$.  \par 
    Thus we can now define the product $$A \bullet B=\mathsf{B}(A,B)=\psi_0(\Phi^*(A \otimes B))$$ for all $A,B \in \fix{\Phi_0^*}$. Furthermore we see that $\fix{\Phi_0^*}$ is a commutative unital $C^*$-algebra with respect to the product $\bullet$ with the operator norm and involution inherited from $\lh$. This is indeed a proper involution in the product $\bullet$, since the (completely) positive maps $\psi_0$, $\Phi^*$ preserve the adjoint. The Banach algebra condition is easy to see: $\Vert A \bullet B \Vert = \Vert \psi_0(\Phi^*(A \otimes B))\Vert \leq \Vert A \otimes B \Vert =\Vert A \Vert \Vert B \Vert$. The $C^*$-condition $\Vert A^* \bullet A \Vert =\Vert A\Vert^2 $ can be seen using the fact that for all extreme states $\phi$ in $\fix{\Phi_0^*}^*$ we have that for all $A,B \in \fix{\Phi_0^*}$
    \begin{align}
        \phi(\mathsf{B}(A,B))=\phi(A)\phi(B) \label{eqcharacter}
    \end{align}
    This is noted in the proof of Proposition 3 in  \cite{kuramochi20}, but we will give the proof of equation \eqref{eqcharacter} for the readers convenience. For this, let $\phi$ be an extreme state and $0\leq E \leq 1$, $E \in \fix{\Phi_0^*}$ be such that $0<\phi(E)<1$. Then for all $A \in \fix{\Phi_0^*}$ we have that
    \begin{align}
        \phi(A)=\phi(E)\frac{\phi(A \bullet E)}{\phi(E)}+\phi( E^\perp)\frac{\phi(A \bullet E^\perp)}{\phi( E^\perp )}
    \end{align}
    Thus by extremality we have that $\phi(A \bullet E)=\phi(A)\phi(E)$. Furthermore if $\phi(E)=0$, then $\phi(A)=\phi(A \bullet E^\perp)$, which implies $\phi(A \bullet E)=0=\phi(A)\phi(E) $. If $\phi(E)=1$, then $\phi(A \bullet E)=\phi(A)=\phi(A)\phi(E)$. Thus for all effects $E$, equation \eqref{eqcharacter} holds. Since this holds for all effects, it holds for all elements in $\fix{\Phi_0^*}$. \par  
    Using this condition we see that $\phi(A^* \bullet A)=|\phi(A)|^2$, from which the $C^*$-condition follows by optimizing over all extreme states. The other conditions for $\fix{\Phi_0^*}$ to be a $C^*$-algebra are trivial. We also note that the positive cone of $\fix{\Phi_0^*}$ (in the usual order of $\lh$) is equal to the $C^*$-algebra positive cone. In other words we want to show that $A \in \fix{\Phi_0^*}, A\geq 0 \Leftrightarrow A=B^* \bullet B$ for some $B \in \fix{\Phi_0^*}$. The "($\Leftarrow$)" is easily seen with equation \eqref{eqcharacter}. The "($\Rightarrow$)"-direction can be seen for example by using the isomorphism $\fix{\Phi_0^*}_s \cong C(\Omega)$ for some compact Hausdorff space $\Omega$ derived in \cite{kuramochi20}. Here  $\fix{\Phi_0^*}_s $ is the self-adjoint elements of the fixed points. Let $\Psi$ be the isomorphism in the proof of Proposition 3 in \cite{kuramochi20}. In the context of said Proposition, $\Psi$ is an order isomorphism, but from the proof one can easily see that $\Psi$ is actually also an algebraic isomorphism with respect to the product $\bullet$.  Now $A=\Psi^{-1}(f_A)$ for a continuous positive function $f_A$. Thus $\sqrt{f_A}$ is a positive continuous function, and $\Psi^{-1}(\sqrt{f_A}) \bullet \Psi^{-1}(\sqrt{f_A})=\Psi^{-1}(f_A)=A$. This proves the equivalence between the positive cones. \par 
    Therefore $\fix{\Phi_0^*}$ is a commutative unital $C^*$-algebra with the product $\bullet$.
    Furthermore, since $\Phi_0^*$ is normal, $\fix{\Phi_0^*}$ is an ultraweakly closed subspace in $\lh$. Then for example Proposition 2.12 of Chapter V in \cite{198897} implies that $\fix{\Phi_0^*}$ has a predual $\fix{\Phi_0^*}_*$ and is thus a von Neumann algebra. 
    \end{proof}
    For brevity, we shall call the algebra $(\fix{\Phi_0^*},\psi_0(\Phi^*(\cdot \otimes \cdot)))$ the \emph{broadcasting algebra}. \par 
    Since EB-channels define the symmetric channel \eqref{eqsymmetriceb}, we get the the following.
\begin{corollary}\label{corollaryeb}
    The (Heisenberg picture) fixed points of an entanglement breaking channel form a commutative von Neumann algebra. 
\end{corollary}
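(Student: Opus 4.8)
The plan is to derive this as an immediate consequence of Proposition~\ref{propwstar}, by realising the entanglement breaking channel as the marginal $\Phi_0^*$ of the symmetric channel $\Phi^*$ built in equation~\eqref{eqsymmetriceb}. Writing the EB-channel in the Heisenberg picture as $\Lambda^*(A)=\int_\Omega \tr{\sigma_\lambda A}\,dG(\lambda)$, I would set $\Phi^*(A)=\int_\Omega \tr{(\sigma_\lambda \otimes \sigma_\lambda)A}\,dG(\lambda)$ on $\hL(\hi \otimes \hi)$, which is manifestly symmetric in the sense of \eqref{eqsymmetry} since the integrand depends on $A$ only through the symmetric state $\sigma_\lambda \otimes \sigma_\lambda$.

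First I would check that $\Phi^*$ is a genuine (Heisenberg-picture) channel. Complete positivity is clear, as $\Phi^*$ is a measure-and-prepare map: each $A\mapsto \tr{(\sigma_\lambda \otimes \sigma_\lambda)A}$ is a $C^*$-algebraic state on $\hL(\hi \otimes \hi)$, and integrating these against the POVM $G$ yields a completely positive map. Unitality follows from the normalisation of the states and of $G$, namely $\Phi^*(I \otimes I)=\int_\Omega \tr{\sigma_\lambda}^2\,dG(\lambda)=\int_\Omega dG(\lambda)=G(\Omega)=I$.

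The key identification is that the marginal $\Phi_0^*$ recovers $\Lambda^*$ exactly. Computing, $\Phi_0^*(A)=\Phi^*(A \otimes I)=\int_\Omega \tr{\sigma_\lambda A}\,\tr{\sigma_\lambda}\,dG(\lambda)=\int_\Omega \tr{\sigma_\lambda A}\,dG(\lambda)=\Lambda^*(A)$, where the crucial input is $\tr{\sigma_\lambda}=1$; by symmetry the same holds with $\Phi^*(I \otimes A)$. Hence $\fix{\Lambda^*}=\fix{\Phi_0^*}$. Since $\hi$ is separable, Proposition~\ref{propwstar} applies to $\Phi^*$ and equips $\fix{\Phi_0^*}$ with a product making it a commutative von Neumann algebra with the operator norm and involution inherited from $\lh$, which is precisely the claim.

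There is essentially no serious obstacle here: all the structural work---defining the product via the idempotent map $\psi_0$ of Lemma~\ref{lemmafixproj}, verifying the $C^*$- and von Neumann algebra axioms, and establishing commutativity from the extreme-state characterisation \eqref{eqcharacter}---has already been carried out in Proposition~\ref{propwstar}. The only content specific to the corollary is the elementary verification that the EB-channel sits as the marginal of its symmetrisation, which reduces to the state normalisation $\tr{\sigma_\lambda}=1$.
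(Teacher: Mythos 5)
Your proposal is correct and follows exactly the paper's route: the paper likewise derives the corollary by noting that the EB-channel $\Lambda^*$ is the marginal of the symmetric channel in equation~\eqref{eqsymmetriceb} and then invoking Proposition~\ref{propwstar}. The only difference is that you spell out the elementary marginal computation $\Phi^*(A\otimes I)=\Lambda^*(A)$ via $\tr{\sigma_\lambda}=1$, which the paper leaves implicit.
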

We emphasize that the product $\fix{\Phi_0^*} \times \fix{\Phi_0^*} \ni (A,B)\mapsto \psi_0(\Phi^*(A \otimes B))$ of the broadcasting algebra is not in general equal to the usual operator product of $\lh$. The product can however be connected to the usual product of $\lh$ in a "quotient" sense. The existence of the completely positive idempotent map $\psi_0: \lh \to \lh$ with $\psi_0(\lh)=\fix{\Phi_0^*}$ makes the fixed points an \emph{injective operator system} \cite{CHOI1977156}. It was shown by Choi and Effros that every injective operator system has a natural $C^*$-algebra structure with the product $(A,B) \mapsto \psi_0(AB)$ \cite{CHOI1977156}, sometimes also called the Choi-Effros product. For the broadcasting algebra the Choi-Effros product turns out to be equal to the broadcasting product as the following proposition shows. For this proposition, let us introduce a few concepts needed for the proof. Here a \emph{projection in $\bullet$} is an operator $P \in \fix{\Phi_0^*}$ such that $P \bullet P=\psi_0(\Phi^*(P \otimes P))=P$ and $P^*=P$. Note that projections are always effects: $0 \leq P \leq I$. \par 
Let then $\hA$ be a unital $C^*$-algebra and $\hE:\hA \mapsto \lh$ be a unital completely positive map. A \emph{minimal Stinespring dilation} is a triple $(V,\pi,\ki)$, where $\ki$ is a (possibly nonseparable) Hilbert space, $V:\hi \to \ki$ an isometry and $\pi:\hA \to \kh$ a $*$-representation, with the linear span of $\{\pi(A)V\fii \; | \; A \in \hA, \fii \in \hi\}$ dense in $\ki$.
\begin{proposition}\label{propproduct}
    Let $\Phi_0^*$ be the marginal of a symmetric channel $\Phi^*$ and $\psi_0$ as in Lemma \ref{lemmafixproj}. Then for all $A,B \in \fix{\Phi_0^*}$ the following holds.
    \begin{align}
        A \bullet B=\psi_0(\Phi^*(A \otimes B))=\psi_0(AB)
    \end{align}
\end{proposition}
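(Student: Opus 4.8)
The plan is to identify the broadcasting product with the Choi--Effros product $A \circ B := \psi_0(AB)$ and then show the two coincide. Recall that, since $\psi_0$ is a completely positive idempotent with range $\fix{\Phi_0^*}$, the Choi--Effros theorem \cite{CHOI1977156} makes $(\fix{\Phi_0^*},\circ)$ a $C^*$-algebra whose positive cone, norm and unit are exactly those inherited from $\lh$. By Proposition \ref{propwstar} the broadcasting product $\bullet$ makes the same set into a \emph{commutative} $C^*$-algebra, again with the inherited positive cone, operator norm and unit $I$. So both products are $C^*$-products on one and the same order-unit space, and the task is to show they coincide.

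First I would extract the symmetric (Jordan) part. The identity map is a unital order isomorphism between $(\fix{\Phi_0^*},\bullet)$ and $(\fix{\Phi_0^*},\circ)$, so by Kadison's theorem it is a Jordan isomorphism; hence the two symmetrised products coincide, and since $\bullet$ is commutative this gives $A\bullet B = \tfrac12\big(\psi_0(AB)+\psi_0(BA)\big)$ for all $A,B\in\fix{\Phi_0^*}$. In particular, taking $B=A$ self-adjoint yields the diagonal identity $\psi_0(\Phi^*(A\otimes A)) = \psi_0(A^2)$. It remains only to kill the antisymmetric part $\psi_0(AB)-\psi_0(BA)$.

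The hard part is to bridge the abstract product $\bullet$ and the genuine operator product, and for this I would pass to Gelfand characters. Because the two $C^*$-structures share their order and unit they have the same state space, hence the same pure states, and the pure states of the commutative algebra $(\fix{\Phi_0^*},\bullet)$ are precisely its characters $\chi$. For a fixed character $\chi$, the functional $\omega:=\chi\circ\psi_0$ is a state on $\lh$ (as $\psi_0$ is unital and completely positive). Using the diagonal identity and the multiplicativity of $\chi$ on $\bullet$, for self-adjoint $A\in\fix{\Phi_0^*}$ I get $\omega(A^2)=\chi(\psi_0(A^2))=\chi(A\bullet A)=\chi(A)^2=\omega(A)^2$. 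Passing to the minimal Stinespring (GNS) dilation of $\omega$, the equality case of the Cauchy--Schwarz inequality forces the cyclic vector to be an eigenvector of the represented $A$, which upgrades this to full multiplicativity $\omega(AB)=\omega(A)\omega(B)$ on $\fix{\Phi_0^*}$ (extending to non-self-adjoint arguments by splitting into real and imaginary parts).

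Finally I would reassemble: for every character $\chi$ we then have $\chi(\psi_0(AB))=\omega(AB)=\omega(A)\omega(B)=\chi(A)\chi(B)=\chi(A\bullet B)$, and since characters separate points of the commutative algebra $(\fix{\Phi_0^*},\bullet)$ this yields $\psi_0(AB)=A\bullet B=\psi_0(\Phi^*(A\otimes B))$, the desired equality. Equivalently phrased, every pure state of $(\fix{\Phi_0^*},\circ)$ is multiplicative, so that algebra is commutative; this also makes $\circ$ bipositive, whence one may alternatively close the argument through the uniqueness of the associative bipositive product already used in Proposition \ref{propwstar}. The only genuinely delicate point is the character computation of the previous paragraph; its ingredients are the diagonal identity (from Kadison) and the Cauchy--Schwarz equality condition, and one may alternatively organise the same step by first reducing to $\bullet$-projections, which span $(\fix{\Phi_0^*},\bullet)$ densely.
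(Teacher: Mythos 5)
Your proof is correct, but it follows a genuinely different route from the paper's. The paper restricts $\psi_0$ to the $C^*$-algebra generated by $\fix{\Phi_0^*}$ under the operator product, shows (via the Choi--Effros identity $\psi_0(AB)=\psi_0(A\psi_0(B))$) that its kernel is a two-sided ideal, passes to a minimal Stinespring dilation whose representation $\pi$ has the same kernel, verifies $\pi(P\bullet Q)=\pi(PQ)$ directly for $\bullet$-projections, and extends by norm-density of spans of projections. You instead invoke the Choi--Effros $C^*$-structure $A\circ B=\psi_0(AB)$ on the range of $\psi_0$, apply Kadison's theorem to the identity map (a unital order isomorphism, since both cones coincide with the inherited one by Proposition \ref{propwstar} and by Choi--Effros) to match the Jordan parts, and then kill the antisymmetric part by composing Gelfand characters of $(\fix{\Phi_0^*},\bullet)$ with $\psi_0$ and using the Cauchy--Schwarz equality case in the GNS representation to upgrade $\omega(A^2)=\omega(A)^2$ to full multiplicativity; separation by characters then finishes the argument. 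All the ingredients check out: $\psi_0$ is unital, $\fix{\Phi_0^*}$ is self-adjoint so the reduction to self-adjoint elements is legitimate, and $\psi_0(AB)-A\bullet B$ lies in the commutative algebra on which characters separate points. Your approach trades the paper's explicit dilation and projection bookkeeping for two classical black boxes (Kadison's order-isomorphism theorem and the multiplicativity criterion for states), which makes it shorter but less self-contained; it is also close in spirit to the character identity \eqref{eqcharacter} that the paper already exploits in the proof of Proposition \ref{propwstar}, just applied to the operator product rather than the broadcasting product. The one point worth making explicit if you write this up is that the Choi--Effros order structure on $\psi_0(\lh)$ really is the inherited one (this is part of their theorem), since Kadison's theorem is applied to exactly that identification.
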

\begin{proof}
    Let $C^*(\fix{\Phi_0^*})$ denote the $C^*$-algebra generated by the fixed points with the $\lh$-product and sup-norm. Furthermore, let $\psi$ be the restriction of $\psi_0$ to $C^*(\fix{\Phi_0^*})$. We begin this proof by showing that  $\ker \psi$ is an ideal of $C^*(\fix{\Phi_0^*})$. Let thus $K \in \ker \psi$ and $A\in \fix{\Phi_0^*}$. Now as an intermediate result in the proof of \cite[Theorem 3.1]{CHOI1977156} (applied to our case) it is shown that for all $A \in \lh, B \in \fix{\Phi_0^*}$ we have that $\psi_0(AB)=\psi_0(A\psi_0(B))$ and $\psi_0(BA)=\psi_0(\psi_0(B)A)$. We apply this to the case where $B=K$, so that  $\psi_0(AK)=\psi_0(A\psi_0(K))=0$ and similarily for multiplication from the left. Hence $AK,KA\in \ker \psi$. One then inductively sees that this holds when $A$ is an arbitrary linear combination of finite products of operators from $\fix{\Phi_0^*}$ Since $\psi_0$ is norm-continuous, $\psi(KA)=\psi(AK)=0$ holds for all $A \in C^*(\fix{\Phi_0^*})$. Thus $\ker \psi$ is a two-sided ideal of $C^*(\fix{\Phi_0^*})$.  Let then $(V,\pi,\ki)$ be a minimal Stinespring dilation of $\psi: C^*(\fix{\Phi_0^*}) \to \lh$. Since $\ker \psi$ is an ideal, we have that $\ker \pi= \ker \psi$, which is seen as follows. Let $K \in \ker \psi$, $\fii,\eta \in \hi$ and $A,B \in C^*(\fix{\Phi_0^*})$. Then
    \begin{align}
        \ip{\pi(A)V\fii}{\pi(K)|\pi(B)V\eta}&=\ip{\fii}{V^*\pi(A^*KB)V|\eta}\\
        &=\ip{\fii}{\psi(A^*KB)|\eta}=0
    \end{align}
    Since the dilation is minimal, this implies that $\ip{\psi}{\pi(K)|\psi'}=0$ for all $\psi,\psi'\in \mathcal K$ and hence $\pi(K)=0$. The other direction is trivial.  \par 
    Let then $P \in \fix{\Phi_0^*}$ be a projection in $\bullet$. Then, since by Lemma \ref{lemmafixproj} we have $\psi_0=\psi_0 \circ \Phi^*_0$, the following holds.
    \begin{align*}
        \psi(PP^\perp)&=\psi_0(PP^\perp)=\psi_0(\Phi^*(PP^\perp \otimes I)) \\
        &=\psi_0(\Phi^*(PP^\perp \otimes P))+\psi_0(\Phi^*(PP^\perp \otimes P^\perp))\\
        &\leq P^\perp\bullet P + P\bullet P^\perp\\
        &=0,
    \end{align*}
    since $PP^\perp \leq P,P^\perp$ and $P$ is a projection.  Thus $PP^\perp \in \ker \psi$. Therefore also $\pi(PP^\perp)=0$, so $\pi(P)$ is a projection in $\kh$. \par  Let then $Q,P \in \fix{\Phi_0^*}$ be arbitrary projections in $\bullet$. Then 
    \begin{align}
        \pi(P \bullet Q)=\pi(P)\pi(P \bullet Q)+\pi(P^\perp)\pi(P\bullet Q)
    \end{align}
    Now $\bullet$ is commutative, so $P \bullet Q$ is a projection and $P \bullet Q \leq P$. Thus $\pi(P^\perp)\pi(P\bullet Q)=0$. Furthermore
    \begin{align}
        \pi(P)\pi(Q)=\pi(P)\pi(P \bullet Q)+\pi(P)\pi(P^\perp \bullet Q)
    \end{align}
    By a similar argument $\pi(P)\pi(P^\perp \bullet Q)=0$. Thus for all projections $P,Q$ we have $\pi(P \bullet Q)=\pi(PQ)$. Now since all elements of a von Neumann algebra are norm-limits of linear combinations of projections \cite{kuramochi20}, we have that $\pi(A \bullet B)=\pi(AB)$ by the norm-continuity of $\pi$. Therefore finally we get the desired result.
    \begin{align}
        A \bullet B&=\psi(A \bullet B)=V^*\pi(A \bullet B)V \\
        &=V^*\pi(AB)V=\psi(AB)=\psi_0(AB)
    \end{align}
\end{proof}
Proposition \ref{propproduct} especially implies that if $\fix{\Phi_0^*}$ is an algebra in the usual product of $\lh$, then it has to also be commutative in the same product. Furthermore, it even has to be a \emph{completely atomic} von Neumann algebra in this case i.e. it is generated by a discrete PVM. This is shown in the next theorem.
\begin{theorem}\label{thmlhalgebra}
    Suppose $\Phi_0^*$ is the marginal of a symmetric channel $\Phi^*$. Then if $\fix{\Phi_0^*}$ contains a unital subalgebra $\hN$ of $\lh$, then the ultraweak closure of $\hN$ is a completely atomic (concrete) von Neumann algebra. 
\end{theorem}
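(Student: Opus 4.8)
The plan is to collapse the two products on $\hN$ and then extract a projection-valued measure from the closure to which the broadcasting obstruction of Corollary~\ref{corpvmdiscrete} applies. Throughout I take $\hN$ to be a unital $\ast$-subalgebra; this is what makes the ultraweak closure a candidate von Neumann algebra, and it is automatic in the motivating case $\hN=\fix{\Phi_0^*}$, which is self-adjoint because $\Phi_0^*$ preserves adjoints.

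\textbf{Step 1 (commutativity).} For $A,B\in\hN$ the operator product $AB$ again lies in $\hN\subset\fix{\Phi_0^*}$. By Lemma~\ref{lemmafixproj} the map $\psi_0$ is idempotent with range $\fix{\Phi_0^*}$, hence the identity on $\fix{\Phi_0^*}$, so $\psi_0(AB)=AB$. Proposition~\ref{propproduct} then gives $A\bullet B=\psi_0(AB)=AB$, i.e.\ the broadcasting product and the operator product agree on $\hN$. Since $\bullet$ is commutative by Proposition~\ref{propwstar}, we obtain $AB=A\bullet B=B\bullet A=BA$, so $\hN$ is commutative in the usual product. Because $\Phi_0^*-\mathrm{id}$ is ultraweakly continuous, $\fix{\Phi_0^*}$ is ultraweakly closed, so $\hM:=\overline{\hN}^{\,uw}\subset\fix{\Phi_0^*}$, and $\hM$ is a commutative unital von Neumann algebra: it is the ultraweak closure of a commutative unital $\ast$-algebra, and commutativity survives the closure since the commutator is separately ultraweakly continuous.

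\textbf{Step 2 (the crux: ruling out a diffuse part).} Suppose $\hM$ is not completely atomic. As $\hi$ is separable, $\hM$ then carries a nonzero diffuse summand and contains a unital copy of $L^\infty([0,1],\mathrm{Leb})$; the spectral measure $P:\hB([0,1])\to\hM$ of the self-adjoint ``coordinate'' generator is an atomless projection-valued measure with $P(X)\in\hM\subset\fix{\Phi_0^*}$ for every Borel $X$. The point is that such a $P$ is broadcastable. Writing $\Phi_*:\th\to\hT(\hi\otimes\hi)$ for the predual channel of the symmetric channel $\Phi^*$, one computes for every state $\rho$ and Borel $X$
\[
\tr{P(X)\,\mathrm{tr}_{2}\Phi_*(\rho)}=\tr{\Phi^*(P(X)\otimes I)\rho}=\tr{\Phi_0^*(P(X))\rho}=\tr{P(X)\rho},
\]
and symmetrically for the first marginal using $\Phi^*(I\otimes P(X))=\Phi_0^*(P(X))=P(X)$. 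Hence the singleton set of PVMs $\{P\}$ is broadcastable, and Corollary~\ref{corpvmdiscrete} forces $P$ to be discrete, contradicting its atomlessness. Separability of $\hi$ is essential here, exactly as in the remark following Corollary~\ref{corpvmdiscrete}.

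\textbf{Step 3 (conclusion).} Therefore $\hM$ has no diffuse part and is completely atomic: being a commutative von Neumann algebra on a separable space, it is generated by a countable family of mutually orthogonal minimal projections summing to $I$, that is, by a discrete PVM, so $\overline{\hN}^{\,uw}$ is a completely atomic concrete von Neumann algebra. The main obstacle is Step~2, namely manufacturing a genuine projection-valued measure inside the closure and certifying its broadcastability so that the sharp-measurement result can be invoked; everything rests on the identity $A\bullet B=AB$ of Step~1, which is what places the operator-theoretic spectral projections inside $\fix{\Phi_0^*}$. The one delicate preliminary is self-adjointness of $\hN$, which I have built into the hypothesis, since for a merely non-$\ast$ commutative subalgebra the ultraweak closure need not be self-adjoint and so would not be a von Neumann algebra at all.
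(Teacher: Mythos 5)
Your Step 1 is sound and matches the paper's setup: the identity $A\bullet B=\psi_0(AB)=AB$ on $\hN$ via Proposition \ref{propproduct}, hence commutativity, hence $\overline{\hN}^{\,uw}\subset\fix{\Phi_0^*}$ is a commutative von Neumann algebra. (Your caveat about self-adjointness of $\hN$ is a reasonable reading of the hypothesis.) The problem is Step 2: it is circular. You rule out a diffuse summand by exhibiting an atomless broadcastable PVM inside $\overline{\hN}^{\,uw}$ and invoking Corollary \ref{corpvmdiscrete} to force it to be discrete. But in the paper's logical order Corollary \ref{corpvmdiscrete} is a \emph{consequence} of Theorem \ref{thmlhalgebra}: its proof shows that the von Neumann algebra generated by the broadcastable PVMs is a unital subalgebra of $\lh$ contained in $\fix{\Phi_0^*}$ and then cites Theorem \ref{thmlhalgebra} to conclude complete atomicity, which is exactly what yields discreteness. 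The weaker inputs that are genuinely available before Theorem \ref{thmlhalgebra} (Theorem \ref{thmmeaschar} via Theorem \ref{thmfixpchar}) only control the \emph{normal} part $M_n$ of a broadcastable POVM; they leave the singular part unconstrained, so they cannot by themselves exclude an atomless PVM. Your broadcastability computation for $P$ is correct but does not advance the argument non-circularly.

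What is missing is the actual mechanism that kills the diffuse part. The paper supplies it by upgrading $A\bullet B=AB$ to the stronger identity $\Phi^*(P\otimes Q)=PQ$ for projections $P,Q\in\overline{\hN}^{\,uw}$, via the order-theoretic sandwich
\begin{align}
\Phi^*(P\otimes Q)\leq PQ=\Phi^*(PQ\otimes P)+\Phi^*(PQ\otimes P^\perp)\leq \Phi^*(Q\otimes P),
\end{align}
using that $\Phi^*(Q\otimes P)$ is a common lower bound of the commuting projections $P$ and $Q$. By continuity $\Phi^*(A\otimes B)=AB$ on all of $\overline{\hN}^{\,uw}$, and complete atomicity is then exactly the content of Proposition 3.6 of Kaniowski et al.\ \cite{Kaniowski2014} — that is the external input which rules out a copy of $L^\infty([0,1],\mathrm{Leb})$. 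To repair your proof you would need either to reproduce this step (or an independent proof that no atomless commutative von Neumann algebra admits a normal unital completely positive $\Phi^*$ with $\Phi^*(A\otimes B)=AB$), rather than routing through a corollary that sits downstream of the theorem.
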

\begin{proof}
     Now since $\hN$ is a subalgebra and $\fix{\Phi_0^*}$ is ultraweakly closed and preserves adjoint, the von Neumann algebra generated by $\hN$, i.e. the ultraweak closure $\overline{\hN}^{uw}$, is contained within $\fix{\Phi_0^*}$. Let then $Q,P \in \overline{\hN}^{uw}$ be projections. These now commute by Proposition \ref{propproduct}. Then 
    \begin{align}
        \Phi^*(P \otimes Q)&\leq PQ=\Phi^*(PQ \otimes I)\\
        &=\Phi^*(PQ \otimes P)+\Phi^*(PQ \otimes P^\perp) \\
        &\leq \Phi^*(Q \otimes P)+\Phi^*(P \otimes P^\perp)\\
        &=\Phi^*(Q \otimes P)
    \end{align}
    Here the first inequality and last equality use the fact that $\Phi^*(Q\otimes P)$ is a common lower bound for $P,Q$ so that it's less than the greatest lower bound $PQ$. Therefore $\Phi^*(P \otimes Q)=PQ$. This extends to all $A,B \in \overline{\hN}^{uw}$ through continuity i.e. $\Phi^*(A \otimes B)=AB$. Therefore, by Proposition 3.6 in \cite{Kaniowski2014} we have that $\overline{\hN}^{uw}$ is completely atomic.
\end{proof}
Theorem \ref{thmlhalgebra} thus especially implies that if $\fix{\Phi_0^*}$ is itself a subalgebra of $\lh$, then it is completely atomic.
\begin{corollary}\label{corfullrankfixstate}
    Suppose $\Phi_0^*$ is the marginal of a symmetric channel $\Phi^*$. If there is a faithful full rank fixed point state for $\Phi_0: \th \to \th$, then $\fix{\Phi_0^*}$ is a completely atomic von Neumann algebra.
\end{corollary}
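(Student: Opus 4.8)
The plan is to show that under the faithful fixed-point hypothesis the broadcasting product $\bullet$ on $\fix{\Phi_0^*}$ coincides with the ordinary operator product of $\lh$, so that $\fix{\Phi_0^*}$ is itself a unital $*$-subalgebra of $\lh$; the completely atomic conclusion then follows at once from Theorem \ref{thmlhalgebra}. First I would record the two facts that drive the argument. Since $\Phi^*$ is a channel, its marginal $\Phi_0^*=\Phi^*(\cdot\otimes I)$ is a unital completely positive map on $\lh$ (unitality because $\Phi_0^*(I)=\Phi^*(I\otimes I)=I$), so it obeys the Kadison--Schwarz inequality $\Phi_0^*(A^*A)\ge \Phi_0^*(A)^*\Phi_0^*(A)$. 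And if $\rho_0\in\th$ is the faithful fixed point of the Schr\"odinger-picture marginal $\Phi_0$, then by duality $\tr{\rho_0\Phi_0^*(A)}=\tr{\Phi_0(\rho_0)A}=\tr{\rho_0A}$ for every $A\in\lh$, i.e.\ $\rho_0$ is an invariant state for $\Phi_0^*$.

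Next I would run the standard multiplicative-domain argument. For $A\in\fix{\Phi_0^*}$ the Schwarz inequality reads $\Phi_0^*(A^*A)\ge A^*A$, so $\Phi_0^*(A^*A)-A^*A\ge 0$; pairing with the invariant state gives $\tr{\rho_0(\Phi_0^*(A^*A)-A^*A)}=0$, and faithfulness of $\rho_0$ forces $\Phi_0^*(A^*A)=A^*A$. Hence every fixed point lies in the multiplicative domain of $\Phi_0^*$, on which (by Choi's theorem on multiplicative domains) $\Phi_0^*$ is a $*$-homomorphism. Consequently, for $A,B\in\fix{\Phi_0^*}$ we get $\Phi_0^*(AB)=\Phi_0^*(A)\Phi_0^*(B)=AB$, so $AB\in\fix{\Phi_0^*}$. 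Together with $I\in\fix{\Phi_0^*}$ and closedness under the adjoint, this shows $\fix{\Phi_0^*}$ is a unital $*$-subalgebra of $\lh$ in the ordinary operator product; in particular, via Proposition \ref{propproduct} and the fact that $\psi_0$ is the identity on its range $\fix{\Phi_0^*}$, the broadcasting product satisfies $A\bullet B=\psi_0(AB)=AB$.

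Finally I would invoke Theorem \ref{thmlhalgebra} with $\hN=\fix{\Phi_0^*}$: its ultraweak closure is a completely atomic concrete von Neumann algebra. Because $\fix{\Phi_0^*}$ is already ultraweakly closed (Proposition \ref{propwstar}), it equals its own ultraweak closure and is therefore completely atomic, as claimed. The one genuinely delicate step is the passage from the Schwarz inequality to the equality $\Phi_0^*(A^*A)=A^*A$: this is exactly where faithfulness (full rank) of the fixed state is indispensable, since without a faithful invariant state the fixed points need not lie in the multiplicative domain and $\bullet$ may differ from the operator product, as in the shift-channel phenomenon noted earlier. The remaining steps are routine once this equality is in hand.
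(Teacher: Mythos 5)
Your proposal is correct and follows essentially the same route as the paper: the paper's proof simply cites Lemma 2.2 of Arias et al.\ for the fact that a faithful invariant state forces $\fix{\Phi_0^*}$ to be an algebra in the ordinary operator product, and then applies Theorem \ref{thmlhalgebra}. Your Kadison--Schwarz/multiplicative-domain argument is exactly the standard proof of that cited lemma written out in full, and the remaining steps (ultraweak closedness from Proposition \ref{propwstar}, then Theorem \ref{thmlhalgebra}) match the paper.
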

\begin{proof}
    The existence of a faithful full rank invariant state implies that $\fix{\Phi^*_0}$ is an algebra, see for example Lemma 2.2 in \cite{Arias2002}. Thus we get the claim from Theorem \ref{thmlhalgebra}.
\end{proof}
The results above imply that in these special cases the broadcasting algebra is forced to be commutative in the $\lh$-product. These results are, of course, applicable to the finite-dimensional framework and therefore the noncommutative broadcastable sets in \cite{jokinen2024nobroadcasting} imply that there are fixed point sets that are not subalgebras of $\lh$. \par 
We now concentrate on this general case, where the fixed points are not necessarily subalgebras of $\lh$. 
Here we intend to show that all broadcastable POVMs i.e. POVMs in $\fix{\Phi_0^*}$ can be divided into "normal" and "singular" parts. The normal part is shown to belong to a completely atomic broadcasting von Neumann algebra, whereas the structure of the singular part remains an interesting open question.
\par

In the following, a map $\hE:\lh \to \lh$ is called \emph{singular} if $\hK(\hi)\subset \ker \hE$.
\begin{proposition}\label{propsingornorm}
    Let $\Phi_0^*$ be the marginal of a symmetric channel $\Phi^*$ and $\psi_0$ the map given in Lemma \ref{lemmafixproj}. Then
    \begin{align}
        \psi_0=\psi_n \oplus \psi_s,
    \end{align}
    where $\psi_n: \lh \to \lh$ is normal, completely positive and idempotent, and $\psi_s:\lh \to \lh$ is singular, completely positive and idempotent. Especially $\psi_n(I)$ and $\psi_s(I)$ are orthogonal projections in $\bullet$ so that $\fix{\Phi_0^*}$ decomposes to a "normal" and "singular" ideal: 
    \begin{align}
        \fix{\Phi_0^*}=\psi_n(I)\bullet \fix{\Phi_0^*}\oplus \psi_s(I)\bullet \fix{\Phi_0^*}. 
    \end{align}
\end{proposition}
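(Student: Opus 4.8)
The plan is to build the whole decomposition from one central projection of the broadcasting algebra: the support of the compact operators under $\psi_0$. Working inside the commutative von Neumann algebra $(\fix{\Phi_0^*},\bullet)$ from Proposition \ref{propwstar}, I would set
\begin{align}
    e:=\sup\{\psi_0(F)\;:\;0\le F\le I,\ F\ \text{finite rank}\},
\end{align}
the supremum taken ultraweakly in $\fix{\Phi_0^*}$. The first task is to show that $e$ coincides with the join $\bigvee_{K\ge 0}s(\psi_0(K))$ of the $\bullet$-supports of the images of the positive compacts, and is in particular a projection in $\bullet$. The key estimate is that a positive compact $K$ obeys $K\le\Vert K\Vert P_N+\lambda_{N+1}I$, with $P_N$ the spectral projection onto its $N$ largest eigenvalues and $\lambda_{N+1}\to 0$; applying the positive map $\psi_0$ and letting $N\to\infty$ yields $\psi_0(K)\le\Vert K\Vert e$. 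Hence $e\bullet\psi_0(K)=\psi_0(K)$ for every compact $K$, and in particular $e\bullet\psi_0(F)=\psi_0(F)$ for finite rank $F$.

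With $e$ fixed I would define $\psi_n:=e\bullet\psi_0(\cdot)$ and $\psi_s:=e^\perp\bullet\psi_0(\cdot)$, where $e^\perp=I-e$. Because $e+e^\perp=I$ is the unit of $\bullet$ we get $\psi_0=\psi_n+\psi_s$, $\psi_n(I)=e$ and $\psi_s(I)=e^\perp$, and the latter are orthogonal $\bullet$-projections; since their ranges are $e\bullet\fix{\Phi_0^*}$ and $e^\perp\bullet\fix{\Phi_0^*}$, this is exactly the asserted ideal decomposition of $\fix{\Phi_0^*}$. Singularity of $\psi_s$ is then immediate, since for compact $K$ one has $\psi_s(K)=e^\perp\bullet\psi_0(K)=\psi_0(K)-e\bullet\psi_0(K)=0$, so $\hK(\hi)\subset\ker\psi_s$. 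Idempotency is equally short: $\psi_0$ restricts to the identity on $\fix{\Phi_0^*}$, so using $e\bullet e=e$ we find $\psi_n^2(A)=e\bullet\psi_0(e\bullet\psi_0(A))=e\bullet(e\bullet\psi_0(A))=\psi_n(A)$, and the same computation with $e^\perp$ handles $\psi_s$.

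For normality of $\psi_n$ I would use two standard facts: a positive map into a von Neumann algebra is normal iff composing it with every normal state of the target gives a normal functional, and a positive functional $\omega$ on $\lh$ is normal iff $\omega(I)=\sup_F\omega(F)$ over finite rank $0\le F\le I$. It therefore suffices to verify, for every normal state $\phi$ of $\fix{\Phi_0^*}$, that $\phi(\psi_n(I))=\sup_F\phi(\psi_n(F))$. The left-hand side equals $\phi(e)$, while the right-hand side is $\sup_F\phi(\psi_0(F))=\phi(\sup_F\psi_0(F))=\phi(e)$, where the interchange of $\phi$ with the increasing supremum is precisely the normality of $\phi$ and the final equality is the definition of $e$. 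Thus $\phi\circ\psi_n$ is normal for all such $\phi$, and $\psi_n$ is normal.

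The delicate point, which I expect to be the main obstacle, is complete positivity of $\psi_n$ and $\psi_s$. Writing $\psi_n=\Xi_e\circ\psi_0$ with $\Xi_e(B)=e\bullet B$, the map $\Xi_e$ is multiplication by a central projection in the commutative $C^*$-algebra $(\fix{\Phi_0^*},\bullet)$, hence a $*$-homomorphism and completely positive with respect to $\bullet$. The subtlety is that complete positivity of $\psi_n$ must be read with the operator-system structure $\fix{\Phi_0^*}$ inherits from $\lh$, not with $\bullet$. I would bridge this using the Choi--Effros description of the injective operator system $\fix{\Phi_0^*}=\psi_0(\lh)$ already invoked for Proposition \ref{propproduct}: the $C^*$-structure furnished by the Choi--Effros product $\bullet$ carries the same matricial order as the one inherited from $\lh$ (the equality of the scalar positive cones was recorded in the proof of Proposition \ref{propwstar}), so $\bullet$-complete positivity of $\Xi_e$ is complete positivity into $\lh$. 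Composing the completely positive maps $\Xi_e$ and $\psi_0$ gives complete positivity of $\psi_n$, and running the argument with $e^\perp$ in place of $e$ does the same for $\psi_s$.
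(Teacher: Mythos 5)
You have located the right objects, but the proof collapses at its very first step: the claim that $e:=\sup\{\psi_0(F):0\le F\le I,\ F\ \text{finite rank}\}$ is a projection in $\bullet$ is asserted, not proved, and it is in fact the entire content of the proposition. Your ``key estimate'' $\psi_0(K)\le\Vert K\Vert e$ only controls supports: it yields $s(\psi_0(K))\le s(e)$ and hence $s(e)\bullet\psi_0(K)=\psi_0(K)$, but the inference ``$\psi_0(K)\le\Vert K\Vert e$, hence $e\bullet\psi_0(K)=\psi_0(K)$'' is valid only if $e$ already equals its own support projection, which is exactly the statement that $e$ is a projection. Nothing you use distinguishes $\psi_0$ from a generic idempotent unital completely positive map with commutative Choi--Effros range, and for such maps the claim is false: take $\psi_0(A)=\tfrac12\tr{\sigma A}I+\tfrac12\omega(A)I$ with $\sigma$ a density operator and $\omega$ a singular state. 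This is unital, completely positive and idempotent with range $\mathbb{C}I$ and $aI\bullet bI=abI$, yet $e=\sup_F\tfrac12\tr{\sigma F}I=\tfrac12 I$ is not a projection, and $e\bullet\psi_0(K)=\tfrac14\tr{\sigma K}I\neq\psi_0(K)$. So the projection property of $e$ must come from the finer structure of Lemma \ref{lemmafixproj} (normality of $\Phi_0^*$ and $\psi_0\circ\Phi_0^*=\psi_0$), which your argument never invokes.

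Note also that the gap cannot be patched by replacing $e$ with its support $s(e)$: with $\tilde e=s(e)$ your definitions do give singularity of $\psi_s$, idempotency, complete positivity and the orthogonal ideal decomposition, but normality of $\psi_n=\tilde e\bullet\psi_0(\cdot)$ requires $\phi(\tilde e)=\sup_F\phi(\psi_0(F))=\phi(e)$ for every normal state $\phi$, i.e.\ $e=s(e)$ again. In other words, ``$e$ is a projection'' is equivalent (given the rest of your argument, which is essentially correct conditional on it) to the normality of $\psi_n$, which is the heart of the statement. The paper obtains this by decomposing a minimal Stinespring representation of $\psi_0$ into normal and singular parts via Takesaki's theorem, setting $\psi_{n/s}=V^*\pi_{n/s}(\cdot)V$, and then deriving the intertwining relations $\psi_n=\psi_n\circ\Phi_0^*=\psi_n\circ\psi_0=\psi_0\circ\psi_n$ from positivity on compacts plus normality; the projection property of $\psi_n(I)$ (your $e$) is a consequence, not a starting point. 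To rescue your route you would have to supply an independent proof that $\sup_F\psi_0(F)=\bigvee_F s(\psi_0(F))$, and I do not see how to do that without essentially reproducing the paper's decomposition.
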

\begin{proof} 
Let $(V,\pi,\ki)$ be a minimal Stinespring dilation of $\psi_0$. Now by \cite[Theorem 2.14]{Takesaki1979} the representation $\pi:\lh \to \kh$ can be divided to normal and singular representations:
\begin{align}
    \pi=\pi_n \oplus \pi_s
\end{align}
Thus we may define $\psi_{n/s}=V^*\pi_{n/s}(\bullet)V$ so that 
\begin{align}
    \psi_0=\psi_n+\psi_s.
\end{align}
Obviously both $\psi_n$ and $\psi_s$ are completely positive.
Let us show that $\psi_n$ is idempotent. Let $K \in \hK(\hi)$ be a compact operator. Then
\begin{align}
    \psi_n(K)=\psi_0(K)=\psi_0(\Phi_0^*(K))\ge \psi_n(\Phi_0^*(K))
\end{align}
Thus $\psi_n-\psi_n \circ \Phi_0^*$ is positive when restricted to compact operators. Since $\psi_n-\psi_n \circ \Phi_0^*$ is normal, this implies that $\Vert \psi_n-\psi_n \circ \Phi_0^*\Vert= \Vert(\psi_n-\psi_n \circ \Phi_0^*)(I)\Vert =0$. Thus $\psi_n=\psi_n \circ \Phi_0^*$ and therefore $\psi_n=\psi_n \circ \left( \frac{1}{n}\sum_{k=0}^{n-1} (\Phi_0^*)^k \right)$. Since $\psi_n$ is normal, by Lemma \ref{lemmafixproj} we see that $\psi_n=\psi_n \circ \psi_0$. Thus for all compact operators $K \in \hK(\hi)$ we have $\psi_n(K)=\psi_n(\psi_n(K))$, which by normality implies $\psi_n=\psi_n \circ \psi_n$. \par We get the equality $\psi_n=\psi_0 \circ \psi_n$ similarly: for all compact operators $K$ we have the following.
\begin{align}
\psi_n(K)=\psi_0(K)=\Phi_0^*(\psi_0(K))=\Phi_0^*(\psi_n(K))
\end{align}
Thus by the essentially same argument as before, $\psi_n=\psi_0 \circ \psi_n$. Combining $\psi_n=\psi_0 \circ \psi_n$ and $\psi_n=\psi_n \circ \psi_0$ we see that 
\begin{align}
    \psi_s \circ \psi_s&=(\psi_0-\psi_n) \circ (\psi_0-\psi_n)\\
    &=\psi_0 \circ \psi_0-\psi_0\circ \psi_n-\psi_n \circ \psi_0 +\psi_n \circ \psi_n \\
    &=\psi_0-\psi_n=\psi_s
\end{align}
Thus $\psi_s$ is idempotent, and furthermore $\psi_n$ and $\psi_s$ are orthogonal: $\psi_n \circ \psi_s=\psi_s \circ \psi_n=0$. Thus 
\begin{align}
    \psi_0 =\psi_n \oplus \psi_s.
\end{align}
Finally, $\psi_n=\psi_0 \circ \psi_n$ implies that $\psi_n(\lh) \subset \fix{\Phi_0^*}$, and since $\psi_s=\psi_0-\psi_n$, we have that $\psi_s(\lh) \subset \fix{\Phi_0^*}$. Thus $\psi_n(\lh) \oplus \psi_s(\lh)=\fix{\Phi_0^*}$. \par 
Let us then show that $\psi_n(\lh)$ and $\psi_s(\lh)$ are orthogonal ideals of $\fix{\Phi_0^*}$ . Let $A \in \lh$ be an effect, $0\leq A \leq I$.
\begin{align}
    \psi_n(A)=\psi_n(A)\bullet I=\psi_n(A) \bullet \psi_n(I)+\psi_n(A) \bullet \psi_s(I)
\end{align}
Now $\psi_n(A) \bullet \psi_s(I)\leq \psi_n(A),\psi_s(I)$ so 
\begin{align}
\psi_n(A) \bullet \psi_s(I)&=\psi_0(\psi_n(A) \bullet \psi_s(I))\\
&=\psi_n(\psi_n(A) \bullet \psi_s(I))+\psi_s(\psi_n(A) \bullet \psi_s(I))\\
&\leq \psi_n(\psi_s(I))+\psi_s(\psi_n(A))=0
\end{align}
Thus $\psi_n(A) \bullet \psi_s(I)=0$ and so $\psi_n(A)=\psi_n(A)\bullet \psi_n(I)$. Similarly $\psi_s(A)=\psi_s(A) \bullet \psi_s(I)$. Since these hold for all effects, they also hold for all operators $A \in \lh$. Especially setting $A=I$ we see that $\psi_n(I)$ and $\psi_s(I)$ are projections in $\bullet$. Furthermore $A \in \psi_{n/s}(\lh) \Leftrightarrow A=A \bullet \psi_{n/s}(I)$. Thus finally we have the following.
\begin{align}
    \fix{\Phi_0^*}=\psi_n(I) \bullet \fix{\Phi_0^*} \oplus \psi_s(I) \bullet \fix{\Phi_0^*}
\end{align}
\end{proof}
Using this decomposition, we can give the following characterization of the broadcastable operators. We use the notation $\mathrm{supp}(\rho)$ for the support projection of a state $\rho \in \sh$.
\begin{theorem}\label{thmfixpchar}
    Let $\Phi_0^*$ be the marginal of a symmetric channel $\Phi^*$ and $\psi_0$ the map given by Lemma \ref{lemmafixproj}. Let $\psi_n,\psi_s$ be as in Proposition \ref{propsingornorm}. Then $A \in \fix{\Phi_0^*}$ if and only if there is a discrete subnormalized POVM $\{G_i\}_i \subset \psi_n(\lh)$ and states $\{\sigma_i\}_i$ s.t. $\sum_i G_i=\psi_n(I)$, $\tr{\sigma_i G_j}=\delta_{ij}$ and 
    \begin{align}\label{eqfixpointdecomp}
        A=\sum_i \tr{\sigma_iA}G_i +\psi_s(I)\bullet A
    \end{align}
\end{theorem}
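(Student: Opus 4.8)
The plan is to reduce everything to the structure of the \emph{normal ideal} $\psi_n(\lh)=\psi_n(I)\bullet\fix{\Phi_0^*}$ and to show that, viewed as a commutative von Neumann algebra in the product $\bullet$, it is completely atomic. Starting from Proposition~\ref{propsingornorm}, every $A\in\fix{\Phi_0^*}$ splits along the two orthogonal ideals as
\begin{align}
A=\psi_n(I)\bullet A+\psi_s(I)\bullet A,
\end{align}
so the singular term $\psi_s(I)\bullet A$ is already the one appearing in \eqref{eqfixpointdecomp}, and it only remains to write the normal term $\psi_n(I)\bullet A\in\psi_n(\lh)$ in the stated discrete form.

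The main obstacle is establishing complete atomicity of $\psi_n(\lh)$, and here I would exploit that $\psi_n$ is \emph{normal}. Taking the minimal Stinespring dilation $(V_n,\pi_n,\ki_n)$ associated with the normal summand of Proposition~\ref{propsingornorm}, normality of $\pi_n$ on the type-I factor $\lh=B(\hi)$ forces $\pi_n(A)\cong A\otimes\id$ on $\hi\otimes\mathcal M$ for some multiplicity space $\mathcal M$, so that $\psi_n(A)=V_n^*(A\otimes\id)V_n$ is ultraweakly continuous with image ultraweakly generated by $\psi_n(\hK(\hi))$. Since $\hi$ is separable, the commutative von Neumann algebra $\psi_n(\lh)$ then admits only countably many mutually $\bullet$-orthogonal minimal projections $\{G_i\}_i$, which increase ultraweakly to the unit $\psi_n(I)$ of the ideal. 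Each $G_i$ is a projection in $\bullet$, hence an effect $0\le G_i\le I$, so $\{G_i\}_i\subset\psi_n(\lh)$ is a discrete subnormalized POVM with $\sum_iG_i=\psi_n(I)$ and $G_i\bullet G_j=\delta_{ij}G_i$.

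Next I would produce the biorthogonal states. Each atom $G_i$ determines an extreme state (character) $\phi_i$ of $\fix{\Phi_0^*}$, supported on the normal ideal, with $\phi_i(G_j)=\delta_{ij}$, $\phi_i(\psi_s(I))=0$, and multiplicativity $\phi_i(X\bullet Y)=\phi_i(X)\phi_i(Y)$ by Eq.~\eqref{eqcharacter}. The key point is to represent $\phi_i$ by a genuine density operator. I would set $\tilde\phi_i:=\phi_i\circ\psi_n$ on $\lh$; since $\phi_i$ annihilates the singular ideal we have $\phi_i\circ\psi_s=0$, whence $\tilde\phi_i=\phi_i\circ\psi_0$, and being a composition of the ultraweakly continuous maps $\phi_i$ and $\psi_n$ it is a \emph{normal} state, so $\tilde\phi_i(\cdot)=\tr{\sigma_i\,\cdot}$ for some $\sigma_i\in\sh$. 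Because $\psi_0$ fixes $\fix{\Phi_0^*}$ pointwise, $\tr{\sigma_iA}=\phi_i(A)$ for every $A\in\fix{\Phi_0^*}$; in particular $\tr{\sigma_iG_j}=\phi_i(G_j)=\delta_{ij}$, which is the required biorthogonality.

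Finally I would assemble the decomposition. Expanding the normal part of $A$ in the atomic algebra and using multiplicativity of $\phi_i$ together with $\phi_i(\psi_n(I))=1$,
\begin{align}
\psi_n(I)\bullet A=\sum_i\phi_i\big(\psi_n(I)\bullet A\big)G_i=\sum_i\phi_i(A)G_i=\sum_i\tr{\sigma_iA}G_i,
\end{align}
which combined with the first display yields \eqref{eqfixpointdecomp}. The converse is immediate: each $G_i\in\psi_n(\lh)\subset\fix{\Phi_0^*}$ and $\psi_s(I)\bullet A\in\fix{\Phi_0^*}$, so anything written in the form \eqref{eqfixpointdecomp} lies in the linear space $\fix{\Phi_0^*}$, while the biorthogonality $\tr{\sigma_iG_j}=\delta_{ij}$ forces the normal-part coefficients to be exactly $\tr{\sigma_iA}$. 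I expect the atomicity argument of the second paragraph -- extracting a countable, complete family of minimal $\bullet$-projections from normality of $\psi_n$ and separability of $\hi$ -- to be the only genuinely delicate point; the representability of characters by density operators and the remaining bookkeeping are routine given Propositions~\ref{propproduct} and~\ref{propsingornorm}.
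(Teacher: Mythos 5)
Your overall architecture matches the paper's: split $A$ along the two orthogonal ideals of Proposition \ref{propsingornorm}, prove the normal ideal is a completely atomic commutative von Neumann algebra, extract the atoms $G_i$ and the associated normal characters, represent those characters by density operators by composing with the normal map $\psi_n$, and assemble the decomposition. The character bookkeeping, the biorthogonality $\tr{\sigma_i G_j}=\delta_{ij}$, the trivial converse, and the countability of the atoms from separability of $\hi$ (via pairwise orthogonality of the $\sigma_i$) are all correct and essentially identical to the paper's steps.

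The gap sits exactly where you flagged the delicacy: your justification of complete atomicity does not go through. From normality of $\psi_n$ you correctly obtain the amplification form $\psi_n(A)=V_n^*(A\otimes\id)V_n$, but the inference that ``since $\hi$ is separable, the commutative von Neumann algebra $\psi_n(\lh)$ admits only countably many mutually $\bullet$-orthogonal minimal projections which increase ultraweakly to the unit'' is a non sequitur. Separability bounds the number of atoms once they exist; it neither produces any atoms nor guarantees that they sum to the unit. A diffuse commutative von Neumann algebra such as $L^\infty([0,1])$ acting on $L^2([0,1])$ lives on a separable Hilbert space and has no minimal projections whatsoever, and the observation that the range of $\psi_n$ is ultraweakly generated by $\psi_n(\hK(\hi))$ is automatic for any normal map and carries no structural content. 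What actually forces atomicity is that the product on the normal ideal is implemented by the \emph{normal} bilinear map $(A,B)\mapsto\psi_n(\Phi^*(A\otimes B))$; the paper invokes Proposition 3.6 of \cite{Kaniowski2014} for precisely this step, in the same spirit as the classical fact that a masa in $\lh$ admitting a normal conditional expectation must be atomic. Without that input, or an equivalent argument excluding a diffuse part of $\psi_n(\lh)$, the expansion $\psi_n(I)\bullet A=\sum_i\phi_i(A)G_i$ has no basis and the discrete form of the normal summand in \eqref{eqfixpointdecomp} is unproven.
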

\begin{proof}
    The "if"-direction is trivial. \par 
    According to Proposition \ref{propsingornorm} we have the following.
    \begin{align}
        A=\psi_n(I) \bullet A+\psi_s(I)\bullet A
    \end{align}
    Thus we focus on the normal part.
    \par Again by Proposition \ref{propsingornorm}, the normal part $\psi_n(I) \bullet \fix{\Phi_0^*}$ is a $W^*$-subalgebra of $(\fix{\Phi_0^*},\bullet)$ and the restriction of the product $(A,B) \mapsto \psi_0(\Phi^*(A \otimes B))$ to $\psi_n(I) \bullet \fix{\Phi_0^*}$ is induced by the normal map $\psi_n \circ \Phi^*:\hL(\hi \otimes \hi) \to \lh$, i.e.  $(A,B) \mapsto \psi_n(\Phi^*(A \otimes B))$. Furthermore, the $\psi_n(I)$ acts as a unit element for this algebra. Therefore by Proposition 3.6 in \cite{Kaniowski2014} we have that the von Neumann algebra $\psi_n(I) \bullet \fix{\Phi_0^*}$ is completely atomic. Thus there is a set of projections in $\bullet$, $\{G_i\} \subset \psi_n(I) \bullet \fix{\Phi_0^*}$ and normal states $\{\fii_i\}_i \subset (\psi_n(I) \bullet \fix{\Phi_0^*})_*$ such that for all $A \in \psi_n(I) \bullet \fix{\Phi_0^*}$ we have the following.
    \begin{align}
        A=\sum_i \fii_i(A) G_i
    \end{align}
    Especially $\fii_i(G_j)=\delta_{ij}$. Furthermore the linear functionals $\fii_i \circ \psi_n:\lh \to \mathbb{C}$ are ultraweakly continuous as a composition of two weakly$^*$-continuous functions. Also, $\fii_i(\psi_n(I))=1$ so all of them are states. Thus for all $i$ there exists a $\sigma_i \in \sh$ such that $(\fii_i \circ \psi_n)(A)=\tr{\sigma_i A}$ for all $A \in \lh$. Therefore, if $A \in \psi_n(I) \bullet \fix{\Phi_0^*}$, we have the following.
    \begin{align}
        A&=\sum_i \fii_i(A) G_i=\sum_i \fii_i(\psi_n(A)) G_i\\
        &=\sum_i \tr{\sigma_i A} G_i
    \end{align}
    Of course $\fii_i(G_j)=\delta_{ij}$ implies $\tr{\sigma_i G_j}=\delta_{ij}$. Thus we get the desired decomposition for all $A \in \fix{\Phi_0^*}$ as follows.
    \begin{align}
        A&=\sum_i \tr{\sigma_i (\psi_n(I) \bullet A)}G_i+\psi_s(I) \bullet A \\
        &=\sum_i \tr{\sigma_i  A}G_i+\psi_s(I) \bullet A 
    \end{align}
    Finally we show that the separability of the Hilbert space $\hi$ implies that there can be at most a countable amount of $G_i$ i.e. $\{G_i\}_i$ is discrete. This can be seen as follows. The condition $\tr{\sigma_i G_i}=1$ implies that $G_i\sigma_i=\sigma_i$ and furthermore $\tr{G_i \sigma_j}=0$ implies $G_i\sigma_j=0$ for $j \neq i$. Thus for all $i\neq j$ we have $\sigma_j\sigma_i=\sigma_j (G_i \sigma_i)=(G_i\sigma_j)^*\sigma_i=0$. Therefore, if  $\{G_i\}_i$ is uncountable, then there is an uncountable orthogonal set of states $\{\sigma_i\}_i$, which contradicts the separability of $\hi$.
\end{proof}
A discrete POVM $\{G_i\}_i$ with states $\{\sigma_i\}_i$ such that $\tr{G_i\sigma_j}=\delta_{ij}$ (i.e. $G_i\sigma_i=\sigma_i$), presented in the previous theorem, is occasionally called an \emph{eigenvalue-1 POVM}. In the case of an infinite-dimensional Hilbert space, this is not equivalent to a \emph{norm-1 POVM} i.e. $\Vert G_i\Vert=1, \; \forall i$. In the finite-dimensional case a norm-1 POVM is exactly an eigenvalue-1 POVM. \par 
Furthermore, if the Hilbert space $\hi$ is finite-dimensional, then there exist no singular linear functionals ($\lh$ is self-dual in this case) and so $\psi_0$ is automatically normal. Therefore in the finite-dimensional case we immediately see that all broadcastable operators $A$ are fixed points of the EB-channel $\psi_0(B)=\psi_n(B)=\sum_i \tr{\sigma_i B}G_i$
induced by the norm-1 POVM $\{G_i\}_i$. Thus one immediately derives the measurement-side results of \cite{jokinen2024nobroadcasting} from Theorem \ref{thmfixpchar}. 
\par 
To conclude this subsection, we show that there is an intimate connection between the Schrödinger picture fixed point states of $\Phi_0$ and the normal part $\psi_n=\sum_i \tr{\sigma_i \cdot}G_i$. Let us denote the Schrödinger dual of $\psi_n$ by $\psi_{n*}$
\begin{proposition}\label{propschroedfixnormal}
    Let $\Phi_0^*$ and $\psi_n$ be as defined above. Then for $\rho \in \sh$ we have the following.
    \begin{align}
        \Phi_0(\rho)=\rho \Leftrightarrow \psi_{n*}(\rho)=\rho
    \end{align}
    Especially, $\psi_n$ can be completed to a unital EB-channel preserving the fixed points:
    \begin{align}\label{eqschreb}
        \Phi_0(\rho)=\rho \Leftrightarrow \rho=\sum_i \tr{G_i\rho} \sigma_i +\tr{\psi_s(I)\rho}\sigma,
    \end{align}
    where $\sigma \in \sh$ is some state.
\end{proposition}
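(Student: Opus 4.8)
The plan is to reduce everything to the Schr\"odinger dual of the normal idempotent $\psi_n$. Writing $\psi_n(A)=\sum_i\tr{\sigma_i A}G_i$ as in Theorem~\ref{thmfixpchar}, its predual is the trace-non-increasing measure-and-prepare map $\psi_{n*}(\rho)=\sum_i\tr{G_i\rho}\sigma_i$, so the first biconditional is exactly the statement $\fix{\Phi_0}=\fix{\psi_{n*}}$ on states. The backward inclusion will be immediate from a dualisation identity, while the forward inclusion is the crux and will be handled by an ergodic-mean argument together with the uniqueness of the normal--singular decomposition of functionals on $\lh$.

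For the backward direction I would take preduals of the identity $\psi_n=\psi_n\circ\Phi_0^*$ established inside the proof of Proposition~\ref{propsingornorm}. Since $\psi_n$ and $\Phi_0^*$ are both normal, dualising reverses the order of composition and gives $\Phi_0\circ\psi_{n*}=\psi_{n*}$. Hence $\psi_{n*}(\rho)=\rho$ immediately forces $\Phi_0(\rho)=\Phi_0(\psi_{n*}(\rho))=\psi_{n*}(\rho)=\rho$.

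For the forward direction, assume $\Phi_0(\rho)=\rho$. By Lemma~\ref{lemmafixproj}, $\psi_0$ is a point-ultraweak limit of a subnet of the Ces\`aro means $\tfrac1n\sum_{k=0}^{n-1}(\Phi_0^*)^k$. Pairing with the fixed trace-class $\rho$ and using $\tr{\rho(\Phi_0^*)^k(A)}=\tr{\Phi_0^k(\rho)A}=\tr{\rho A}$, each mean already equals $\tr{\rho A}$, so in the limit $\tr{\rho\psi_0(A)}=\tr{\rho A}$ for all $A\in\lh$. Splitting $\psi_0=\psi_n+\psi_s$ and using $\tr{\rho\psi_n(A)}=\tr{\psi_{n*}(\rho)A}$, this rearranges to $\tr{(\psi_{n*}(\rho)-\rho)A}=-\tr{\rho\psi_s(A)}$. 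The left-hand side is a normal functional of $A$, since its implementing operator $\psi_{n*}(\rho)-\rho$ is trace class, whereas the right-hand side is singular: because $\psi_s$ is singular, $\psi_s(A)=0$ for every compact $A$, so $A\mapsto\tr{\rho\psi_s(A)}$ annihilates $\hK(\hi)$. A functional that is simultaneously normal and singular must vanish, so $\psi_{n*}(\rho)=\rho$ and, as a byproduct, $\tr{\psi_s(I)\rho}=0$. This normal/singular separation is the step I expect to be the main obstacle, as it is precisely what prevents the singular part from contributing ``phantom'' weight to a fixed point.

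Finally, for the completion I would set $\Xi(\rho)=\sum_i\tr{G_i\rho}\sigma_i+\tr{\psi_s(I)\rho}\sigma$ with $\sigma=\sigma_{i_0}$ one of the prepared states. Since $\sum_iG_i+\psi_s(I)=\psi_n(I)+\psi_s(I)=I$, the family $\{G_i\}_i\cup\{\psi_s(I)\}$ is a discrete POVM, so $\Xi$ is measure-and-prepare and hence entanglement breaking; it is trace preserving and its Heisenberg dual is unital, $\Xi^*(I)=I$. Fixed points are preserved: if $\Phi_0(\rho)=\rho$ then $\tr{\psi_s(I)\rho}=0$ from the previous paragraph and $\rho=\psi_{n*}(\rho)=\Xi(\rho)$; conversely, applying $\tr{G_j\,\cdot\,}$ to $\Xi(\rho)=\rho$ and using $\tr{G_j\sigma_i}=\delta_{ij}$ yields $\tr{\psi_s(I)\rho}\,\tr{G_j\sigma}=0$, and the choice $\sigma=\sigma_{i_0}$, for which $\tr{G_{i_0}\sigma}=1$, forces $\tr{\psi_s(I)\rho}=0$, returning us to $\psi_{n*}(\rho)=\rho$ and thus $\Phi_0(\rho)=\rho$.
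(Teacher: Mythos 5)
Your proof is correct and follows essentially the same route as the paper's: the backward direction by dualising $\psi_n=\psi_n\circ\Phi_0^*$ to $\Phi_0\circ\psi_{n*}=\psi_{n*}$, and the forward direction by pairing the Ces\`aro means of Lemma \ref{lemmafixproj} with the fixed $\rho$ and exploiting that $\psi_s$ annihilates $\hK(\hi)$ (the paper simply tests against compact operators from the outset, which makes your normal/singular separation of functionals automatic). The one genuine addition is your explicit verification of the converse of \eqref{eqschreb} via the choice $\sigma=\sigma_{i_0}$, a step the paper leaves implicit and which, as your argument tacitly shows, requires the normal part $\psi_n$ to be nonzero.
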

\begin{proof}
    Suppose $\psi_{n*}(\rho)=\rho$.
    Now $\Phi_0(\rho)=\rho$ since the proof of Proposition \ref{propsingornorm} implies $\psi_n \circ \Phi_0^*=\psi_n$, which in Schrödinger picture is just $\Phi_0 \circ \psi_{n*}=\psi_{n*}$. 
    \par Suppose then that $\Phi_0(\rho)=\rho$. Then for all compact operators $K$ we get the following.
    \begin{align}
        \tr{\rho K}=\tr{\rho \frac{1}{n}\sum_{k=0}^{n-1} (\Phi_0^*)^k(K)}
    \end{align}
    Thus by Lemma \ref{lemmafixproj} we have $\tr{\rho K}=\tr{\rho \psi_0(K)}=\tr{\rho \psi_n(K)}=\tr{\psi_{n*}(\rho)K}$. Since this holds for all compact operators, we have that $\psi_{n*}(\rho)=\rho$. We then get all $\rho \in \fix{\Phi_0}$ as fixed points of the EB-channel \eqref{eqschreb}, since $\tr{\rho \psi_s(I)}=0$. This follows from the fact that $\psi_s(I)=I-\psi_n(I)$ and $\tr{\rho \psi_n(I)}=1$.
\end{proof}

Proposition \ref{propschroedfixnormal} shows that the image of the Schrödinger picture of the normal part $\psi_n$ is a projection onto the trace-class fixed points of $\Phi_0$. Thus especially if $\Phi_0$ has no fixed point states, then $\psi_0$ is completely singular. \par 
 In infinite dimensional quantum systems, there indeed are channels with a completely singular fixed point projection $\psi_0$. An easy example of such a channel is the "shift-channel", defined as follows. Let $\{\ket{n}\}_{n \in \N}$ be an orthonormal basis of the infinite-dimensional separable Hilbert space $\hi$. The shift-channel $\hE_{\textrm{shift}}:\th \to \th$ is given by the following for all $T \in \th$.
\begin{align}\label{eqshiftchannel}
    \hE_{\textrm{shift}}(T):=\sum_{n \in \N} \ip{n}{T|n}\kb{n+1}{n+1}
\end{align}
It was shown in \cite{salzmann2024robustnessfixedpointsquantum} that this channel indeed has no trace-class fixed points other than 0. Therefore the normal part for this channel is just $\{0\}$. \par 
We give here another, non-discrete example of such a channel. 
In the following example, $\ket{z}=e^{-|z|^2/2}\sum_{n=0}^\infty \frac{z^n}{\sqrt{n!}}\ket{n}$ is the coherent state, defined in the separable Hilbert space $L^2(\R)$ with the orthonormal basis of Hermite functions $\{\ket{n}\}_{n=0}^\infty$.

\begin{example}\label{exqfunction}
    Let $\hi=L^2(\R)$ and $\Lambda^*:\lh \to \lh$ a map defined as $\Lambda^*(T)=\frac{1}{\pi}\int_{\mathbb{C}} \kb{z}{z} \, \ip{z}{T | z} \, dz$ for all $T \in \lh$  (i.e. the "$Q$-function" EB-channel). Then $\fix{\Lambda^*}=\mathbb{C}I$.
\end{example}

\begin{proof}
   The fixed point condition now reads out as 
   \begin{align}\label{eqqfunctfix}
       T=\frac{1}{\pi}\int_{\mathbb{C}} \kb{z}{z} \, \ip{z}{T | z} \, dz.
   \end{align}
   From this it follows that for all $w \in \mathbb{C}$
   \begin{align}\label{eqconvolution1}
       \ip{w}{T|w}=\frac{1}{\pi}\int_{\mathbb{C}} |\ip{z}{w}|^2 \, \ip{z}{T | z} \, dz 
   \end{align}
   Let us define $f(w):=\ip{w}{T|w}$. Thus equation \eqref{eqconvolution1} implies that 
   \begin{align}\label{eqconvolution2}
       f(w)=\int_{\mathbb{C}} \frac{e^{-|w-z|^2}}{\pi} \, f(z) \, dz
   \end{align}
   Thus $f=f *g$, where $g(z):=\frac{e^{-|z|^2}}{\pi} $ and $*$ denotes convolution. Therefore also $f=f*g^n$ for all $n \in \N$, where $g^n$ is the $n$-fold convolution of $g$ with itself. The idea is to show that this repeated convolving with a Gaussian "flattens" $f$ into a constant function. Then, since $f$ is constant, we have that for all $w \in \mathbb{C}$, $$c=f(w)=\ip{w}{T|w}$$ Thus equation \eqref{eqqfunctfix} implies that $T=cI$, finishing the proof. Showing that $f$ is constant is routine calculus, and thus the calculation is relegated to the Appendix. 
\end{proof}
The channel in the previous Example is self-adjoint in the sense that $\Lambda^*=\Lambda$, when restricted to the trace-class. Furthermore the only multiple of identity that is trace-class in an infinite-dimensional space is 0, thus this channel has no fixed point states. Hence, even though the fixed point projection is singular, the broadcasting algebra is still completely atomic.

We call $\Lambda$ the "$Q$-function" EB-channel because it is in fact related to the Husimi $Q$-representation \cite{husimi1940some} defined as $P_\rho (z) = \frac{\bra{z} \rho \ket{z}}{\pi}$. The argument is as follows: for any operator of the form $\int f(z) \dyad{z} dz$ and $w \in \mathbb{C}$, with $f$ bounded and continuous, we can define the map $A_w$ such that $A_w(\int f(z) \dyad{z} dz) = f(w)$. Formally $A_w$ is defined by $A_w(\dyad{z}) = \delta(z-w)$ where the right hand side is the Dirac delta function. We will show how the map $A_w$ can be defined rigorously below. Then $A_w (\Lambda(\rho)) = \frac{\bra{w} \rho \ket{w}}{\pi}$ is nothing but the Husimi $Q$-representation of $\rho$ evaluated at $w \in \mathbb{C}$. As also noted in \cite{linowski2024relating}, it is not straightforward to map the Husimi $Q$-representation of a state to back to its density matrix, or to its Wigner function. Hence one should not expect the channel $\Lambda$ to be invertible, let alone to have nontrivial fixed points.

Let us proceed to show that the map $A_w$ is well defined. The POVM $$G(X)=\frac{1}{\pi} \int_X \kb{z}{z} \, dz$$ is known to be extremal \cite{Heinosaari_2012}. Hence $$\int_\mathbb{C} f(z) \, dG(z)=\int_\mathbb{C} g(z) \, dG(z)$$ implies $f=g$ \cite{Pello_extreme}. This can be alternatively also seen with operator convolutions by using Proposition 1 and Proposition 4 in \cite{kiukas2012}. Thus the function $A_w$ is well-defined for all $w \in \mathbb{C}$. 

\subsubsection{Proofs of Subsections "Contextuality and broadcasting for quantum states" and "Contextuality and broadcasting for quantum measurements"}
We finally present the proofs of Sections "Contextuality and broadcasting for quantum states" and "Contextuality and broadcasting for quantum measurements".
\\
\textbf{Proof of Theorem \ref{thmstatecont}.}

    $\eqref{c_ont} \Leftrightarrow \eqref{c_eb}$ follows directly from Definition \ref{defnormal} as previously shown. 
    
    $\eqref{c_eb} \Rightarrow \eqref{c_br}$ follows from the fact that EB-channels can be broadcast with the channel \eqref{eqsymmetriceb}. 
    
    $\eqref{c_eb} \Leftarrow \eqref{c_br}$ follows from Proposition \ref{propschroedfixnormal}. 

    Finally the equivalence $\eqref{c_br} \Leftrightarrow \eqref{c_com}$ is Theorem 3.14 in \cite{Kaniowski2014}. 
    \qed
\\
\textbf{Proof of Theorem \ref{thmmeaschar}.}

$\eqref{d_ont} \Leftrightarrow \eqref{d_eb} \Rightarrow \eqref{d_br}$ follow from Definition \ref{defnormal} and broadcasting with the channel in equation \eqref{eqsymmetriceb}. \par 
    $\eqref{d_br} \Leftrightarrow \eqref{d_dec}$ follows from Theorem \ref{thmfixpchar} by choosing $M_n(X)=\psi_n(I)\bullet M(X)$ and $M_s(X)=\psi_s(I) \bullet M(X)$ for all $X \in \hB(\Omega_M)$. We should note however, that $M_s$ is indeed ultraweakly $\sigma$-additive here, since the product of a von Neumann algebra is always separately weakly$^*$-continuous \cite[Theorem 1.7.8]{Sakai1998} (and the predual of $\fix{\Phi_0^*}$ is just the trace-class quotiented with the annihilator $\fix{\Phi_0^*}^\perp$ \cite[Proposition V.2.12]{198897}). \par 
    Furthermore, if $M_s=0$ for some $M \in \hM$, then we have $M=M_n$. This especially implies that $I=M(\Omega_M)$ is in the normal part of the broadcasting algebra. Thus if $\psi_n$ is the projection to the normal part given in Proposition \ref{propsingornorm}, we have that $\psi_n(I)=I$ i.e. $\psi_s=0$. Thus equation \eqref{eqfixpointdecomp} gives an EB-channel, showing $\eqref{d_dec} \Rightarrow \eqref{d_eb}$. This makes all the statements equivalent. \qed
\\
\textbf{Proof of Corollary \ref{corpvmdiscrete}.}

    Let $\hN$ be the (concrete) von Neumann algebra generated by the PVMs in $\hM$ and let $P,Q \in \hN$ be projections. Let us show that $QP=Q \bullet P$ so that $QP \in \fix{\Phi_0^*}$ and thus $\hN \subset \fix{\Phi_0^*}$. 
    \begin{align}
        QP=(Q\bullet P)P+(Q \bullet P^\perp)P
    \end{align}
    Now $(Q \bullet P^\perp)P=0$, which is seen as follows. As a reminder, the broadcasting product $\bullet$ is bipositive, so $0\leq Q \bullet P^\perp \leq I \bullet P^\perp=P^\perp$. Therefore we have that $$P(Q \bullet P^\perp)^2P\leq PP^\perp P=0$$ and thus 
    \begin{align}
    |(Q \bullet P^\perp)P|^2&=((Q \bullet P^\perp)P)^*((Q \bullet P^\perp)P)\\
    &=P(Q \bullet P^\perp)^2P=0. 
    \end{align}
    Hence taking the polar decomposition we see that $$(Q \bullet P^\perp)P=V|(Q \bullet P^\perp)P|=0.$$ We also similarly see that $ (Q \bullet P)P^\perp=0$. Thus
    \begin{align}
        PQ=P(Q\bullet P)=Q\bullet P-P^\perp(Q\bullet P)=Q\bullet P
    \end{align}
    Thus $\hN$ is a subalgebra contained within the fixed points and so by Theorem \ref{thmlhalgebra} is completely atomic. Therefore there exists a discrete generating PVM $\{P_i\}_i$ i.e. for all projections $P \in \hN$ we have 
    \begin{align}\label{eqpureatom}
        P=\sum_{i \in I_P} P_i
    \end{align}
    Note that since $\hi$ is separable, the index set can at most be countably infinite. 
    Now picking for every $i$ a state $\sigma_i$ supported by $P_i$ we get that all projections $P \in \hN$ can be given as fixed points of the EB-channel $A \mapsto \sum_i \tr{\sigma_i A}P_i$. This shows \eqref{d_br} $\Rightarrow$ \eqref{d_eb} in Theorem \ref{thmmeaschar}.\par 
    Furthermore, if $X \mapsto P(X)$ is a PVM in $\hM$, then the von Neumann algebra generated by $X \mapsto P(X)$ is a subalgebra of $\hN$, so by Theorem \ref{thmlhalgebra} and separability of $\hi$,  $P$ is a purely atomic PVM. Of course one can also see this easily from equation \eqref{eqpureatom}. Since the measurable space $(\Omega_P,\hB(\Omega_P))$ is assumed to be standard Borel, this means that $\Omega_P$ is countable and thus $P$ is a discrete operator measure (see e.g. Theorem 2 of Section 2.1.6 in \cite{Kadets2018}).

\qed

\section{V. Connection between the approximate and the non-normal definition}

As was shown in the previous sections, defining noncontextual ontological models consistent with usual probability theory leads to an unavoidable discreteness condition for sharp measurements. Thus if one wishes that commuting measurements, such as position, are contextuality non-confirming, some assumptions must be relaxed. One option for this is to give up the normality of the measurement response functions in Definition \ref{defnonnormal}.

Let us examine Definition \ref{defnonnormal} more closely. This is again a measure-and-prepare scenario, where the measurement responses are now given a degree of freedom over null sets i.e. all properties of $T$ are defined almost everywhere. In other words, when $\rho$ is measured with the measurement $G$, all observable outcomes must be in a set with positive measure $X$ and the state $\omega_X^\rho(A)=\frac{1}{\tr{\rho G(X)}} \int_X T(A)(\lambda) \, d\tr{\rho G(\lambda)}$ is prepared. Since there is no additional continuity assumptions for $T$, we conclude that the prepared state $\omega_X^\rho$ can be a non-normal state in $\hS(\lh)$. One could also require the state preparation to be pointwise i.e. $A \mapsto T(A)(\lambda)$ is a state for all $\lambda$, (which is a special case of Definition \ref{defmeasresp}), but we focus on the more natural definition, where the properties of $T$ are not in general defined pointwise. 

In the familiar discrete case i.e. $\Omega=\{\lambda_1, \dots, \lambda_n\}$ there are no (nontrivial) null sets, so the measurement responses of Definition \ref{defmeasresp} are exactly the usual measurement responses: for a POVM $\{M_k\}_k$ we may freely set $p(k|M,\lambda_n)=T(M_k)(\lambda_n)$ to get an equivalent model.  Furthermore, assuming that $A \mapsto T(A)$ is normal (i.e. continuous from the weak$^*$-topology of $\lh$ to the weak$^*$-topology of $L^\infty$) we recover the pointwise state preparation present in Definition \ref{defnormal} of contextuality non-confirming sets. This is what we prove next. This then especially implies that Definition \ref{defnonnormal} is consistent with Definition \ref{defnormal}. 
\par We begin the proof by first proving the following useful lemma.
\begin{lemma}\label{lemmastateextension}
    Let $(\Omega,\hB(\Omega),\mu)$ be a standard Borel measure space, $T:\lh \to L^\infty(\Omega,\mu)$ a positive linear map with $T(I)=1$ and $\hD \subset \lh$ a norm-separable, norm-closed and self-adjoint (i.e. $A \in \hD \Leftrightarrow A^* \in \hD$) subspace. Then for every countable dense set $C$ of $\hD$ there exists a $N_C \subset \Omega$ with $\mu(N_C)=0$ and a family of states $\{\omega_\lambda\} \subset \hS(\lh)$ such that $T(A)(\lambda)=\omega_\lambda(A)$ for all $A \in C$ and $\lambda \notin N_C$.
\end{lemma}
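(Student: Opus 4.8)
The plan is to realise the pointwise states by first pinning down \emph{consistent} representatives of the $L^\infty$-classes on a countable set, and only afterwards extending the resulting functional to a genuine state. First I would replace $C$ by a larger but still countable set $D_0$, namely the $(\Q+\ii\Q)$-linear span of $C \cup C^* \cup \{I\}$; this is countable, self-adjoint, contains the identity, and its norm-closure is a closed operator system containing $\hD$. For each $A \in D_0$ I fix once and for all a measurable representative $t_A$ of the class $T(A) \in L^\infty(\Omega,\mu)$.

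The key observation is that all the structural relations we need hold almost everywhere and are only countable in number. Concretely, since $T$ is linear, for every $A,B \in D_0$ and every $p,q \in \Q+\ii\Q$ the identity $t_{pA+qB} = p\,t_A + q\,t_B$ holds $\mu$-a.e.; since $T(I)=1$ we have $t_I = 1$ a.e.; and since a positive unital map between unital $C^*$-algebras satisfies $\Vert T\Vert=\Vert T(I)\Vert=1$, we have $|t_A| \le \Vert A\Vert$ a.e.\ for each $A \in D_0$. Each of these is a single almost-everywhere statement, and there are only countably many of them, so the union $N_C$ of their exceptional sets still satisfies $\mu(N_C)=0$.

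For $\lambda \notin N_C$ the assignment $\omega_\lambda^0(A) := t_A(\lambda)$ is then a $(\Q+\ii\Q)$-linear unital functional on $D_0$ with $|\omega_\lambda^0(A)|\le\Vert A\Vert$. Boundedness lets me extend it by continuity to a $\mathbb{C}$-linear functional $\tilde\omega_\lambda$ on the norm-closure $\overline{D_0}$, the point being that a rational sequence approximating a complex scalar forces complex linearity in the limit, while $\tilde\omega_\lambda(I)=1=\Vert\tilde\omega_\lambda\Vert$ is preserved. Since a bounded functional on an operator system that attains its norm at $I$ is automatically positive, $\tilde\omega_\lambda$ is a state on the operator system $\overline{D_0}$, and by the standard extension theorem for states on operator systems (a consequence of Hahn--Banach/Arveson) it extends to a state $\omega_\lambda \in \hS(\lh)$. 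For $A \in C \subset D_0$ we then have $\omega_\lambda(A)=\omega_\lambda^0(A)=t_A(\lambda)=T(A)(\lambda)$ for all $\lambda \notin N_C$, which is exactly the claim.

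The only genuine obstacle is the measure-theoretic bookkeeping: each $T(A)$ is merely an equivalence class, so linearity and positivity hold only off $A$-dependent null sets, and there are a priori uncountably many complex linear combinations whose relations one would want to enforce simultaneously. The device that resolves this is precisely the descent to the countable rational span $D_0$, on which only countably many null sets arise so that their union remains null; the subsequent passage from a bounded rational-linear functional to a bona fide complex state on all of $\lh$ is then routine, relying on continuity together with the operator-system state-extension theorem.
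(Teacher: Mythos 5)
Your proof is correct and follows the same overall skeleton as the paper's: pass to a countable rational-linear structure containing $C$ and $I$, collect the countably many exceptional null sets into a single null set, obtain for each $\lambda$ outside it a positive unital functional on an operator system, and extend it to a state on $\lh$ via Arveson/Krein/Hahn--Banach. The one step you handle genuinely differently is positivity. The paper establishes it directly, by fixing a countable dense subset of $S\cap\lh_+$ on which $T$ is a.e.\ nonnegative and then approximating arbitrary positive elements of the operator system; this only requires the crude bound $|T(A)(\lambda)|\le K\Vert A\Vert$ for some constant $K$. You instead invoke the sharp bound $\Vert T\Vert=\Vert T(I)\Vert=1$ (valid here because the codomain $L^\infty(\Omega,\mu)$ is commutative, or by the Russo--Dye corollary for positive unital maps) to conclude $\Vert\tilde\omega_\lambda\Vert=\tilde\omega_\lambda(I)=1$, and then use the standard fact that a bounded functional on an operator system attaining its norm at the identity is automatically positive. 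Your route saves the separate density argument in the positive cone and the associated null-set bookkeeping, at the price of needing the norm-one property of positive unital maps and the norm-attainment criterion for positivity; both arguments are sound and yield the same conclusion.
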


\begin{proof}
    Consider the subspace $S=\mathrm{lin}(\hD \cup \{I\})$. Then $S$ is also separable, since if $\{A_n\}_{n \in \N}$ is a countable dense set of $\hD$, then $(\{A_n\}_{n \in \N} + (\Q+i\Q)I)$ is a countable dense set in $S$. The goal is to use Arveson's extension theorem for operator systems to find the states $\omega_\lambda$ and adjoining the identity this way makes $S$ into an operator system. \par 
    Let thus $C_1$ be some countable dense set of $S$.
    Furthermore, since $S$ is separable, so is $S \cap \lh_+$ as a metric subspace of $S$. Let thus  $C_2$ be some countable dense set in $S \cap \lh_+$. We can then define the countable dense set $C=C_1 \cup C_2$. \par 
    We first consider continuity. Now $T$ is bounded since it is positive. Thus there exists a $K>0$ such that for each $A\in \lh$ we have $|T(A)(\lambda)|\leq K\Vert A\Vert$ for almost every $\lambda$. Especially if $C=\{A_n\}_{n \in \N}$ is the above specified countable dense set, then the set of complex-rational linear combinations of $C$ is a countable set, denoted by $\mathrm{lin}_{\Q+i\Q}(C)$. Then we have that
    \begin{align}
        |T(D_n)(\lambda)|\leq K \Vert D_n \Vert, \quad \lambda \notin M_n, \quad \mu(M_n)=0
    \end{align}
    for all $D_n \in \mathrm{lin}_{\Q+i\Q}(C)$.
    Thus especially $|T(A_n-A_m)(\lambda)|\leq K\Vert A_n-A_m\Vert$ for all $n,m$, when $\lambda \notin N_1:=\bigcup_{n \in \N} M_n$, where $N_1$ is a null set by the subadditivity of the measure $\mu$. Thus for all $\lambda \notin N_1$, $A \mapsto T(A)(\lambda)$ is uniformly continuous on $C$. It is well known that for uniformly continuous functions, from a dense set to a Banach space, there is a unique continuous extension to the whole space. Let thus $\phi_\lambda:S \to \mathbb{C}$ be this extension for all $\lambda \notin N_1$.  \par 
    Let us then consider positivity. For this, let  $C_2=\{B_n\}_{n \in \N}$. Since $C_2 \subset C$, we have for all $n$ and $\lambda \notin N_1$ that $\phi_\lambda(B_n)=T(B_n)(\lambda)$. The fact that $T$ is a positive linear map implies that for all $n \in \N$ we have $T(B_n)(\lambda)\ge 0$ for $\lambda \notin L_n$, $\mu(L_n)=0$. Thus $T(B_n)(\lambda)\ge 0$ holds for all $n \in \N$ when $\lambda \notin N_2:=\bigcup_{n \in \N} L_n$. Then $N_2$ is a null set and furthermore $N_1 \cup N_2$ is a null set, both by subadditivity. Let then $\lambda \notin N_1 \cup N_2$ and $A \in S \cap \lh_+$. Then there is a subsequence $\{B_{n_k}\}_{k \in \N} \subset C_2$ such that $A=\lim_{k \to \infty } B_{n_k}$. Therefore, since $\phi_\lambda$ is continuous, we have the following.
    \begin{align}
       \phi_\lambda(A)=\lim_{k \to \infty} \phi_\lambda(B_{n_k})=\lim_{k \to \infty} T(B_{n_k})(\lambda)\ge 0
    \end{align}
    Thus for all $\lambda \notin N_1 \cup N_2$ the map $\phi_\lambda$ is positive in $S$. \par 
    Let us then consider linearity. Consider the countable set $C^2 \times (\Q+i\Q)^2$. Now by $L^\infty$-linearity, the following holds for all $(A_n,A_m,q_k,q_l) \in C^2 \times (\Q+i\Q)^2$.
    \begin{align}\label{eqlineardense}
        T(q_kA_n+q_lA_m)&(\lambda)=q_kT(A_n)+q_lT(A_m) \\
        &\mathrm{for}\;\lambda \notin N_{(A_n,A_m,q_k,q_l)}
    \end{align}
    Thus we can define the null set $N_3:=\bigcup_{\mathbf{a} \in C^2 \times (\Q+i\Q)^2} N_{\mathbf{a}}$ so that the equation \eqref{eqlineardense} holds for all $(A_n,A_m,q_k,q_l) \in C^2 \times (\Q+i\Q)^2$
    when $\lambda \notin N_3$. Suppose then that $\lambda \notin N_1 \cup N_2 \cup N_3$, $A,B \in S$ and $a_1,a_2 \in \mathbb{C}$. Then there exist subsequences from $C$ converging to $A,B$ and subsequences from $(\Q+i\Q)$ converging to $a_1$ and $a_2$. Thus since when $\lambda \notin N_1$, $\phi_\lambda$ is continuous, we can use continuity four times to deduce that $\phi_\lambda(a_1A+a_2B)=a_1\phi_\lambda(A)+a_2\phi_\lambda(B)$ when $\lambda \notin N_1 \cup N_3$. Thus if $\lambda \notin N_1 \cup N_2 \cup N_3$, we have that $\phi_\lambda$ is bounded, positive and linear.
     \par 
    Finally, we can pick a null set $N_4$ such that $T(I)(\lambda)=1$ when $\lambda \notin N_4$. Thus by defining $N= N_1 \cup N_2 \cup N_3 \cup N_4$ we see that $\phi_\lambda$ is a positive normalized linear functional on the operator system $S$ for all $\lambda \notin N$. Then especially $\phi_\lambda$ is a completely positive map from the operator system $S$ to $\mathbb{C}$ for all $\lambda \notin N$. Thus we can use Arveson's extension theorem \cite[Theorem 7.5]{Paulsen_2003} to extend $\phi_\lambda$ to a state $\omega_\lambda \in \hS(\lh)$ for all $\lambda \notin N$. Furthermore for $A \in C$ we have $\omega_\lambda(A)=\phi_\lambda(A)=T(A)(\lambda), \; \lambda \notin N$, finishing the proof. 
\end{proof}

\begin{proposition}\label{propdefnormalreduction}
Suppose that in Definition \ref{defnonnormal} $T:\lh \to L^\infty(\Omega,\tr{\rho_0G})$ is normal. Then there exists a weakly measurable set of states $\{\sigma_\lambda\}_{\lambda \in \Omega}$ such that the following holds for all $A \in \lh$ and $\rho \in \sh$.
\begin{align}
    \int_\Omega T(A)(\lambda) d\tr{\rho G(\lambda)}=\int_\Omega \tr{\sigma_\lambda A} d\tr{\rho G(\lambda)}
\end{align}
Especially Definition \ref{defnormal} is a special case of Definition \ref{defnonnormal}.
\end{proposition}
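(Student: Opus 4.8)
The plan is to exploit that a normal, i.e.\ weak$^*$-to-weak$^*$ continuous, map between dual Banach spaces is an adjoint, and then to disintegrate the resulting predual map by a vector-valued Radon--Nikodym theorem. Write $\mu:=\tr{\rho_0 G}$, so that $L^\infty(\Omega,\mu)=L^1(\Omega,\mu)^*$ and $\lh=\th^*$. Since $T:\lh\to L^\infty(\Omega,\mu)$ is assumed normal, it is weak$^*$-continuous, and a bounded operator between dual spaces is weak$^*$-weak$^*$ continuous precisely when it is the adjoint of a bounded operator on the preduals. Thus $T=S^*$ for some bounded $S=:T_*:L^1(\Omega,\mu)\to\th$ characterised by
\begin{align}\label{eqprednotation}
    \int_\Omega T(A)(\lambda)\,f(\lambda)\,d\mu(\lambda)=\tr{T_*(f)A}
\end{align}
for all $A\in\lh$ and $f\in L^1(\Omega,\mu)$. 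Positivity of $T$ makes $T_*$ positive, while $T(I)=1$ gives $\tr{T_*(f)}=\int_\Omega f\,d\mu$; in particular $T_*(\chi_X)\ge 0$ and $\tr{T_*(\chi_X)}=\mu(X)$ for every $X\in\hB(\Omega)$.

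First I would verify that $X\mapsto T_*(\chi_X)$ is a countably additive $\th$-valued measure, absolutely continuous with respect to $\mu$ and of bounded variation. Countable additivity in trace norm follows from continuity of $T_*$ together with $\Vert\chi_{\bigcup_{n\ge N}X_n}\Vert_{L^1}=\mu(\bigcup_{n\ge N}X_n)\to 0$ for disjoint sets (recall $\mu$ is a probability measure); absolute continuity is immediate since $\mu(X)=0$ forces $\chi_X=0$ in $L^1$; and positivity gives the total variation $\sum_i\Vert T_*(\chi_{X_i})\Vert_1=\sum_i\tr{T_*(\chi_{X_i})}=\mu(\Omega)=1$, so in fact the variation equals $\mu$. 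Because $\hi$ is separable, $\hK(\hi)$ is separable and $\th=\hK(\hi)^*$ is a separable dual space, hence has the Radon--Nikodym property. The vector-valued Radon--Nikodym theorem then yields a Bochner-measurable map $\lambda\mapsto\sigma_\lambda\in\th$ with $T_*(\chi_X)=\int_X\sigma_\lambda\,d\mu(\lambda)$ for all $X$, and by density of simple functions and continuity of $T_*$ one obtains $T_*(f)=\int_\Omega f(\lambda)\,\sigma_\lambda\,d\mu(\lambda)$ for all $f\in L^1$.

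Next I would check that the $\sigma_\lambda$ are states almost everywhere and assemble the conclusion. Taking traces in the Radon--Nikodym identity gives $\int_X\tr{\sigma_\lambda}\,d\mu=\mu(X)$ for all $X$, so $\tr{\sigma_\lambda}=1$ a.e. For positivity I would test against a countable $\Vert\cdot\Vert$-dense subset $\{B_k\}$ of the separable cone $\hK(\hi)_+$: from $\int_X\tr{\sigma_\lambda B_k}\,d\mu=\tr{T_*(\chi_X)B_k}\ge 0$ one gets $\tr{\sigma_\lambda B_k}\ge 0$ a.e., and intersecting the countably many null sets yields $\sigma_\lambda\ge 0$ off a common null set. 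Separability of $\th$ makes Bochner measurability equivalent to weak measurability (Pettis), giving the required weakly measurable family. Finally, faithfulness of $\rho_0$ gives that $\tr{\rho_0 G(X)}=0$ implies $G(X)=0$, hence $\tr{\rho G}\ll\mu$ with density $h_\rho\in L^1_+$; applying \eqref{eqprednotation} with $f=h_\rho$ yields
\begin{align}
    \int_\Omega T(A)(\lambda)\,d\tr{\rho G(\lambda)}=\tr{T_*(h_\rho)A}=\int_\Omega \tr{\sigma_\lambda A}\,d\tr{\rho G(\lambda)},
\end{align}
which is the claimed identity, and the final ``special case'' assertion follows since $T(A)(\lambda)=\tr{\sigma_\lambda A}$ reproduces exactly the integrand of Definition~\ref{defnormal}. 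I expect the disintegration to be the main obstacle: one must confirm that $\th$ really carries the Radon--Nikodym property and that the positive vector measure $X\mapsto T_*(\chi_X)$ meets its hypotheses, after which extracting an a.e.-positive, trace-one density demands the separable-testing argument above, precisely because $\lh$ itself is non-separable and cannot be probed by a single countable family directly.
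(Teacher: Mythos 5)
Your proof is correct, but it takes a genuinely different route from the paper's. The paper first invokes its Lemma~\ref{lemmastateextension} (an Arveson-extension argument on a separable operator system containing the matrix units) to produce, off a null set, pointwise $C^*$-algebraic states $\omega_\lambda$ agreeing with $T$ on a countable dense set of compacts; it then applies Takesaki's normal/singular decomposition to extract trace-class operators $T_\lambda$, establishes weak measurability via matrix elements, and only at the last step uses normality of $T$ (together with dominated convergence along $A=\lim_n P_nAP_n$) to upgrade the identity from compact operators to all of $\lh$, normalizing the traces by setting $A=I$, $\rho=\rho_0$. You instead exploit normality immediately and structurally: weak$^*$-to-weak$^*$ continuity gives a predual map $T_*:L^1(\Omega,\mu)\to\th$, the positive vector measure $X\mapsto T_*(\chi_X)$ is $\mu$-continuous and of bounded variation (with variation exactly $\mu$ since trace norm equals trace on positive operators), and the Radon--Nikodym property of the separable dual $\th=\hK(\hi)^*$ delivers the Bochner density $\sigma_\lambda$ directly; positivity and normalization then follow by countable testing. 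Your argument is shorter and makes transparent exactly where normality enters (it is equivalent to the existence of $T_*$). What the paper's longer route buys is reusability: Lemma~\ref{lemmastateextension} and the normal-part extraction are deployed again in Proposition~\ref{propdefequivalence}, where $T$ is \emph{not} assumed normal and your predual/RNP machinery would not be available. All the technical points you flag as potential obstacles check out: $\th$ is a separable dual space and hence has the RNP, and the countable-testing argument against a dense subset of $\hK(\hi)_+$ correctly yields a.e.\ positivity despite the non-separability of $\lh$.
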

\begin{proof}
    If $\dim \hi <\infty$, the result is trivial, so assume that $\dim \hi =\infty$.
    It is well known, that the set of compact operators $\hK(\hi)$ is norm-separable when $\hi$ is a separable Hilbert space. Let $\{\ket{n}\}_{n\in \N}$ be an orthonormal basis of $\hi$ and let $C$ be some countable dense set of the compact operators containing $\{\kb{m}{n}\}_{m,n \in \N}$. Thus by Lemma \ref{lemmastateextension} there is a null set $N$ and states $\{\omega_{\lambda}\}_{\lambda \notin N}$ such that $T(\kb{m}{n})(\lambda)=\omega_\lambda(\kb{m}{n})$ for all $n,m$. Let thus $\lambda \notin N$. Now by Theorem III.2.14 in \cite{Takesaki1979} $\omega_{\lambda}$ can be divided to normal and singular parts $\omega_{\lambda}=\omega_{\lambda,norm}+\omega_{\lambda,sing}$.
    Let $T_{\lambda} \in \th$ be such that $\omega_{\lambda,norm}=\tr{T_\lambda (\cdot)}$. Now since singular linear functionals map all compact operators to zero, we have the following.
    \begin{align}
        T(\kb{m}{n})(\lambda)=\omega_\lambda(\kb{m}{n})=\tr{T_\lambda \kb{m}{n}}
    \end{align}
    Thus especially setting $T_\lambda$ to some state in the null set $N$, we get a measurable function $\lambda \mapsto \tr{T_\lambda \kb{m}{n}}$ for each $m,n \in \N$. Furthermore, $T_\lambda \ge 0$, and since $\tr{T_\lambda (\cdot )}\leq \omega_\lambda$, we have $\tr{T_\lambda}\leq 1$. \par 
    Let then $A \in \lh$ be arbitrary. Then ultraweakly
    $$A=\lim_{n \to \infty} P_nAP_n=\lim_{n \to \infty} \sum_{j,k}^n \ip{j}{A|k}\kb{j}{k}
    $$ where $P_n=\sum_{k=1}^n |k\rangle\langle k|$. Thus $\tr{T_\lambda A}=\lim_{n \to \infty} \sum_{j,k}^n \ip{j}{A|k}\tr{T_\lambda \kb{j}{k}}$, so that $\lambda \mapsto \tr{T_\lambda A}$ is measurable as a pointwise limit of measurable functions for all $A \in \lh$. Furthermore for all $\lambda \notin N$ we have $|\tr{T_\lambda P_nAP_n}|\leq \Vert A\Vert$, so that the measurable functions $\lambda\mapsto\tr{T_\lambda P_nAP_n}$ are  dominated by an integrable (constant) function. Let then $\rho \in \sh$ be any state. Since $\tr{\rho G}$ is absolutely continuous w.r.t. $\tr{\rho_0G}$ (as $\rho_0$ is faithful), by the Radon-Nikodym theorem there is a $f_\rho \in L^1(\Omega,\tr{\rho_0 G})$ such that $\tr{\rho G(X)}=\int_X f_\rho(\lambda) \, d\tr{\rho_0 G(\lambda)}$. Thus especially $f \mapsto \int_\Omega f(\lambda) \, d(\tr{\rho G(\lambda)})$ is a $L^{\infty}(\tr{\rho_0 G})$-weak$^*$-continuous linear functional. Since $T$ is normal, $T(A)=\lim_{n\rightarrow\infty} T(P_nAP_n)$ in $L^\infty(\Omega,\tr{\rho_0 G})$-weak$^*$ sense, and hence the following holds for all $\rho \in \sh$:
    \begin{align}
        &\int_\Omega T(A)(\lambda) \, d\tr{\rho G(\lambda)}\\
        &=\lim_{n \to \infty } \int 
        _\Omega T(P_nAP_n)(\lambda) \, d\tr{\rho G(\lambda)} \\
        &=\lim_{n \to \infty } \int 
        _\Omega \tr{T_\lambda P_nAP_n} d\tr{\rho G(\lambda)}
        \\
        &\overset{DCT}{=} \int 
        _\Omega \tr{T_\lambda A} \,d\tr{\rho G(\lambda)}
    \end{align}
    Especially setting $A=I$ and $\rho=\rho_0$, we see that $\tr{T_\lambda}=1$ almost everywhere. Thus the $\{T_\lambda\}_\lambda$ can be modified within a set of measure zero to be a weakly measurable set of states $\{\sigma_\lambda\}_\lambda$, finishing the proof.
\end{proof}
Thus the different definitions are indeed consistent with each other. We now show that, when considering states, also the contextuality-confirming sets of states are the same for Definitions \ref{defnormal} and \ref{defnonnormal}. This further indicates that the normal definition is enough when considering only states.

\begin{proposition}\label{propdefequivalence}
    Let $\hS \subset \sh$ be a set of states. Then $\hS$ is contextuality non-confirming for all measurements in the sense of Definition \ref{defnormal} if and only if it is contextuality non-confirming for all measurements in the sense of Defintion \ref{defnonnormal}. 
\end{proposition}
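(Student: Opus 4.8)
The plan is to prove the two implications separately, the forward one being immediate. If $\hS$ is contextuality non-confirming in the sense of the normal Definition~\ref{defnormal}, then by Proposition~\ref{propdefnormalreduction} the normal model is a special case of the non-normal one (take $T$ normal, represented by the family $\{\sigma_\lambda\}$), so $\hS$ is non-normal contextuality non-confirming. The content is therefore the converse: a non-normal model for a \emph{set of states} can always be upgraded to a normal one.

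So suppose $\hS$ admits a non-normal model with POVM $G$, faithful $\rho_0$, and $G$-measurement response $T:\lh \to L^\infty(\Omega,\tr{\rho_0 G})$ satisfying \eqref{eqdefnonnormalont} for all $\rho \in \hS$ and all $M \in \hO$. First I would represent $T$ pointwise on the compact operators. Since $\hi$ is separable, $\hK(\hi)$ is norm-separable; applying Lemma~\ref{lemmastateextension} to a countable dense set $C \subset \hK(\hi)$ containing all matrix units $\kb{m}{n}$ yields a null set $N$ and states $\omega_\lambda \in \hS(\lh)$ with $T(\kb{m}{n})(\lambda) = \omega_\lambda(\kb{m}{n})$ for $\lambda \notin N$. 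Exactly as in the proof of Proposition~\ref{propdefnormalreduction}, I decompose $\omega_\lambda = \omega_{\lambda,n} + \omega_{\lambda,s}$ into its normal and singular parts \cite{Takesaki1979} and let $T_\lambda \in \th$ be the subnormalized density operator with $\omega_{\lambda,n} = \tr{T_\lambda \cdot}$; the family $\lambda \mapsto T_\lambda$ is weakly measurable, and since the singular part annihilates compacts, $\tr{T_\lambda \kb{m}{n}} = \omega_\lambda(\kb{m}{n})$ off $N$.

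The key step is to recover the mass lost to the singular part. Taking $E = \kb{m}{n}$ in the non-normal model and inserting the pointwise representation gives, for every $\rho \in \hS$,
\begin{align}
    \ip{n}{\rho|m} = \int_\Omega \tr{T_\lambda \kb{m}{n}} \, d\tr{\rho G(\lambda)}.
\end{align}
Summing the diagonal terms and applying Tonelli's theorem, $\int_\Omega \tr{T_\lambda} \, d\tr{\rho G(\lambda)} = \sum_k \ip{k}{\rho|k} = 1$, so that $\int_\Omega (1 - \tr{T_\lambda}) \, d\tr{\rho G(\lambda)} = 0$ for all $\rho \in \hS$. Fixing any state $\tau \in \sh$, I define the genuine states $\sigma_\lambda := T_\lambda + (1 - \tr{T_\lambda})\tau$, which again form a weakly measurable family. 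Because the defect integral vanishes on $\hS$, the completion term contributes nothing, and
\begin{align}
    \int_\Omega \tr{\sigma_\lambda \kb{m}{n}} \, d\tr{\rho G(\lambda)} = \ip{n}{\rho|m}
\end{align}
for all $\rho \in \hS$ and all $m,n$.

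This exhibits each $\rho \in \hS$ as a fixed point of the entanglement breaking channel $\Lambda_{EB}(\cdot) = \int_\Omega \sigma_\lambda \, d\tr{\cdot\, G(\lambda)}$: indeed $\Lambda_{EB}(\rho)$ and $\rho$ are trace-class operators with identical matrix elements and hence coincide. Thus $\hS \subset \fix{\Lambda_{EB}}$, and the implication $\eqref{c_eb} \Rightarrow \eqref{c_ont}$ of Theorem~\ref{thmstatecont} shows that $\hS$ is normal contextuality non-confirming, completing the converse. I expect the main obstacle to be the non-separability of $\lh$, which blocks a direct pointwise representation of $T$ on all effects $M(X)$; this is circumvented by restricting to $\hK(\hi)$, on which it suffices to verify the fixed-point identity since two trace-class operators agreeing on all matrix units are equal. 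The second delicate point is that relaxing normality lets the singular part absorb probability mass, so the naive density $T_\lambda$ need not be a state; the mass computation shows this defect integrates to zero against every $\tr{\rho G}$ with $\rho \in \hS$ and is therefore harmless after the convex completion.
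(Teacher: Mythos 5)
Your proof is correct, and up to the final step it follows the paper's argument exactly: the trivial forward direction via Proposition~\ref{propdefnormalreduction}, the pointwise representation of $T$ on the matrix units via Lemma~\ref{lemmastateextension}, the normal/singular decomposition yielding subnormalized $T_\lambda\in\th$, and the diagonal summation giving $\int_\Omega \tr{T_\lambda}\,d\tr{\rho G(\lambda)}=1$ for every $\rho\in\hS$. Where you genuinely diverge is in upgrading the subnormalized $T_\lambda$ to states. The paper first establishes that $\tr{T_\lambda}=1$ outside a single null set $N_1$ common to all of $\hS$, which it obtains by exploiting the separability of the trace class: it picks a countable trace-norm-dense subset $\{\rho_n\}\subset\hS$, forms the mixture $\tilde\rho=\sum_n t_n\rho_n$, shows $\tilde\rho$ is itself a fixed point, and then propagates the $\tr{\tilde\rho G}$-null set to every $\rho\in\hS$ by density; only then does it redefine $T_\lambda$ on $N_1$. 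Your convex completion $\sigma_\lambda:=T_\lambda+(1-\tr{T_\lambda})\tau$ sidesteps all of this: since $1-\tr{T_\lambda}\ge 0$ and its integral against each $\tr{\rho G}$ vanishes, the correction term $\tr{\tau\kb{m}{n}}\int_\Omega(1-\tr{T_\lambda})\,d\tr{\rho G(\lambda)}$ is exactly zero (the constant factor pulls out of the integral), so no common null set is ever needed. This is cleaner and slightly more general in spirit, as it makes no use of a countable dense subset of $\hS$; what the paper's route buys in exchange is the explicit information that $T_\lambda$ is already a state $\tr{\rho G}$-almost everywhere for every $\rho\in\hS$, i.e.\ that the singular part carries no mass on the relevant supports, which is a mildly stronger structural statement than the fixed-point identity alone. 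Both arguments land on the same EB-channel fixed-point formulation and correctly invoke weak measurability of $\lambda\mapsto\sigma_\lambda$ (inherited from that of $\lambda\mapsto\tr{T_\lambda A}$ as a pointwise limit of measurable functions) to apply the Holevo representation.
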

\begin{proof}
    Definition \ref{defnonnormal} is an extension of Definition \ref{defnormal} so one of the directions is trivial. We focus on the other direction.
    \par
    Let $\{\ket{n}\}_{n=1}^{\dim\hi}$ be an orthonormal basis of $\hi$.
    One can repeat the argument of Proposition \ref{propdefnormalreduction} (until normality was used) to find a weakly measurable family of $T_\lambda \in \th$ such that for all $\lambda \in \Omega$ we have $T(\kb{m}{n})(\lambda)=\tr{T_\lambda \kb{m}{n}}$ for all $m,n \in \N$, $T_\lambda \ge 0$ and $\tr{T_\lambda}\leq 1$. Let then $\rho \in \hS$. Then we have the following for all $m,n \in \N$
    \begin{align}
        \tr{\rho \kb{m}{n}}&=\int_\Omega T(\kb{m}{n})(\lambda) \, d\tr{\rho G(\lambda)}\\
        &=\int_\Omega \tr{T_\lambda\kb{m}{n}} \, d\tr{\rho G(\lambda)}\label{eqstatesandwich}
    \end{align}
    Then especially we get the following using the monotone convergence theorem. 
    \begin{align}
        1&=\tr{\rho}=\sum_{n \in \N}\tr{\rho \kb{n}{n}} \\
        &=\sum_{n \in \N}\int_\Omega \tr{T_\lambda\kb{n}{n}} \, d\tr{\rho G(\lambda)} \\
        &=\int_\Omega \tr{T_\lambda} \, d\tr{\rho G(\lambda)}
    \end{align}
    Thus, since $\tr{T_\lambda}\leq 1$, we have that $\tr{T_\lambda}=1$ for $\tr{\rho G}$-almost every $\lambda$. 
    \par 
Now, since the trace-class is a separable metric space with respect to the trace-norm, $\hS$ is separable as well, as a metric subspace of the trace-class. Therefore there exists a countable dense set $\{\rho_n\}_n \subset \hS$ in the trace-norm topology of $\hS$. Let us define the state $\Tilde{\rho}=\sum_n t_n \rho_n$ with $t_n>0$ and $\sum_n t_n=1$. Now for all $k<\#\{\rho_n\}_n$ and $A \in \lh$ we have that 
\begin{align}\label{eqlimits}
    &\tr{\left(\sum_{n=1}^k t_n \rho_n \right)A}\\&=\sum_{n=1}^k t_n\tr{\rho_nA} \\
    &= \sum_{n=1}^k t_n \int_\Omega T(A)(\lambda) \, d\tr{\rho_n G(\lambda)} \\
    &=\int_\Omega T(A)(\lambda) \, d\tr{\left(\sum_{n=1}^k t_n \rho_n \right) G(\lambda)}\label{eqlimitsf}
\end{align}
Thus if $\#\{\rho_n\}_n <\infty$, we see that $\tilde{\rho}$ is a fixed point.
Otherwise  $\left|\int_\Omega T(A)(\lambda) \, d\tr{\left(\tilde{\rho}-\sum_{n=1}^k t_n \rho_n \right) G(\lambda)}\right|\leq \Vert A \Vert \tr{\left(\sum_{n=k+1}^N t_n \rho_{n} \right) G(\Omega)}=\Vert A \Vert\sum_{n=k+1}^N t_n$. Thus if $\#\{\rho_n\}_n=\infty$, we see that $\sum_{n=k+1}^N t_n \to 0$ as $k \to \infty$. In this case taking the limits in equation \eqref{eqlimits}-\eqref{eqlimitsf}  (using that $\sum_n t_n \rho_n$ is trace norm convergent) we see the following:
\begin{align}
    \tr{\tilde{\rho} A}=\int_\Omega T(A)(\lambda) \, d\tr{\tilde{\rho} G(\lambda)}
\end{align}
Therefore by what we showed before, $\tr{T_\lambda}=1$ for all $\lambda \notin N_1$, for some $N_1$ such that $\tr{\tilde{\rho}G(N_1)}=0$. We now show that this implies that $\tr{\rho G(N_1)}=0$ for all $\rho \in \hS$. Since $t_n \tr{\rho_n G(N_1)}\leq \tr{\tilde{\rho}G(N_1)}=0$, we have that $\tr{\rho_nG(N_1)}=0$ for all $\rho_n$ in the dense set. Let then $\rho$ be arbitrary. Then there is a trace-norm convergent subsequence $\{\rho_{n_k}\}_k$ of $\{\rho_n\}_n$ with $\lim_{k \to \infty}\rho_{n_k}=\rho $. Thus also $\tr{\rho G(N_1)}=\lim_{k \to \infty}\tr{\rho_{n_k}G(N_1)}=0$. We have now shown that there is a common null set $N_1$ for all $\rho \in \hS$ such that $\tr{T_\lambda}=1$ when $\lambda \notin N_1$. Therefore $T_\lambda$ are states when $\lambda \notin N_1$. Furthermore, when $\lambda \in N_1$, we may set $T_\lambda=\sigma$, where $\sigma$ is some state. Thus every $\rho \in \hS$ is a fixed point of an EB-channel by equation \eqref{eqstatesandwich}. Thus we get the form given in Definition \ref{defnormal} for noncontextual ontological model, finishing the proof.
\end{proof}
We conclude that for sets of states it makes no difference which definition is used.  

\subsection{Approximate interpretation}
Definition \ref{defnonnormal} does not allow for an easy physical interpretation of the model as e.g. singular states and non-pointwise quantities are allowed. We now intend to show that these types of models can still be, in a sense, given as limits of approximate normal noncontextual models as given by Definition \ref{defapprox}.

Definition \ref{defapprox} means that every finite set of effects is approximately contextuality non-confirming with respect to some discrete ontological within the range of a POVM to arbitrary precision. We now show that this is equivalent to Definition \ref{defnonnormal}. We begin by proving the following lemma. This is essentially a $L^\infty$-valued and non-normal version of the extension in the main result of \cite{busch03}, but we prove it here for the reader's convenience. Let $\hE(\hi)$ be the set of the effects of $\lh$.
\begin{lemma}\label{lemmagleasonlike}
    Let $(\Omega,\hB(\Omega),\mu)$ be a standard Borel space and $T:\hE(\hi) \to L^\infty(\Omega,\mu)$ such that 
    \begin{enumerate}
        \item $T(I)=1$
        \item $T(E+F)=T(E)+T(F)$ if $E+F\leq I$
        \item $T(tE)=tT(E)$ for $0\leq t \leq 1$, $E \in \hE(\hi)$
        \item $0\leq T(E) \leq 1$.  
    \end{enumerate}
    Then $T$ extends to a positive linear map $\Tilde{T}:\lh \to L^\infty(\Omega,\mu)$ with $\tilde{T}(I)=1$
\end{lemma}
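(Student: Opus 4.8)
The plan is to follow the effect-valued, non-normal adaptation of Busch's argument \cite{busch03}, which --- unlike genuine Gleason's theorem --- is purely algebraic and needs no dimension restriction or spectral analysis, since additivity together with the bound $0\le T(E)\le1$ already forces linearity. First I would extend $T$ off the effects by homogeneity. For a nonzero $A\ge0$ the normalised operator $A/\|A\|$ is an effect, so I set
\[
  \tilde T(A):=\|A\|\,T\!\left(A/\|A\|\right),\qquad \tilde T(0):=0 .
\]
This is unambiguous on the positive cone, and the homogeneity hypothesis shows $\tilde T$ genuinely extends $T$: for an effect $E$ one has $\|E\|\le1$, so $T(E)=T\big(\|E\|\cdot(E/\|E\|)\big)=\|E\|\,T(E/\|E\|)=\tilde T(E)$. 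Positivity in $L^\infty$ is immediate from the definition and $T(\hE(\hi))\subset[0,1]$, and scaling by $s>0$ gives $\tilde T(sA)=s\tilde T(A)$ directly, since $\|sA\|=s\|A\|$ and $sA/\|sA\|=A/\|A\|$.

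The crucial step is additivity on the positive cone. Given $A,B\ge0$, I would pick any integer $n\ge\|A+B\|$; then $A/n$, $B/n$ and $(A+B)/n$ are all effects with $A/n+B/n\le I$, so the additivity hypothesis yields $T((A+B)/n)=T(A/n)+T(B/n)$. Rewriting each term via homogeneity (legitimate because $\|A\|/n,\|B\|/n\le1$, using $\|A\|,\|B\|\le\|A+B\|$ for positive operators) turns this into $\tfrac1n\tilde T(A+B)=\tfrac1n\tilde T(A)+\tfrac1n\tilde T(B)$, i.e.\ $\tilde T(A+B)=\tilde T(A)+\tilde T(B)$. Together with the positive scaling above, $\tilde T$ is then additive and $\mathbb R_{\ge0}$-homogeneous on $\lh_+$.

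From here the extension to all of $\lh$ is the standard two-step procedure. For self-adjoint $A$ I would use the decomposition into positive and negative parts $A=A_+-A_-$ and set $\tilde T(A):=\tilde T(A_+)-\tilde T(A_-)$; well-definedness against any other splitting $A=B-C$ with $B,C\ge0$ follows from the identity $A_++C=A_-+B$ of positive operators combined with additivity on the cone, and real linearity is then routine. Finally, writing an arbitrary $A\in\lh$ as $\mathrm{Re}\,A+i\,\mathrm{Im}\,A$ with $\mathrm{Re}\,A=(A+A^*)/2$ and $\mathrm{Im}\,A=(A-A^*)/2i$ self-adjoint, I set $\tilde T(A):=\tilde T(\mathrm{Re}\,A)+i\,\tilde T(\mathrm{Im}\,A)$, which is complex-linear; positivity and $\tilde T(I)=T(I)=1$ are inherited from the construction.

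The step I expect to carry the weight is the additivity on the positive cone, i.e.\ the rescaling trick that converts the hypothesis --- stated only for effects summing below $I$ --- into additivity for arbitrary positive operators. I do not expect the $L^\infty(\Omega,\mu)$-valued target to create any genuine obstacle: every identity above is an algebraic manipulation on the operator side producing an equality of $L^\infty$ elements, and the order relations ($0\le T(E)\le1$ and positivity of $\tilde T$) are read pointwise $\mu$-almost everywhere, so no continuity, normality, or measure-theoretic input beyond the stated hypotheses is required.
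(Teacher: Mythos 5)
Your proposal is correct and follows essentially the same route as the paper's proof: define the extension on the positive cone by $\tilde T(A)=\|A\|\,T(A/\|A\|)$, prove additivity there by rescaling into the effect algebra (the paper normalises by $\|A+B\|$ itself where you use an integer $n\ge\|A+B\|$ — an immaterial difference), then pass to self-adjoint operators via the positive/negative decomposition with the standard well-definedness argument, and finally to all of $\lh$ via real and imaginary parts. No gaps.
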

\begin{proof}
    Let first $A \ge 0$ be a positive operator. Then define 
    \begin{align}
        T'(A):=\Vert A \Vert T\left( \frac{A}{\Vert A \Vert}\right)
    \end{align}
    Note first that $T'(A)$ is measurable, since it is just a scalar multiple of a $L^\infty$-function. Furthermore $\Vert T(A) \Vert_{L^\infty}\leq \Vert A \Vert$, so $T'(A) \in L^\infty(\Omega,\mu)$. Let then $A,B\ge 0$
    \begin{align}
        T'(A+B)&=\Vert A+B\Vert T\left( \frac{A+B}{\Vert A+B \Vert}\right)\\
        &=\Vert A+B\Vert T\left( \frac{A}{\Vert A+B \Vert}\right) \\
        &+\Vert A+B\Vert T\left( \frac{B}{\Vert A+B \Vert}\right) \\
        &=\Vert A+B\Vert T\left(\frac{\Vert A \Vert}{\Vert A+B \Vert} \frac{A}{\Vert A \Vert}\right)\\
        &+\Vert A+B\Vert T\left(\frac{\Vert B \Vert}{\Vert A+B \Vert} \frac{B}{\Vert B \Vert}\right) \\
        &=\Vert A\Vert T\left(\frac{A}{\Vert A \Vert}\right)+\Vert B\Vert T\left(\frac{B}{\Vert B \Vert}\right)\\
        &=T'(A)+T'(B)
    \end{align}
    Furthermore, if $\lambda\ge 0$ we have the following.
    \begin{align}
        T'(\lambda A)=\lambda\Vert A\Vert T\left( \frac{\lambda A}{\lambda \Vert A \Vert }\right)=\lambda T'(A)
    \end{align}
    Thus $T'$ is conic-linear for the positive operators. \par 

    Let then $A$ be a self-adjoint (Hermitian) operator and $A=A^+-A^-$ the canonical decomposition to positive and negative parts (i.e. the one from the spectral decomposition of $A$). Let $A=B-C$ be another decomposition to positive and negative parts. We show that $T'(A^+)-T'(A^-)=T'(B)-T'(C)$. This follows from the fact that $A^+-A^-=B-C$ so that $A^++C=B+A^-$ so that 
    \begin{align}
        T'(A^+)+T'(C)&=T'(A^++C)=T'(B+A^-)\\
        &=T'(B)+T'(A^-),
    \end{align}
    which gives us $T'(A^+)-T'(A^-)=T'(B)-T'(C)$. Thus we may define $T''(A):=T'(A^+)-T'(A^-)$, where the value of $T''(A)$ does not depend on the decomposition of $A$ into positive and negative parts. Now $T''(A)$ is in $L^\infty(\Omega,\mu)$, since it is just a difference of two $L^\infty$-functions. \par 
    Let then $A,B$ both be self-adjoint. Then a decompostion of $A+B$ into positive and negative parts is $A+B=(A^++B^+)-(A^--B^-)$. Since the value of $T''(A+B)$ is independent of the decomposition, we have the following.
    \begin{align}
        &T''(A+B)\\
        &=T'(A^++B^+)-T'(A^-+B^-)\\
        &=T'(A^+)-T'(A^-)+T'(B^+)+T'(B^-) \\
        &= T''(A)+T''(B)
    \end{align}
    Let then $a \in \R$. Then if $a\ge 0$
    \begin{align}
        T''(aA)=T'(aA^+)-T'(aA^-)=aT''(A)
    \end{align}
    Furthermore, if $a<0$ we have the following.
    \begin{align}
        T''(aA)&=T'(|a|A^-)-T'(|a|A^+)\\
        &=|a|(T'(A^-)-T'(A^+))\\
        &=a(T'(A^+)-T'(A^-))\\
        &=aT''(A)
    \end{align}
Thus $T'':\lhs \to L^\infty(\Omega,\mu)$ is a (real) linear mapping. Furthermore it's bounded since $\Vert T''(A)\Vert_{L^\infty}=\Vert T'(A^+)-T'(A^-) \Vert_{L^\infty}\leq \Vert T'(A^+)\Vert_{L^\infty}+\Vert T'(A^-)\Vert_{L^\infty} \leq \Vert A^+ \Vert+\Vert A^-\Vert\leq 2\Vert A \Vert$. \par 
Finally let $A \in \lh$ be arbitrary. Then we have the decomposition $A=\frac{1}{2}(A+A^*)+i\left(\frac{-i}{2}(A-A^*)\right)$ into a linear combination of self-adjoint operators. Let us thus define $\tilde{T}(A):=\frac{1}{2}(T''(A+A^*)+iT''(-i(A-A^*))$. This is again in $L^\infty$ as a linear combination of $L^\infty$-functions. Then for $A,B \in \lh$ we have the following.
\begin{align}
    &\tilde{T}(A+B)\\
    &=\frac{1}{2}T''(A+B+A^*+B^*)\\
    &+\frac{i}{2}T''(-i(A+B-A^*-B^*) \\
    &=\frac{1}{2}(T''(A+A^*)+T''(B+B^*))\\
    &+\frac{i}{2}(T''(-i(A-A^*))+T''(-i(B-B^*)))\\
    &=\tilde{T}(A)+\tilde{T}(B)
\end{align}
Let the $z \in \mathbb{C}$ be arbitrary with $z=\Re(z)+i\Im(z)$. Then by the real-linearity of $T''$ we have the following. 
\begin{align}
    &\tilde{T}(zA)\\
    &=\frac{1}{2}(T''(zA+\overline{z}A^*)+iT''(-i(zA-\overline{z}A^*)) \\
    &=\frac{1}{2}(T''(\Re(z)(A+A^*)-\Im(z)(-i(A-A^*)))\\
    &+\frac{i}{2}T''(-i\Re(z)(A-A^*)+\Im(z)(A+A^*))) \\
    &=\frac{1}{2}((\Re(z)+i\Im(z))T''(A+A^*)\\
    &+\frac{1}{2}(i\Re(z)-\Im(z))T''
(-i(A-A^*)))\\
&=z\tilde{T}(A)
\end{align}
Thus $\tilde{T}$ has the required properties.
\end{proof}

\begin{proposition}\label{propapproxnonnormalequiv}
    A set of measurements $\hM$ (states $\hS$) is contextuality non-confirming with respect to Definition \ref{defnonnormal} if and only if $\hM$ ($\hS$) is contextuality non-confirming with respect to Definition \ref{defapprox}
\end{proposition}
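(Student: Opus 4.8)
The plan is to prove the two implications separately, carrying out the measurement case in full and obtaining the state case by exchanging the quantifiers over effects and states at the end. For the direction \emph{Definition~\ref{defnonnormal} $\Rightarrow$ Definition~\ref{defapprox}}, I would start from a model $(G,\rho_0,T)$ and, given a finite $\hF=\{E_1,\dots,E_m\}\subset E(\hM)$ and $\varepsilon>0$, discretise the bounded classes $T(E_j)\in L^\infty(\Omega,\tr{\rho_0 G})$. Concretely, I would fix Borel representatives $g_j$ with $0\le g_j\le 1$, push $\Omega$ forward by $\lambda\mapsto(g_1(\lambda),\dots,g_m(\lambda))\in[0,1]^m$, and pull back a partition of $[0,1]^m$ into cubes of side $<\varepsilon$, yielding a finite Borel partition $\{X_i\}$ of $\Omega$ on which every $g_j$ oscillates by less than $\varepsilon$. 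The essential point is that the values taken on each cell must be realisable by a single genuine state, which is exactly what Lemma~\ref{lemmastateextension}, applied to the finite-dimensional self-adjoint space $\mathrm{lin}_{\mathbb{C}}(\hF)$, provides: outside a null set there are states $\omega_\lambda$ with $\omega_\lambda(E_j)=g_j(\lambda)$ for all $j$. Sampling $\lambda_i\in X_i$ off the null set and setting $\sigma_i=\omega_{\lambda_i}$, the channel $\Lambda^*_{\hF,G}(A)=\sum_i\tr{\sigma_i A}G(X_i)$ has the required form, and since $G(\Omega)=I$,
\begin{align}
    |\tr{\rho(\Lambda^*_{\hF,G}(E_j)-E_j)}|=\Big|\int_\Omega (s_j-g_j)\,d\tr{\rho G}\Big|\le\varepsilon,
\end{align}
where $s_j=\sum_i g_j(\lambda_i)\chi_{X_i}$ obeys $\|s_j-g_j\|_\infty<\varepsilon$ by construction; this holds for every $\rho\in\sh$.

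For the converse \emph{Definition~\ref{defapprox} $\Rightarrow$ Definition~\ref{defnonnormal}}, I would first fix a faithful $\rho_0$ (available since $\hi$ is separable) and encode each approximant $\Lambda^*_{\hF,\varepsilon}(A)=\sum_i\tr{\sigma_i A}G(X_i)$ by the positive unital map $T_{\hF,\varepsilon}(A)=\sum_i\tr{\sigma_i A}\chi_{X_i}$ into $L^\infty(\Omega,\tr{\rho_0 G})$, so that $\int_\Omega T_{\hF,\varepsilon}(A)\,dG=\Lambda^*_{\hF,\varepsilon}(A)$. On effects these maps land in the unit ball of $L^\infty$, which is weak$^*$ compact by Banach--Alaoglu; indexing by the directed set of pairs $(\hF,\varepsilon)$ (ordered by $\hF\subseteq\hF'$ and $\varepsilon\ge\varepsilon'$) and invoking Tychonoff on $\prod_{E\in\hE(\hi)}\overline{B}_1$, I would extract a subnet converging point-weak$^*$ to a map $T\colon\hE(\hi)\to L^\infty$. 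Uniqueness of weak$^*$ limits makes $T$ additive on effects summing to at most $I$, compatible with scaling, and unital, while weak$^*$ closedness of the positive cone keeps $0\le T(E)\le 1$; thus $T$ satisfies the hypotheses of Lemma~\ref{lemmagleasonlike}, which extends it to a positive unital linear map $\lh\to L^\infty$, a bona fide $G$-measurement response. To close, I would use $\tr{\rho G}\ll\tr{\rho_0 G}$ (faithfulness) and Radon--Nikodym to present $f\mapsto\int f\,d\tr{\rho G}$ as a weak$^*$-continuous functional on $L^\infty$; then for each effect $E\in E(\hM)$, along the subnet (where eventually $E\in\hF$ and $\varepsilon\to0$),
\begin{align}
    \int_\Omega T(E)\,d\tr{\rho G}=\lim_\alpha\int_\Omega T_\alpha(E)\,d\tr{\rho G}=\tr{\rho E},
\end{align}
since the gap equals $|\tr{\rho(\Lambda^*_\alpha(E)-E)}|<\varepsilon_\alpha$. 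Taking $E=M(X)$ recovers \eqref{eqdefnonnormalont}.

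The state version runs identically with the quantifiers swapped: effects range over all of $\hE(\hi)$ while states are confined to $\hS$, so in the first implication the uniform bound over $\rho\in\sh$ covers $\hS$, and in the second the limit identity is proved for every $E\in\lh$ and $\rho\in\hS$, which is precisely the state form of Definition~\ref{defnonnormal}. I expect the main obstacle to be this converse direction: assembling one global response $T$ out of an uncountable family of finite, partition-based approximants is only possible as an abstract point-weak$^*$ accumulation point, and the care lies in (i) checking that the limit is genuinely linear and positive, which is handled by feeding it into Lemma~\ref{lemmagleasonlike}, and (ii) guaranteeing that integration against each $\tr{\rho G}$ is weak$^*$ continuous, which is exactly what the Radon--Nikodym reduction to an $L^1$ pairing secures.
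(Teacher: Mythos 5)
Your overall architecture matches the paper's: in one direction you discretise $T$ on a finite effect set and sample states from Lemma~\ref{lemmastateextension}; in the other you index the approximants by a directed set, extract a point-weak$^*$ convergent subnet via Banach--Alaoglu and Tychonoff, and feed the limit into Lemma~\ref{lemmagleasonlike}. The converse direction (Definition~\ref{defapprox} $\Rightarrow$ Definition~\ref{defnonnormal}) is essentially identical to the paper's proof and is correct as sketched.

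There is, however, a genuine gap in your forward direction. The states $\omega_{\lambda}$ produced by Lemma~\ref{lemmastateextension} are obtained from Arveson's extension theorem and live in $\hS(\lh)$; they are in general \emph{non-normal}, i.e.\ not representable by density operators. (This is unavoidable: if $T$ is non-normal --- which is exactly the case of interest, e.g.\ for the position PVM --- the pointwise functionals realising its values must themselves be singular.) Setting $\sigma_i=\omega_{\lambda_i}$ therefore does not yield a channel of the form \eqref{eqapproxeb}: the expression $\tr{\sigma_i A}$ is not defined for a non-normal $\omega_{\lambda_i}$, and the resulting map is not an entanglement-breaking channel in the sense Definition~\ref{defapprox} requires, which would defeat the purpose of the approximate definition (each approximant is supposed to retain a genuine probabilistic, measure-and-prepare interpretation). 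The repair is to insert one more approximation step: since the normal states are weak$^*$-dense in $\hS(\lh)$, for each $i$ choose a density operator $\sigma_i\in\sh$ with $|\omega_{\lambda_i}(E)-\tr{\sigma_i E}|<\varepsilon/3$ for all $E$ in the finite set $\hF$ (a weak$^*$-open condition involving only finitely many effects). Your error estimate then acquires a third term, $\sum_i|\omega_{\lambda_i}(E)-\tr{\sigma_i E}|\tr{\rho G(X_i)}<\varepsilon/3$, and the argument closes exactly as in the paper. Your cube-partition construction of the common refinement is a perfectly good substitute for the paper's appeal to uniform approximation by simple functions, so with this one insertion the forward direction is sound as well.
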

\begin{proof}
    We prove the measurement case here. The state side follows from the same arguments easily by replacing $\hM \to \hO$ and $\sh \to \hS$. As a reminder, $\hO$ is the set of all observables.
    
    Assume first that $\hM$ is contextuality non-confirming with respect to Definition \ref{defnonnormal}.

Let us fix a finite set of effects $\hF\subset E(\hM)$. We now have the following especially for all $E \in \hF$ and $\rho \in \sh$.
    \begin{align}
        \tr{\rho E}=\int_\Omega T(E)(\lambda) \, d\tr{\rho G(\lambda)}
    \end{align}
    Since $T(E)\leq 1$, we can pick an everywhere bounded representative for $T(E)$. Thus by for example Lemma 4.7 in \cite{busch16} we have an increasing sequence of simple functions converging uniformly to $T(E)$ (or more precisely to the chosen representative). Let then $\varepsilon >0$ and for every $E \in \F$ choose a simple function $s_E=\sum_i a_i(E)\chi_{X_i^E}$ such that $\Vert T(E)-s_E \Vert<\varepsilon/3$. Let now $\{Y_i\}_{i=1}^n$ be some disjoint common partition (which exists, since we have finitely many $E$) for $\{X_i^E\}_{i,E \in \hF}$, i.e., for all $E$ and $i$ we have $X_i^E=\bigcup_{j \in F_i^E\subset [n]} Y_j$. Here $[n]=\{1,\dots, n\}$. Thus we may represent all the simple functions as $s_n^E=\sum_{j=1}^n a'_j(E)\chi_{Y_j}$, where $a_j'(E)=a_i(E)$ when $Y_j\subset X_i^E$. \par 
    Now $\mathrm{lin}(\hF)$ is a self-adjoint norm-closed subspace, which has the countable dense set of complex-rational linear combinations of all $E \in \hF$. Thus by Lemma \ref{lemmastateextension} there is a null set $N \subset \Omega$ and states $\{\omega_\lambda\}_{\lambda \notin N}$ such that $T(E)(\lambda)=\omega_\lambda(E)$ for all $E \in \hF$ and $\lambda \notin N$. For every $i \in [n]$, choose a $\lambda_i \in Y_i\setminus N$. These exist, since we can assume without loss of generality that every $Y_i$ has positive measure. Now the set  $\{\omega \in \lh^*\; | \; \forall E \in \hF: |\omega(E)|<\varepsilon/3 \}$ is a weak$^*$-open neighborhood of 0. Furthermore the set of normal states is dense in $\hS(\lh)$ (every Banach space is weakly$^*$ dense under canonical embedding in its double dual, see e.g. Exercise 1, Section 4 in \cite{rudin1991functional}), so for every $i \in [n]$ there exists a $\sigma_i \in \sh$ such that $\omega_{\lambda_i}-\tr{\sigma_i(\cdot)} \in V_\hF$. We now claim that the EB-channel $\Lambda_{\hF,G}^*:\lh \to \lh$ defined by 
    \begin{align}
        \Lambda_{\hF,G}^*(A)=\sum_{i=1}^n \tr{\sigma_iA}G(Y_i)
    \end{align}
    fulfills the following for all $E \in \hF$ and $\rho \in \sh$.
    \begin{align}
        |\tr{\rho E}-\tr{\rho\Lambda^*_{\hF,G}(E)}|<\varepsilon
    \end{align}
    Indeed, we have the following.
    \begin{align}
        &|\tr{\rho E}-\tr{\rho\Lambda^*_{\hF,G}(E)}| \\
        &\leq \left| \int_\Omega (T(E)(\lambda)-s_n^E(\lambda))\, d\tr{\rho G(\lambda)} \right|\\
        &+\left| \sum_{j=1}^n (a'_j(E)-T(E)(\lambda_j) )\tr{\rho G(Y_j)}\right|\\
        &+\left| \sum_{j=1}^n (\omega_{\lambda_j}(E)-\tr{\sigma_iE} )\tr{\rho G(Y_j)}\right|\\
        &\leq \Vert T(E)-s_n^E\Vert +\Vert T(E)-s_n^E\Vert \\
        &+  \sum_{j=1}^n |\omega_{\lambda_j}(E)-\tr{\sigma_iE} |\tr{\rho G(Y_j)}\\
        &<\varepsilon/3+\varepsilon/3+\varepsilon/3=\varepsilon
    \end{align}

    This proves the first direction.
    \par Let us then focus on the other direction. Let $\hF(E(\hM))$ denote the collection of finite subsets of $E(\hM)$. Then $\hI:=\hF(E(\hM)) \times \N$ is a directed set with the order relation $\leq$ defined as 
    \begin{align}
        (\hF_1,n)\leq (\hF_2,m) \Leftrightarrow \hF_1 \subset \hF_2 \; \mathrm{and} \; n\leq m,
    \end{align}
    for $\hF_1,\hF_2 \in \hF(E(\hM))$ and $n,m \in \N$. Now by assumption, there is an EB-channel $\Lambda_{G,(\hF,n)}^*:\lh \to \lh$ of the form \eqref{eqapproxeb} such that for all $E \in \hF$ and $\rho \in \sh$ we have the following.
    \begin{align}\label{eqchannelnet}
        \left|\tr{\rho(\Lambda_{G,(\hF,n)}^*(E)-E)}\right|<1/n
    \end{align}
    Thus we get a net of channels $\{\Lambda_{G,(\hF,n)}^*\}_{(\hF,n)\in \hI}$. Let us now use the explicit measure-and-prepare expansions of the $\Lambda_{G,(\hF,n)}^*$:
    \begin{align}
        &\Lambda_{G,(\hF,n)}^*(A)\\
        &=\sum_i \tr{\sigma_i^{(\hF,n)}A}G\left(X_i^{(\hF,n)}\right)\\
        &=\int_\Omega \left(\sum_i \tr{\sigma_i^{(\hF,n)}A} \chi_{X_i^{(\hF,n)}}\right) \, dG(\lambda)
    \end{align}
    For every $A \in \lh$ with $0\leq A\leq I$ we define a net of functions  $T_{(\hF,n)}(E) \in L^\infty(\Omega,\tr{\rho_0G})$ with $T_{(\hF,n)}(E):=\sum_i \tr{\sigma_i^{(\hF,n)}E} \chi_{X_i^{(\hF,n)}}$, where $\rho_0$ is a faithful state.  Then we can define the net $\{(T_{(\hF,n)}(A))_{A \in \hE(\hi)}\}_{(\hF,n) \in \hI}$ in the Cartesian product $L^{\infty}(\Omega,\tr{\rho_0G})^{\hE(\hi)}$. Now by the Banach-Alaoglu theorem \cite{rudin1991functional} the unit ball $B(L^{\infty}(\Omega,\tr{\rho_0G}))$ of $L^{\infty}(\Omega,\tr{\rho_0G})$ is compact in the weak$^*$-topology, so by Tychonoff's theorem the Cartesian product $B(L^{\infty}(\Omega,\tr{\rho_0G}))^{\hE(\hi)}$ is compact in the product-weak$^*$-topology. Since $\{(T_{(\hF,n)}(A))_{A \in \hE(\hi)}\}_{(\hF,n) \in \hI} \subset B(L^{\infty}(\Omega,\tr{\rho_0G}))^{\hE(\hi)}$, there is a convergent subnet of $\{(T_{(\hF,n)}(A))_{A \in \hE(\hi)}\}_{(\hF,n) \in \hI}$ due to compactness. Let $(T(A))_{A \in \hE(\hi)} \in B(L^{\infty}(\Omega,\tr{\rho_0G}))^{\hE(\hi)}$ be the limit of this subnet. Now since $T_{(\hF,n)}(I)=1$ for all $\hF$ and $n$, we have that $T(I)=1$. Furthermore all the functions in the net are positive, so $T(A)\geq 0$ for all effects $A$. Finally if $A,B \in \hE(\hi)$ and $A+B\leq I$, then $T(A+B)=T(A)+T(B)$ and for $0\leq t\leq 1$ $T(tA)=tT(A)$, since both of these hold for all functions in the subnet. Thus $T$ fulfills the conditions of Lemma \ref{lemmagleasonlike} and we thus find a measurement response extension in accordance to Definition \ref{defmeasresp}. Let us still denote the extension by $T$. Thus we can define a noncontextual ontological model with respect to Definition \ref{defnonnormal}: $A \mapsto \int_\Omega T(A)(\lambda) \, dG(\lambda)$.  We want to now show that with this response function we have the following for all $E \in E(\hM)$.
    \begin{align}
        E=\int_\Omega T(E)(\lambda) \, dG(\lambda)
    \end{align}
    Let now $\rho \in \sh$ be arbitrary. Then we define the linear functional $G_{\rho}:L^{\infty}(\Omega,\tr{\rho_0G}) \to \mathbb{C}$ by
    \begin{align}
        G_{\rho}(f)&:=\int_\Omega f(\lambda) \; d\tr{\rho G(\lambda)}\\
        &=\int_\Omega f(\lambda) g_\rho(\lambda)\; d\tr{\rho_0 G(\lambda)}.
    \end{align}
    Here we used the fact that since $\rho_0$ is faihtful, $\tr{\rho G}$ is absolutely continuous with respect to $\tr{\rho_0 G}$ and thus by the Radon-Nikodym theorem there is an integrable $g_\rho$ such that $\tr{\rho G(X)}=\int_X g_\rho(\lambda) \, d\tr{\rho_0 G(\lambda)}$ for all $X \in \hB(\Omega)$. Since $g_\rho \in L^1(\Omega,\tr{\rho_0G})$, we see that $G_\rho$ is a weakly$^*$-continuous linear functional. Let now $E \in E(\hM)$ and let $\varepsilon>0$. Since $G_\rho$ is weak$^*$-continuous, there is a $(\hF,n)$ as an index in the convergent subnet of $\{T_{(\hF,n)}(E)\}_{(\hF,n) \in \hI}$ such that $|G_\rho(T_{(\hF,n)}(E))-G_\rho(T(E))|<\varepsilon/2$. Since the indices of subnets are final, there is an index $(\hF_0,m)\ge (\hF,n)$ in the subnet such that $E \in \hF_0$ and $1/m<\varepsilon/2$. Thus we have the following.
    \begin{align}
        &\left|\tr{\rho E}-\int_\Omega T(E)(\lambda) \, d\tr{\rho G(\lambda)} \right|\\
        &\leq \left|\tr{\rho(\Lambda_{G,(\hF_0,m)}^*(E)-E)}\right|\\
        &+|G_\rho(T_{(\hF_0,m)}(E))-G_\rho(T(E))|\\
        &<\varepsilon/2+\varepsilon/2=\varepsilon
    \end{align}
    Since $\varepsilon>0$ was arbitrary, we have that $\tr{\rho E}=\int_\Omega T(E)(\lambda) \, d\tr{\rho G(\lambda)}$. As also $\rho \in \sh$ was arbitrary, we get the fixed point equation $E=\int_\Omega T(E)(\lambda) \, dG(\lambda)$.
\end{proof}
Note that for this proof, the passage through finite subsets of $E(\hM)$ is essential. This is due to the nature of the weak$^*$-topology $\sigma(\lh^*,\lh)$. Indeed, the idea is to approximate the non-normal states with normal states in the weak$^*$-topology and, heuristically speaking, the weak$^*$-topology can only see distances between linear functionals for finitely many effects at once.

The state side of Proposition \ref{propapproxnonnormalequiv} can also be proven more easily by using some regularities of the trace-class. We present a proof sketch of this in the Appendix.

\subsection{Sharp measurements and the non-normal definition}

The motivation to extend Definition \ref{defnormal} to the non-normal regime was due to sharp, classical measurements such as position being contextuality-confirming. Thus to bring the logic full circle, we now show that all PVMs are contextuality non-confirming with respect to the equivalent Definitions \ref{defnonnormal} and \ref{defapprox}. Specifically we use the approximate definition to show this in the following Proposition.

\begin{proposition} \label{proppvmapproximate}
Let $A:\mathcal A\to \lh$ be a PVM where $\mathcal A$ is a $\sigma$-algebra of an outcome set $\Omega$, and let $n\in \mathbb N$, $K_1,\ldots, K_n\in \mathcal A$. Then there exist an $m\in \mathbb N$, a disjoint partition $X_0,X_1,\ldots, X_m\in \mathcal  A$ of $\Omega$, and states $\sigma_0,\sigma_1,\ldots,\sigma_m$ such that $A(K_k)$ is a fixed point of the entanglement-breaking channel
$$
\Lambda(A) = \sum_{i=0}^m {\rm tr}[\sigma_i A] A(X_i)
$$
for all $k=1,\ldots,n$.
\end{proposition}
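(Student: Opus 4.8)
The plan is to reduce everything to the finite Boolean algebra generated by $K_1,\ldots,K_n$ and then exploit the orthogonality of the projections of a PVM. First I would form the cells
$$
Y_{\vec\epsilon}=\bigcap_{k=1}^n K_k^{\epsilon_k},\qquad \vec\epsilon=(\epsilon_1,\ldots,\epsilon_n)\in\{0,1\}^n,
$$
where $K_k^1=K_k$ and $K_k^0=\Omega\setminus K_k$. These are at most $2^n$ pairwise disjoint measurable sets whose union is $\Omega$, and they are exactly the atoms of the $\sigma$-algebra generated by $K_1,\ldots,K_n$. The structural fact that makes the proof work is that for every cell $Y_{\vec\epsilon}$ and every index $k$ one has either $Y_{\vec\epsilon}\subset K_k$ (if $\epsilon_k=1$) or $Y_{\vec\epsilon}\cap K_k=\emptyset$ (if $\epsilon_k=0$). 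Since $A$ is a PVM, this gives
$$
A(K_k)=\sum_{\vec\epsilon:\,\epsilon_k=1}A(Y_{\vec\epsilon}),
$$
with $\{A(Y_{\vec\epsilon})\}_{\vec\epsilon}$ a family of mutually orthogonal projections summing to $A(\Omega)=I$.

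Next I would build the partition and the states. Let $X_1,\ldots,X_m$ enumerate those cells with $A(Y_{\vec\epsilon})\neq 0$, and set $X_0=\Omega\setminus\bigcup_{i=1}^m X_i$ (the union of the null cells), so that $A(X_0)=I-\sum_{i=1}^m A(X_i)=0$ and $\{X_0,X_1,\ldots,X_m\}$ is a measurable partition of $\Omega$. For each $i\in\{1,\ldots,m\}$ the range of the nonzero projection $A(X_i)$ is a nonempty closed subspace, so I can pick a state $\sigma_i$ supported in it, i.e.\ $A(X_i)\sigma_i A(X_i)=\sigma_i$; let $\sigma_0$ be any fixed state. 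Writing the candidate channel with a dummy argument $B$ to avoid clashing with the PVM $A$, the map $\Lambda(B)=\sum_{i=0}^m\tr{\sigma_i B}\,A(X_i)$ is a genuine measure-and-prepare entanglement-breaking channel, and it is unital since $\Lambda(I)=\sum_i A(X_i)=I$.

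Finally I would verify the fixed-point property by matching coefficients against the orthogonal family $\{A(X_i)\}$. For a nonzero cell $X_i=Y_{\vec\epsilon}$: if $\epsilon_k=1$ then $X_i\subset K_k$, so $A(X_i)A(K_k)=A(X_i)$, hence $\tr{\sigma_i A(K_k)}=\tr{A(X_i)\sigma_i A(X_i)A(K_k)}=\tr{\sigma_i}=1$; if $\epsilon_k=0$ then $A(X_i)A(K_k)=A(X_i\cap K_k)=0$, so $A(K_k)$ annihilates the support of $\sigma_i$ and $\tr{\sigma_i A(K_k)}=0$. Therefore
$$
\Lambda(A(K_k))=\sum_{i=0}^m\tr{\sigma_i A(K_k)}\,A(X_i)=\sum_{i:\,X_i\subset K_k}A(X_i)=A(K_k),
$$
the last equality holding because the null cells contribute nothing to either side. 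As $k$ was arbitrary, each $A(K_k)$ is a fixed point of $\Lambda$.

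There is no real analytic difficulty here; the only thing to be careful about is the bookkeeping of the null cells (those with $A(Y_{\vec\epsilon})=0$), which must be gathered into $X_0$ so that both the partition requirement and the support-localization of the states $\sigma_i$ stay consistent. The conceptual crux, and the step I expect to carry the whole argument, is the observation that passing to the atoms of the algebra generated by $K_1,\ldots,K_n$ converts the fixed-point equation into a term-by-term identity over the orthogonal projections $A(Y_{\vec\epsilon})$, so that sharpness and mutual commutativity of the PVM do all the work.
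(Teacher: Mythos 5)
Your proof is correct and follows essentially the same route as the paper's: partition $\Omega$ into the atoms of the Boolean algebra generated by $K_1,\ldots,K_n$, choose states supported in the nonzero projections $A(X_i)$, and verify the fixed-point identity via the resulting $0$/$1$ values of $\tr{\sigma_i A(K_k)}$. The only (immaterial) difference is bookkeeping: you collect the null cells into $X_0$ so that $A(X_0)=0$, whereas the paper keeps all cells and simply assigns arbitrary states to those with vanishing projection.
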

\begin{proof}
Let $X_0=\mathbb R \setminus \cup_{k=1}^n K_k$. Taking suitable intersections and differences of the sets $K_i$ we can easily construct a disjoint partition $X_1,\ldots, X_m\in \mathcal A$ of $\cup_{k=1}^n K_k$, and subsets $I_1,\ldots, I_n\subseteq \{1,\ldots,m\}$, such that for each $k$, we have $K_k= \cup_{i\in I_k} X_i$ and $X_i\cap K_k=\emptyset$ whenever $i\notin \mathcal I_k$. (For instance if $n=2$ we can take $X_0=\Omega\setminus (K_1\cup K_2)$, $X_1=K_1\setminus K_2$, $X_2=K_2\setminus K_1$, and $X_3=K_1\cap K_2$, with $I_1=\{1,3\}$, $I_2=\{2,3\}$.)

Let $J=\{i\in \{0,\ldots,m\}\mid A(X_i)\neq 0\}$. Then for each $i\in J$ pick a state $\sigma_i$ supported in $A(X_i)$, and set $\sigma_i$ to be any state if $i\notin J$.
Then we have
$$
A(K_k\cap X_i)={\rm tr}[\sigma_iA(K_k)]A(X_i)
$$
for all $k$ and $i\in J$, because for any fixed $k$, we have $K_k\cap X_i=X_i$, ${\rm tr}[\sigma_i A(K_k)]=1$ whenever $i\in I_k$, and $K_k\cap X_i=\emptyset$, ${\rm tr}[\sigma_i A(K_k)]=0$ if $i\notin I_k$ (as $A(K_k)$ is then orthogonal to $A(X_i)$).
Now for each $k$, it follows that
\begin{align*}
A(K_k) &=\sum_{i=0}^m A(K_k\cap X_i)= \sum_{i\in J} A(K_k\cap X_i)\\
&=\sum_{i\in J} {\rm tr}[\sigma_i A(K_k)] A(X_i)= \sum_{i=0}^m {\rm tr}[\sigma_i A(K_k)] A(X_i).
\end{align*}
\end{proof}

Thus according to Proposition \ref{proppvmapproximate}, all PVMs are contextuality non-confirming with respect to the approximate Definition \ref{defapprox}, and thus also with respect to Definition \ref{defnonnormal}. Especially the non-atomic position measurement of Example \ref{exposition} is no longer contextuality-confirming with respect to these definitions.

Finally we extend the result of Proposition~\ref{proppvmapproximate} to contrast 
the relationship of commuting algebras and Definition~\ref{defnormal}. Indeed, we show that if the effects of a set measurements is a subset of a commutative von Neumann algebra on a separable Hilbert space, it is contextuality non-confirming. Here, an abstract von Neumann algebra on a separable Hilbert space means that there is a von Neumann algebra isomorphism to some subalgebra of some $\lh$, with $\hi$ separable.
\begin{corollary}
    Let $N$ be a commutative von Neumann algebra on a separable Hilbert space and $\hM$ be a set of measurements such that $E(\hM) \subset  N$. Then $\hM$ is contextuality non-confirming according to Definition \ref{defnonnormal}
\end{corollary}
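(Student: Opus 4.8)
The plan is to reduce the claim to the approximate Definition~\ref{defapprox} via Proposition~\ref{propapproxnonnormalequiv}, and then to upgrade Proposition~\ref{proppvmapproximate} from the values of a single PVM to arbitrary effects lying in a commutative algebra. First I would observe that since $E(\hM)\subset N$ with $N$ commutative, the effects of $\hM$ pairwise commute, so the von Neumann algebra $\hN_0:=\{E(\hM)\}''\subset\lh$ they generate is commutative; as $\hi$ is separable (the standing assumption), the structure theory of abelian von Neumann algebras on separable spaces applies: $\hN_0$ is singly generated and, by the spectral theorem, there is a PVM $P\colon\hB(\Omega)\to\lh$ on a compact $\Omega\subset\R$ (hence standard Borel) with $\hN_0=\{P(X)\}''$, such that every effect $E\in E(\hM)$ can be written as $E=\int_\Omega f_E\,dP$ for a Borel function $0\le f_E\le 1$. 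I would then fix $G:=P$ once and for all; this is the single POVM required by Definition~\ref{defapprox}, and it is admissible because $\Omega$ is standard Borel.

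Next comes the approximation step. Given a finite $\hF=\{E_1,\dots,E_n\}\subset E(\hM)$ and $\varepsilon>0$, I would partition the range $[0,1]$ into finitely many half-open intervals of length less than $\varepsilon$ and take the common refinement of the preimages $f_{E_j}^{-1}$ of these intervals, producing a finite Borel partition $X_0,\dots,X_m$ of $\Omega$ on each cell of which every $f_{E_j}$ oscillates by less than $\varepsilon$. On every cell with $P(X_i)\neq 0$ I pick a state $\sigma_i$ supported by $P(X_i)$ (any state otherwise) and set $\Lambda^*_{\hF,G}(A)=\sum_{i} \tr{\sigma_i A}\,P(X_i)$. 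This is precisely of the form~\eqref{eqapproxeb} and is a unital EB-channel, since $\sum_i P(X_i)=I$.

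The estimate then proceeds as follows. Because the $P(X_i)$ are mutually orthogonal projections summing to $I$ and commuting with each $E_j$, one has $E_j=\sum_i P(X_i)E_jP(X_i)$, and, using $\sigma_i=P(X_i)\sigma_iP(X_i)$, also $\tr{\sigma_iE_j}=\tr{\sigma_i\,P(X_i)E_jP(X_i)}$. Hence $\Lambda^*_{\hF,G}(E_j)-E_j=\bigoplus_i\bigl(\tr{\sigma_iE_j}P(X_i)-P(X_i)E_jP(X_i)\bigr)$ is block-diagonal with respect to the ranges of the $P(X_i)$. On $\mathrm{ran}\,P(X_i)$ the operator $P(X_i)E_jP(X_i)$ acts as multiplication by $f_{E_j}$, whose values lie in an interval of length less than $\varepsilon$, while $\tr{\sigma_iE_j}$ is an average of exactly those values and so lies in the same interval; consequently each block has operator norm less than $\varepsilon$, giving $\|\Lambda^*_{\hF,G}(E_j)-E_j\|<\varepsilon$ and a fortiori $|\tr{\rho(\Lambda^*_{\hF,G}(E_j)-E_j)}|<\varepsilon$ for all $\rho\in\sh$. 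This verifies~\eqref{eqapproxdeft}, so $\hM$ is approximately contextuality non-confirming, and Proposition~\ref{propapproxnonnormalequiv} then delivers contextuality non-confirmation in the sense of Definition~\ref{defnonnormal}.

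The main obstacle I expect is the structural Step~1, namely producing a \emph{single} generating PVM on a \emph{standard Borel} outcome space and representing every effect as $\int f_E\,dP$: this is exactly where separability of $\hi$ is indispensable (it yields a countable generating family, a single self-adjoint generator, and a standard Borel $\Omega$ admissible in Definition~\ref{defapprox}). I would also be careful about the abstract-versus-concrete reading of the hypothesis: the assumption that $N$ is a commutative von Neumann algebra on a separable Hilbert space is used only to guarantee that the concrete effects commute and admit a separable concrete realisation, after which passing to $\{E(\hM)\}''\subset\lh$ with $\hi$ separable lets the abelian structure theory apply directly. Once these facts are in place, the partition and the norm estimate are routine, so no further difficulty is anticipated.
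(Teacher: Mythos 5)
Your proof is correct, and it shares its endpoints with the paper's argument -- both reduce to a single generating spectral PVM $P$ via the structure theory of abelian von Neumann algebras on separable spaces, and both ultimately invoke Proposition~\ref{propapproxnonnormalequiv} -- but the middle step is genuinely different. The paper applies Proposition~\ref{proppvmapproximate} only to the PVM $P$ itself, extracts from it (via Proposition~\ref{propapproxnonnormalequiv}) an exact non-normal response $T$ satisfying the fixed-point equation on all spectral projections, and then extends to a general effect $E=\int f_E\,dP$ by approximating $E$ in operator norm with linear combinations of projections and using the norm-continuity of $T$ and of the operator integral (the $\le 2\Vert\cdot\Vert_\infty$ bound from \cite{busch16}). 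You instead perform the approximation one level earlier: you generalise the partition construction of Proposition~\ref{proppvmapproximate} from characteristic functions to arbitrary $[0,1]$-valued Borel functions $f_{E_j}$, refining the partition until each $f_{E_j}$ oscillates by less than $\varepsilon$ on every cell, and you verify Definition~\ref{defapprox} directly for the effects of $\hM$ with the explicit operator-norm estimate $\Vert\Lambda^*_{\hF,G}(E_j)-E_j\Vert<\varepsilon$ (your block-diagonal argument, with $\tr{\sigma_iE_j}$ lying in the same length-$<\varepsilon$ interval as the essential range of $f_{E_j}$ on $X_i$, is sound, including the handling of cells with $P(X_i)=0$). What your route buys is a quantitative, self-contained strengthening of Proposition~\ref{proppvmapproximate} to all finite families of commuting effects, with error controlled uniformly over states; what the paper's route buys is brevity, since it reuses Proposition~\ref{proppvmapproximate} verbatim and lets the continuity of $T$ do the extension. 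Your remark that only commutativity of the concrete effects and separability of the system Hilbert space are actually needed (so that one may pass to $\{E(\hM)\}''\subset\lh$) is also consistent with how the paper's proof in fact treats $N$.
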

\begin{proof}
    $N$ is a commutative von Neumann algebra on a separable Hilbert space, so it is generated by a single self-adjoint element $A$ \cite[Proposition III.1.21]{Takesaki1979}. Let $P:\hA \to N$ be the spectral measure of $A$, so that $N$ is generated by $P$. Here $\hA$ is the $\sigma$-algebra of $P$. Since $P$ is a single PVM, by Proposition \ref{proppvmapproximate}, $P$ is contextuality non-confirming. Thus there is a $P$-measurement response $T$ such that 
    \begin{align}
        P(X)=\int_\Omega T(P(X))(\lambda) \, dP(\lambda),
    \end{align}
    for all $X \in \hA$.
    Let then $E \in E(\hM)$. Since $E \in N$, we have that $\lim_{n \to \infty} \left\Vert E-\sum_{i=1}^{k_n} a_i^n P_{i}^n \right\Vert$, where $P_i^n \in \{P(X) \; | \; X \in \hA\}$ for all $i,n$. This follows e.g. from the spectral theorem. We now especially have for all $n \in \N$ the equality
    \begin{align}
       \sum_i a_i^n P_{i}^n=\int_\Omega T\left(\sum_i a_i^n P_{i}^n\right)(\lambda) \, dP(\lambda).
    \end{align}

We then further have the following by utlizing Proposition 4.12 (c) in \cite{busch16}.
\begin{align}
    &\left\Vert E-\int_\Omega T(E)(\lambda) \, dP(\lambda) \right\Vert \\ 
    &\leq \left\Vert E- \sum_i a_i^n P_{i}^n\right\Vert\\
    &+\left\Vert \int_\Omega T\left(E-\sum_i a_i^n P_{i}^n\right)(\lambda) \, dP(\lambda) \right\Vert\\
    &\leq \left\Vert E- \sum_i a_i^n P_{i}^n\right\Vert+2\left\Vert T\left(E-\sum_i a_i^n P_{i}^n\right) \right\Vert_\infty \\
    &\leq (1+2\Vert T\Vert)\left\Vert E- \sum_i a_i^n P_{i}^n\right\Vert \to 0
\end{align}
Here we used the fact that $T$ is continuous (in fact $\Vert T\Vert=1$) as a positive linear map between $C^*$-algebras.
As $E$ was arbitrary, we have that $E(\hM)$ is contextuality non-confirming.
\end{proof}
\section{Conclusions}

We have shown that the traditional probability-theory-based definition of generalised contextuality leads to a striking effect of commuting measurements revealing the contextuality of quantum theory. The simplest example of such behaviour is the position measurement. To solve this purely infinite-dimensional discrepancy, we have proposed to extend the definition of generalised contextuality beyond probability theory. This is done by supplementing ontological models by non-sigma-additive response functions. We have shown that this extension naturally arises from a physically-motivated assumption of noncontextuality holding for all finite sets of effects. This resolves the discrepancy of the position measurement.

In the process, we have characterized the fixed points of infinite-dimensional entanglement-breaking channels in the Schrödinger picture to be commuting sets or, equivalents, sets of states having no set-coherence \cite{designolle21b}, and shown that this if furthermore equivalent to one’s inability to prove contextuality of quantum theory using such states. On the measurement side, we have shown the connection of fixed points of a Heisenberg picture EB-channel to broadcasting and further to a representation via an eigenvalue-1 parent POVM. Furthermore, with some additional assumptions the equivalence between these notions was shown, but the general case is still an open question.

For future work, the characterisation of fixed points of entanglement-breaking channels in the Heisenberg picture is left as an open question. We also believe that it will be interesting to investigate whether one can construct contextuality witnesses, i.e. inequalities on correlations, that can distinguish between the normal and the non-normal definitions. It will also be interesting to investigate the connection between noncontextual models and representability in the case of an algebra on a non-separable Hilbert space.

\begin{acknowledgments}
    \paragraph{Acknowledgments}
    We are thankful to Lauritz van Luijk and Henrik Wilming for pointing out their work \cite{Galke2024}, and for interesting discussions. We are also thankful to Yujie Zhang for interesting discussions. This work has been supported by the Swedish Research Council (grant no. 2024-05341), the Wallenberg Initiative on Networks and Quantum Information (WINQ), the Swiss National Science Foundation (Ambizione  PZ00P2-208779), the Niedersächsisches Ministerium für Wissenschaft und Kultur.
\end{acknowledgments}

\bibliographystyle{apsrev4-1}

\bibliography{references}{}

\appendix

\section{Definition \ref{defnormal} is a fixed point problem of EB-channels.}\label{appendixdefnormal}
In this Appendix we supply some mathematical details relating Definition \ref{defnormal} to fixed point problems of EB-channels. 

Now due to Pettis measurability theorem \cite[Theorem 1.1.6]{Hytönen2016}, $\lambda \mapsto \sigma_\lambda$ in Definition \ref{defnormal} is equivalently Bochner measurable i.e. the function $\lambda \mapsto \sigma_\lambda$ is measurable as a Banach-space valued function. This is in general stronger than just the "expectation values" $\lambda \mapsto \tr{\sigma_\lambda A}$ being measurable for all $A \in \lh$ (i.e. weak measurability).
Since $(\Omega,\hB(\Omega))$ is a standard Borel space, comparing Definition \ref{defnormal} and Theorem 2 of \cite{holevo2005separability}, we see that this definition induces an entanglement breaking channel, thus forming an analogue to the finite-dimensional case: 
$\tr{\rho M(X)}=\tr{\Lambda_{EB}(\rho)M(X)}=\tr{\rho \Lambda_{EB}^*(M(X))}$, with $\Lambda_{EB}(T):=\int_\Omega \sigma_\lambda \, d(\tr{T G(x)})$ for all $T \in \th$, and $\Lambda_{EB}^*$ the Heisenberg dual. Therefore, when examining full quantum theory, finding the contextuality non-confirming sets of states and measurements reduces to finding the fixed points of an EBC.

\section{Details on Example \ref{exqfunction}}
It was shown in Example \ref{exqfunction} the equality $f=f *g^n$ holds for all $n \in \N$, where $f(w)=\ip{w}{T|w}$, $g(z):=\frac{e^{-|z|^2}}{\pi}$ and $g^n=g *g*\cdots *g$ is the $n$-fold convolution. Here we show that this equality implies that $f$ is constant.

Now by direct calculation one obtains $g^{2^n}(z)=\frac{1}{2^n \pi}e^{-\frac{|z|^2}{2^n}}$. Also, since $f=f *g$, $g$ is continuous and $\Vert f \Vert\leq \Vert T \Vert$, $f$ is a bounded continuous function. Let us now regard $f$ as a function $f:\R^2 \to \R$ using the isomorphism $\mathbb{C} \simeq \R^2$ (we can assume that $T$ is a self-adjoint fixed point). We'll show that both partial derivatives of $f$ exist. Now equation \eqref{eqconvolution2} implies
   \begin{align}
       f(x,y)=\int_{\mathbb{\R^2}} \frac{e^{-(x-u)^2-(y-v)^2}}{\pi} \, f(u,v) \, du\,dv
   \end{align}
   Thus 
   \begin{align}
       &\frac{f(x+h,y)-f(x,y)}{h}\\
       &= \frac{1}{\pi}  \int_{\mathbb{\R^2}} \frac{e^{-(x+h-u)^2}-e^{-(x-u)^2}}{h}e^{-(y-v)^2}f(u,v) \,  \, du \,dv  
   \end{align}
   By the Mean Value Theorem we now have that $\frac{|e^{-(x+h-u)^2}-e^{-(x-u)^2}|}{h}=|2c(u)e^{-c(u)^2}|$ for some $c(u) \in (x-u,x-u+h)$. We can without loss of generality assume that $h<1$, so that $\frac{|e^{-(x+h-u)^2}-e^{-(x-u)^2}|}{h}\leq \sup_{z \in (x-u,x-u+1)}|2ze^{-z^2}|$. Define then the function $s:\R \to \R$ with 
   \begin{align}
       s(u):=\begin{cases}
           -2(u-x)e^{-(u-x)^2}, & u-x\le -1 \\ 
           2(u-x-1)e^{-(u-x-1)^2}, & u-x-1\ge 1 \\
           \sup_{z \in \R}|2ze^{-z^2}|, & \mathrm{otherwise}
       \end{cases}
   \end{align}
   One can easily see that $z\mapsto |2ze^{-z^2}|$ is increasing on $(\infty, -2^{-1/2}]$ and decreasing on $[2^{-1/2},\infty)$. Therefore $\frac{|e^{-(x+h-u)^2}-e^{-(x-u)^2}|}{h}\leq \sup_{z \in (x-u,x-u+1)}|2ze^{-z^2}|\leq s(u)$. Furthermore $s(u)$ is integrable, so 
   \begin{align}
      \left| \frac{e^{-(x+h-u)^2}-e^{-(x-u)^2}}{h}e^{-(y-v)^2}f(u,v)\right|\leq s(u)e^{-(y-v)^2}\Vert f \Vert 
   \end{align}
   Thus by the dominated convergence theorem $\partial_x f=f*\partial_xg$, i.e. the partial derivative exists. The $\partial_y$-case is symmetric so both partial derivatives exist. \par 
   Let us now show that $\partial_if=0$ for $i=x,y$. Now for every $n \in \N$
   \begin{align}
       |\partial_xf(x,y)|&=|\partial_x (f * g^{2^n})(x,y)| \\
       &=\frac{1}{2^n\pi}\left|\int_{\R^2} \partial_x\left(  e^{-\frac{(x-u)^2+(y-v)^2}{2^n}}\right)f(u,v) \, du \,dv \right| \\
       &\leq \frac{\Vert f \Vert}{2^{n/2}\sqrt{\pi}}\int_{\R} \frac{2}{2^n}|u-x| e^{-\frac{(u-x)^2}{2^n}} \, du  \\
       &=\frac{\Vert f \Vert}{2^{n/2}\sqrt{\pi}}\int_{\R} 2|z| e^{-z^2} \, dz \\
       &= \frac{4\Vert f \Vert}{2^{n/2}\sqrt{\pi}}\int_{0}^\infty z e^{-z^2} \, dz
   \end{align}
   Thus as $n \to \infty$, $|\partial_xf(x,y)|\to 0 $. This implies that $\partial_x f=0$, as $x,y$ were arbitrary. The case for $\partial_yf$ is symmetric, so $\partial_yf=\partial_x f=0$. Therefore $f(z)=\ip{z}{T|z}$ is a constant function. 

\section{Alternative proof for the state side of Proposition \ref{propapproxnonnormalequiv}}
Here we present a proof sketch of Proposition \ref{propapproxnonnormalequiv} for the state side using some regularities of the trace-class instead.

The direction "Def. \ref{defnonnormal} $\Rightarrow$ Def. \ref{defapprox}" can be shown the same way as in the original proof, but the states can immediately be chosen normal thanks to Proposition \ref{propdefequivalence}. 

For the direction "Def. \ref{defapprox} $\Rightarrow$ Def. \ref{defnonnormal}" one can use the fact that the set Schrödinger picture trace non-increasing measure-and-prepare channels are compact in the point-ultraweak-topology generated by the seminorms $\Lambda \mapsto |\tr{\Lambda(T)K}|$ with $T \in \th$ and $K \in \hK(\hi)$. This compactness can be seen with a standard argument using the Banach-Alaoglu theorem, Tychonoff's theorem, and the fact that the set of trace non-increasing measure-and-prepare channels are closed in this topology \cite[Supplement]{haapasalo2021operational}. Applying this to the net of channels $(\Lambda_{(\hF,n)})$ in Eq. \eqref{eqchannelnet} for all $\rho \in \hS$, we find a limit point $\tilde{\Lambda}$, for which $\tr{\tilde{\Lambda}(\rho)K}=\tr{\rho K}$ holds for all compact effects $K$ and for all $\rho \in \hS$. Thus $\tilde{\Lambda}(\rho)=\rho$ for all $\rho \in \hS$. One thus only needs to possibly modify $\tilde{\Lambda}$ to be trace-preserving. This can be done as follows. Let $\rho_0$ be a fixed point of $\tilde{\Lambda}$ with $P:=\mathrm{supp}(\rho_0)=\bigvee_{\rho \in \hS}\mathrm{supp}(\rho)$. This can be done by e.g. defining $\rho_0=\sum_n t_n \rho_n$ for some countable dense set $\{\rho_n\}$ of $\hS$ and $t_n>0$ s.t. $\sum_n t_n=1$. Then it's easy to show that $P\tilde{\Lambda}^*(P)P=P$ and we can thus define the EB-channel
\begin{align}
    \Lambda:=P\tilde{\Lambda}(P(\cdot)P)P+\tr{P^\perp(\cdot)}\sigma,
\end{align}
for some state $\sigma$. The EB-channel $\Lambda$ then has the correct fixed points.

\end{document}